\documentclass[journal]{IEEEtran}
\IEEEoverridecommandlockouts
\usepackage{amsmath,amsfonts}
\usepackage{array}
\usepackage{chapterbib} 
\usepackage{textcomp}
\usepackage{stfloats}
\usepackage{url}
\usepackage{verbatim}
\usepackage{graphicx}
\usepackage{cite}
\usepackage[noend]{algpseudocode}
\usepackage[linesnumbered,ruled]{algorithm2e}
\usepackage{upgreek}
\usepackage{subfigure}
\usepackage{color}
\def\BibTeX{{\rm B\kern-.05em{\sc i\kern-.025em b}\kern-.08em
    T\kern-.1667em\lower.7ex\hbox{E}\kern-.125emX}}
    
\usepackage{amsthm}
\usepackage{amssymb}
\usepackage{makecell}
\usepackage{enumitem}
\newtheorem{theorem}{Theorem}
\newtheorem{definition}{Definition}
\theoremstyle{remark}
\theoremstyle{proposition}
\newtheorem{remark}{Remark}
\newtheorem{proposition}{Proposition}
\theoremstyle{Assumption}
\newtheorem{assumption}{Assumption}
\theoremstyle{Lemma}

\newtheorem{corollary}{\textbf{Corollary}}

\ifodd 0
\definecolor{darkgreen}{RGB}{0,200,0}
\newcommand{\rev}[1]{{\color{blue}#1}} 
\else
\newcommand{\rev}[1]{#1}
\fi

\newcommand{\myref}[1]{%
    \ifthenelse{\equal{#1}{proof_proposition_submodular}}{A}{%
    \ifthenelse{\equal{#1}{proof_of_NPhard}}{B}{%
    \ifthenelse{\equal{#1}{proof_proposition_eq_u_m}}{C}{%
    \ifthenelse{\equal{#1}{proof_Successive_greedy_optimal}}{D}{%
    \ifthenelse{\equal{#1}{proof_DP_epsilon}}{E}{%
    \ifthenelse{\equal{#1}{proof_theorem_spec_time}}{F}{%
    \ifthenelse{\equal{#1}{proof_corollary_1}}{G}{%
    \ifthenelse{\equal{#1}{proof_theorem_spec_final}}{H}{
    {\textbf{??}}}}}}}}}}%
}

\begin{document}

\graphicspath{{figures/}}

\title{TrimCaching: Parameter-sharing Edge Caching for AI Model Downloading\\
\thanks{
The work was supported in part by the Research Grants Council of Hong Kong under Grant 27213824 and CRS HKU702/2. The work of Fangming Liu was supported in part by the Major Key Project of PCL under Grant PCL2025A10 and PCL2024A06, and in part by the Shenzhen Science and Technology Program under Grant RCJC20231211085918010. The work of Kaibin Huang was supported in part by the Research Grants Council of the Hong Kong Special Administrative Region, China under a fellowship award (HKU RFS2122-7S04), NSFC/RGC CRS (CRS\_HKU702/24), the Areas of Excellence scheme grant (AoE/E-601/22-R), Collaborative Research Fund (C1009-22G), and the Grants 17212423 \& 17304925, and in part by the Croucher Senior Fellowship and Shenzhen-Hong Kong-Macau Technology Research Programme (Type C) (SGDX20230821091559018).  

Guanqiao Qu, Zheng Lin, Qian Chen, Xianhao Chen, and Kaibin Huang are with the Department of Electrical and Computer Engineering, The University of Hong Kong, Pok Fu Lam, Hong Kong SAR, China. (e-mail: gqqu@eee.hku.hk; linzheng@eee.hku.hk; qchen@eee.hku.hk; xchen@eee.hku.hk; huangkb@eee.hku.hk). Jian Li is with the School of Cyber Science and Technology, University of Science and Technology of China, Hefei, Anhui, China. (e-mail: lijian9@ustc.edu.cn). Fangming Liu is with Peng Cheng Laboratory and Huazhong University of Science and Technology, China. (e-mail: fangminghk@gmail.com). \textit{(Corresponding author: Xianhao Chen.)}

A preliminary version of this work \cite{QuICDCS} has been accepted by the 44th IEEE International Conference on Distributed Computing Systems (ICDCS 2024).
}
}
\author{Guanqiao Qu,~\IEEEmembership{Graduate Student Member,~IEEE}, Zheng Lin,~\IEEEmembership{Graduate Student Member,~IEEE}, \\
Qian Chen,~\IEEEmembership{Member,~IEEE}, Jian Li,~\IEEEmembership{Senior Member,~IEEE}, Fangming Liu,~\IEEEmembership{Senior Member,~IEEE}, \\Xianhao Chen,~\IEEEmembership{Member,~IEEE}, Kaibin Huang,~\IEEEmembership{Fellow,~IEEE}
}


\maketitle

\begin{abstract}
Next-generation mobile networks are expected to facilitate fast AI model downloading to end users. By caching models on edge servers, mobile networks can deliver models to end users with low latency, resulting in a paradigm of \textit{edge model caching}. In this paper, we develop a novel model placement framework, called parame\underline{t}e\underline{r}-shar\underline{i}ng \underline{m}odel caching (TrimCaching). TrimCaching exploits the key observation that a wide range of AI models, such as convolutional neural networks or large language models, can share a significant proportion of parameter blocks containing reusable knowledge, thereby improving storage efficiency. To this end, we formulate a parameter-sharing model placement problem to maximize the cache hit ratio in multi-edge wireless networks by balancing the fundamental tradeoff between storage efficiency and service latency. We show that the formulated problem is a submodular maximization problem with submodular constraints, for which no polynomial-time approximation algorithm exists. To tackle this challenge, we study an important special case, where a small fixed number of parameter blocks are shared across models, which often holds in practice. In such a case, a polynomial-time algorithm with a $\left(1-\epsilon\right)/2$-approximation guarantee is developed. Subsequently, we address the original problem for the general case by developing a greedy algorithm. Simulation results demonstrate that the proposed TrimCaching framework significantly improves the cache hit ratio compared with state-of-the-art content caching without exploiting shared parameters in AI models.
\end{abstract}
\begin{IEEEkeywords}
Edge AI model caching, edge computing, edge intelligence, 6G, model downloading.
\end{IEEEkeywords}

\section{Introduction}
In the era of Artificial Intelligence of Things (AIoT), there is a growing trend of pushing Artificial Intelligence (AI) services from the cloud to local devices, enabling diverse AI-empowered applications~\cite{huang2025d,10.1145/3691620.3695055,10779389}. 
Aligned with this trend, on-device AI inference becomes prevalent due to increasingly powerful mobile AI chips, substantial privacy benefits, and low response time. For example, large language model (LLM)-enabled mobile health requires access to personal health information to generate customized advice~\cite{9296274,nguyen2022federated}; home robots must perceive private indoor environments to make informed decisions~\cite{li2024privacy}. In such scenarios, on-device inference is often preferred, as uploading sensitive personal data to cloud or edge servers may be prohibited due to privacy concerns and data protection regulations~\cite{chen2024fedmeld,10.1145/3548606.3560553}.

To effectively support on-device inference, AI model downloading will become a key component of next-generation mobile networks, as recognized by 3GPP~\cite{3gpp.22.874}. 
On-device inference demands \textit{real-time}, \textit{frequent} 6G model downloading services for several reasons. First, due to the sheer number of AI applications and the ever-growing size of AI models (e.g., Google’s on-device LLM Gemini Nano-2 with 3.25 billion parameters), it is impractical for users to prestore all needed models on mobile devices. Instead, they can delete infrequently-used models from local storage and only fetch them on demand from 6G mobile networks. Second, real-time model downloading also enables users to access context-aware AI services when entering a new region~\cite{3gpp.22.874}, such as downloading region-specific autonomous driving models or augmented reality models optimized for local conditions~\cite{9162277,3gpp.22.874,ma2025sense4fl,chen2024space}. At last, users should download the latest model versions periodically from cloud/edge learning (e.g., federated learning~\cite{chen2020joint,zhu2024byzantine}), as models continuously adapt to new data. Consequently, supporting real-time model downloading is essential to deliver storage-efficient, customized, and evolving AI for mobile users anywhere and anytime.

Conventional model downloading features the delivery of AI models from the remote cloud to local devices. The excessive downloading latency not only results in poor user experience but also renders it inapplicable for latency-sensitive services~\cite{wang2024ultra,fang2025prioritized}. For example, according to 3GPP, autonomous vehicles and robotic applications require AI model downloading to be completed within one second~\cite{3gpp.22.874}. Given the large sizes of models and the limited bandwidth of remote cloud centers, such fast model downloading may only be achieved by placing AI models in edge networks closer to users, giving rise to the paradigm of \textit{edge model caching}. However, unlike cloud centers, edge servers have limited storage capacity and can cache only a subset of popular models. To enhance model downloading performance, one fundamental research question, therefore, is model placement, e.g., how can we optimally place AI models on edge servers to enhance model downloading performance?


In this paper, we address a novel AI model placement problem by exploiting parameter sharing to improve model storage efficiency at the network edge. A broad range of AI models, such as convolutional neural networks (CNNs) or LLMs, can share a significant proportion of parameters, as they can be derived from the same pre-trained models~\cite{padmanabhan2022gemel,lin2023pushing}. For example, bottom-layer freezing is a classic technique in transfer learning and multi-task learning, since the bottom layers in deep neural networks, such as convolution layers, usually share common knowledge reusable for different downstream tasks~\cite{tenser2023,zhuang2020comprehensive,Guo_2019_CVPR,fang2024pacp}. 
In the era of LLMs, parameter-efficient fine-tuning (PEFT), such as LoRA~\cite{hu2022lora}, has emerged as an effective approach to adapt foundation models for downstream applications by freezing a large proportion of parameters, thereby saving computing and memory resources. For example, LoRA freezes all parameters in the pre-trained LLMs and introduces only a small number of trainable parameters for fine-tuning, typically less than 1\% of the total, implying over 99\% of the parameters remain unchanged during fine-tuning. 
As shown in Fig. \ref{fig_intro}, the inference accuracy of a fine-tuned ResNet-50 degrades only slightly as the number of bottom layers frozen from the pre-trained model increases. Even when the first 90\% of trainable layers, up to layer 97, are frozen, the average accuracy degradation is only about 4.7\%, compared with the full-layer fine-tuning. As for LLMs, Fig. \ref{fig_intro_lora} illustrates that the BLEU scores of various fine-tuned GPT-2 medium models exhibit slight variation across different frozen parameter ratios. The fine-tuned models maintain satisfactory performance, even when up to 99.97\% of the parameters are frozen from the pre-trained model. In a nutshell, downstream models can share a significant proportion of parameters from the same pre-trained model, inherently making storage more efficient.\par

\begin{figure}[!t]
	\centering
\includegraphics[width=0.3\textwidth]{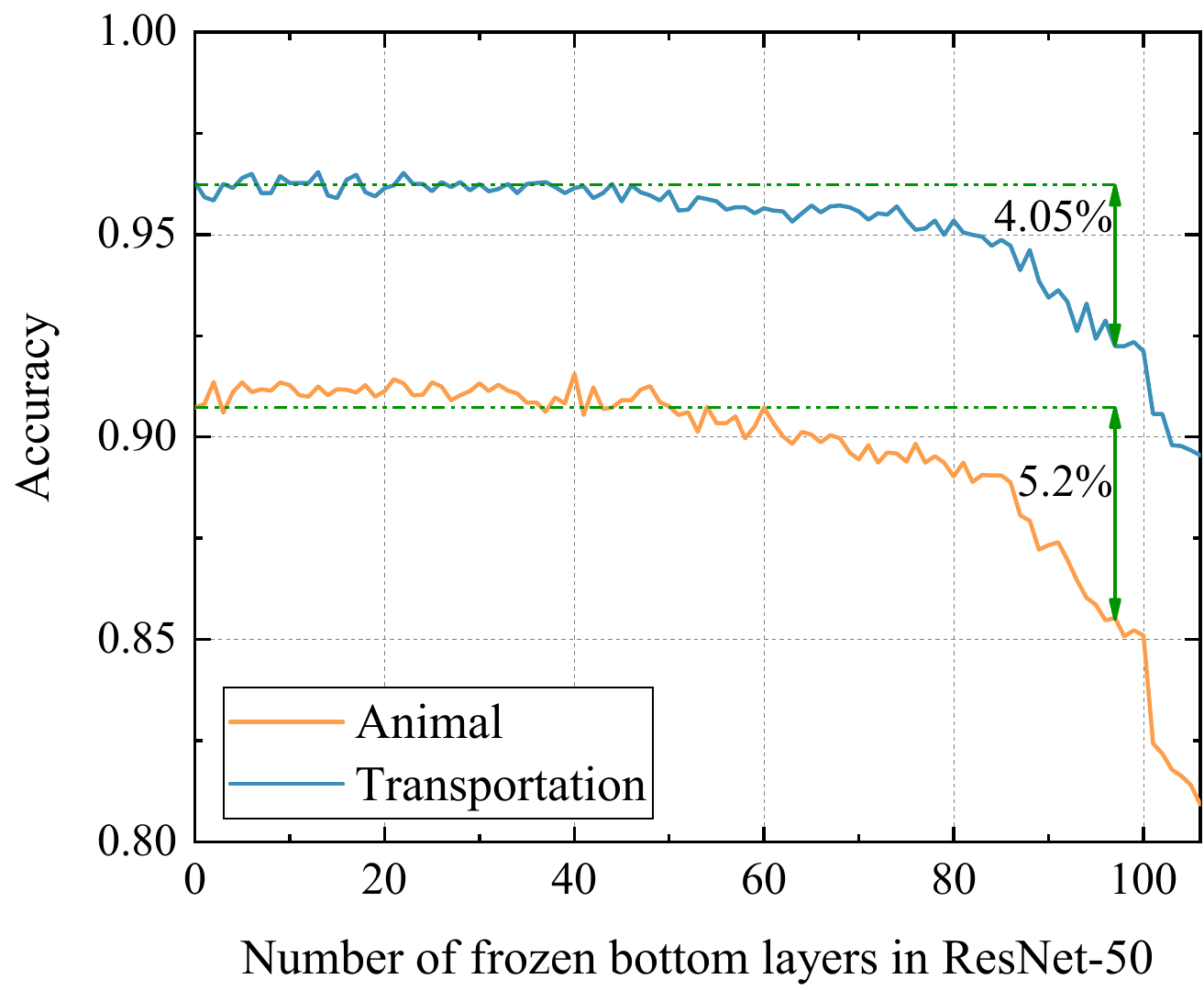}\label{fig_intro_50}
	\caption{Inference accuracy vs. the number of frozen bottom layers in fine-tuned ResNet-50. We fine-tune a ResNet-50~\cite{he2016deep}, pre-trained on CIFAR100~\cite{krizhevsky2009learning}, for two downstream tasks: ``transportation'' classification and ``animal'' classification. Specifically, the classes ``airplane", ``automobile", ``ship", and ``truck" in CIFAR10 \cite{krizhevsky2009learning} are grouped into the superclass ``transportation'', while the classes ``bird", ``cat", ``deer", ``dog", ``frog", and ``horse" are grouped into the superclass ``animal''.}\label{fig_intro}
\end{figure}
\begin{figure}[!t]
	\centering
\includegraphics[width=0.3\textwidth]{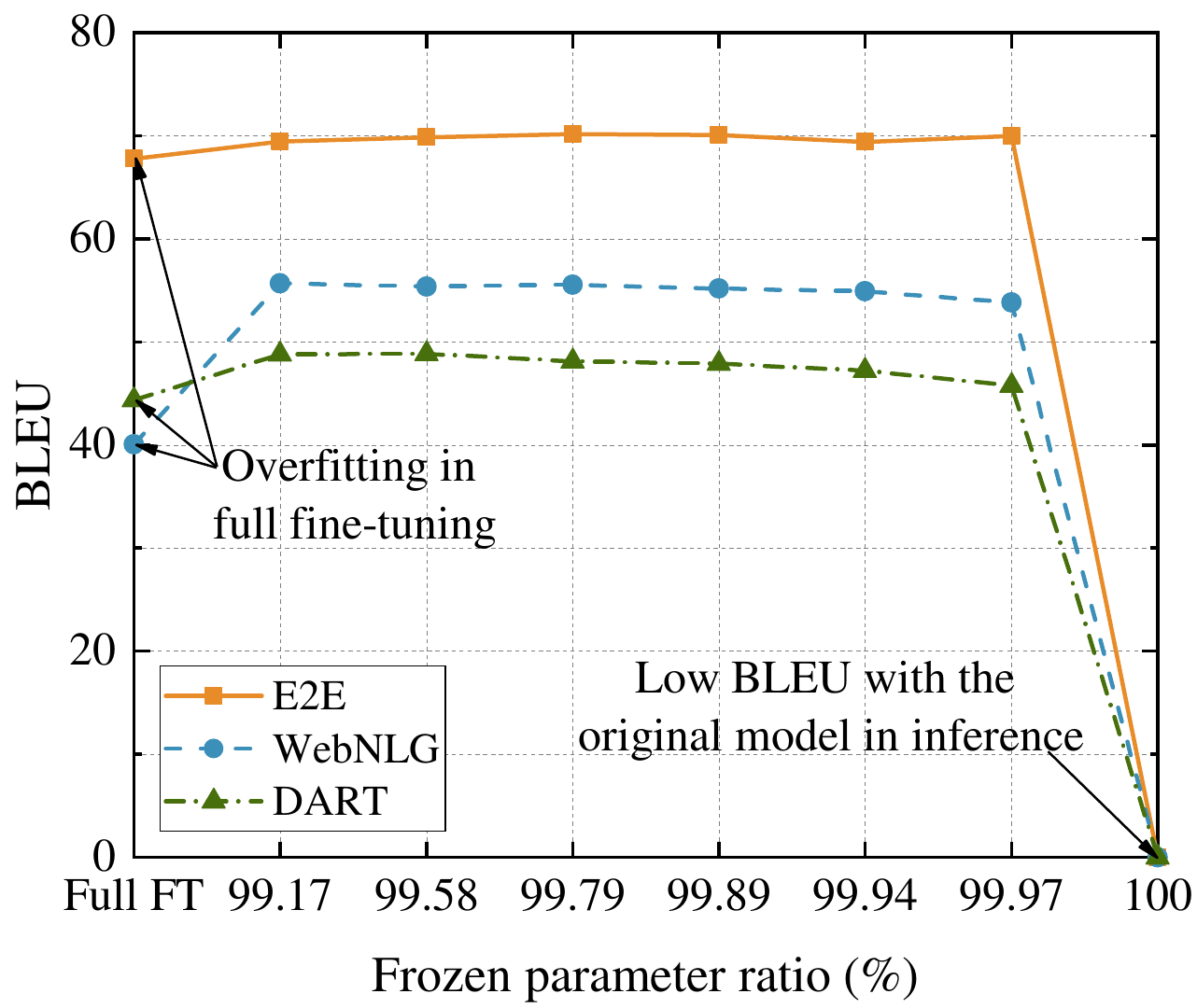}
\caption{BLEU vs. the frozen parameter ratio of fine-tuned GPT-2 medium models. We employ LoRA to fine-tune a pre-trained GPT-2 medium model on various datasets, including E2E~\cite{DBLP:journals/corr/NovikovaDR17}, WebNLG~\cite{gardent2017creating}, and DART~\cite{nan2021dart}. 
BLEU, a widely used metric in evaluating the quality of the text generated by machine translation, is adopted here, with higher scores indicating that machine translation results are closer to those of professional human translations.}\label{fig_intro_lora}
%
\end{figure}

By taking advantage of the aforementioned salient property, we will design a parame\underline{t}e\underline{r}-shar\underline{i}ng \underline{m}odel caching  (TrimCaching) framework for edge model caching. Specifically, given a set of wireless edge servers, we address the problem of placing models with shared parameters on edge servers to maximize the cache hit ratio for AI model downloading~\cite{6600983,7439797,7797148}. 
While this problem clearly falls into classical content placement problems, the shared parameters distinguish our scheme from traditional placement schemes for general content. In particular, when placing models with more shared parameters on an edge server, the storage becomes more efficient. However, this notable advantage also introduces a \textit{fundamental challenge}: the resulting submodular constraints make existing content placement schemes inapplicable, which lack theoretical guarantees in our scenario. The main contributions of this paper are summarized as follows.

\begin{enumerate}
	\item We define the parameter-sharing model caching problem. In multi-edge scenarios, we formulate the model placement problem to maximize the cache hit ratio (e.g., the ratio of downloading requests that can be served within delay requirements) under the storage capacity constraints of edge servers. 
    \item We show that the resulting problem is a submodular maximization problem with submodular constraints. After problem mapping, we conclude that there is no polynomial-time algorithm to solve it with a constant approximation guarantee due to the submodular constraints resulting from shared parameter blocks.
	\item We investigate a special case of the proposed problem, where the number of shared parameter blocks is independent of the problem scale (e.g., the scale of the AI model library). We argue this special case often holds in reality. We develop a successive greedy algorithm and a rounding dynamic programming (DP)-based algorithm to obtain a $\left(1-\epsilon\right)/2$-approximate solution with polynomial time complexity.
	\item We design a greedy-based algorithm for the original problem for the general case. Although a constant approximation guarantee cannot be achieved as alluded to earlier, we show that the algorithm is efficient and effective through performance evaluation.
\end{enumerate}

The rest of this paper is organized as follows. Section II reviews the related work. Section III elaborates on the system model and the TrimCaching framework. The problem formulation is presented in Section IV. The model placement algorithm is developed for the special case in Section V and for the general case in Section VI. Section VII presents the simulation results, and Section VIII concludes the paper. 

\section{Related Work}
Content caching in wireless edge networks has attracted significant attention since it brings content closer to end users, leading to the field of edge caching~\cite{8291028,7565183}. 
Specifically, popular files can be pre-cached on wireless edge servers, such as small base stations \cite{7572146}, thereby allowing users to download files with reduced latency. 
To this end, one fundamental research problem in edge caching is the content placement problem: how to place content on edge servers to enhance the quality of experience (QoE) of end users~\cite{9126265,8422142}.
Due to the overlapping coverage of wireless edge servers, users can download content from any of the nearby edge servers that cache the requested content. Moreover, since content items are typically assumed to be independent in conventional edge caching, the storage constraints are naturally knapsack constraints. These characteristics result in a class of QoE maximization problems that are typically identified as submodular maximization problems with knapsack constraints~\cite{6600983,7439797,7797148}. 
Simple greedy methods are usually effective in solving such problems, providing approximation guarantees in polynomial time~\cite{6600983}. However, in our work, due to the parameter sharing across AI models, the shared parameters need to be cached only once on a server, introducing submodularity into the storage constraints. For this reason, when adapting the aforementioned content placement strategies, e.g., greedy algorithms~\cite{6600983}, to our case, no theoretical constant approximation guarantee can be achieved.
\par


Caching AI models in distributed wireless edge networks can facilitate model delivery for inference and training~\cite{qu2025partialloading,xu2020survey,xu2024cached,wei2025pipelining,lin2024split}. 
By enabling end users to download required models from edge servers, this approach significantly reduces service latency~\cite{9522156, 10183793,qu2024mobile}. Analogous to conventional edge content caching, existing studies on edge model caching primarily focus on optimizing model placement to enhance user QoE under the storage constraints of edge servers. However, these works do not consider parameter sharing among AI models when making model placement decisions. Although Wu et al. exploit shared parameters in AI models to develop a multi-user model downloading method~\cite{wu2024efficient}, this work focuses on designing a multicasting scheme rather than model placement. To the best of our knowledge, this work is the first to define the parameter-sharing model placement problem, identify its mathematical properties, and develop corresponding solution approaches.

\section{Parameter-sharing Model Caching Framework}
To enhance model caching efficiency, this section proposes the TrimCaching framework, including the edge network scenario, parameter-sharing model library, storage constraints, and the formulation of end-to-end (E2E) latency.
\subsection{Network Scenario}
We consider a multi-edge multi-user scenario in wireless edge networks, as illustrated in Fig. \ref{fig_system_model}. Let $\mathcal{M}=\left\{1,\dots,m,\dots,M\right\}$ denote the set of interconnected wireless edge servers (e.g., base stations), with a set of users ${\mathcal{K}} = \left\{1,\dots,k,\dots,K\right\}$ distributed across their coverage areas. To support fast AI model downloading services for inference, these edge servers collectively cache a set of AI models ${\mathcal{I}} = \left\{1,\dots,i,\dots,I\right\}$. User $k$ requests to download model $i$ for performing local inference with probability $p_{k,i}$\footnote{In practice, $p_{k,i}$ can be obtained by monitoring and analyzing user historical traffic~\cite{8417940}.
}. We use $\bar{T}_{k,i}$ to denote the E2E latency requirement of user $k$ for consuming model $i$, which consists of model downloading and local inference latency. The frequently used notations are summarized in Table \ref{table_notation}.

\begin{table}[!t]
	\centering
	\caption{Frequently Used Notations}
	\label{table_notation}
	\begin{tabular}{|m{2cm}<{\centering}|m{6cm}<{\centering}|}
		\hline
		\textbf{Symbol} & \textbf{Description}\\ \hline
        $u\left(m,i\right)$, $\dot{u}\left(m,i\right)$ & Number of cache hits and rounded number of cache hits of placing model $i$ on edge server $m$. \\ \hline
		$U\left(\cdot\right)$ & Cache hit ratio of all edge servers. \\ \hline
        $\hat{U}_m\left(\cdot\right)$  & Cache hit ratio of edge server $m$ in the special case. \\ \hline
		${\bf{X}}$, ${\bf{X}}_{m}$ & Model placement decision for all edge servers and for edge server $m$ in the general case. \\ \hline
        $\hat{\bf{X}}_m$, $\hat{\bf{X}}=\bigcup\limits_{m\in\mathcal{M}}\hat{\bf{X}}_m$ & Model placement decision for edge server $m$ and for all edge servers in the special case. \\ \hline
        $\bf{X}^*$, $\hat{\bf{X}}^*_m$ & The optimal solution to $\mathcal{P}1.1$ and $\mathcal{P}2.1_m$, respectively. \\ \hline
		$\mathcal{M}$, $\mathcal{M}_k$ & Set of all edge servers and set of edge servers covering user $k$. \\ \hline
        $\mathcal{K}$ & Set of users. \\ \hline
        $\mathcal{I}$, $\mathcal{I}_j$ & Set of AI models and set of models containing parameter block $j$. \\ \hline
        $\hat{\mathcal{I}}_m$& Set of AI models cached on edge server $m$ in the special case. \\ \hline
        $\mathcal{J}$, $\mathcal{J}_{i}$, $\mathcal{J}^{\text{sh}}$ & Set of parameter blocks, set of parameter blocks of model $i$, and set of shared parameter blocks. \\ \hline
        ${p}_{k,i}$, $\bar{T}_{k,i}$ & Request probability and E2E latency requirement of user $k$ for model $i$. \\ \hline
        $T_{m,k,i}$ & Latency for user $k$ to download model $i$ from edge server $m$ and perform on-device inference. \\ \hline
        $Q_{m}$, $g_m\left(\cdot\right)$ & Storage capacity and storage usage of edge server $m$.\\ \hline
        $D_i$, $D'_j$ & Size of model $i$ and size of parameter block $j$.\\ \hline
        $\bar{C}_{m,k}$, $C_{m,m'}$ & Expected downloading data rate between edge server $m$ and user $k$, and transmission data rate between edge server $m$ and $m'$. \\ \hline
	\end{tabular}
\end{table} 

\begin{figure}[!t]
	\centerline{\includegraphics[width=0.48\textwidth]{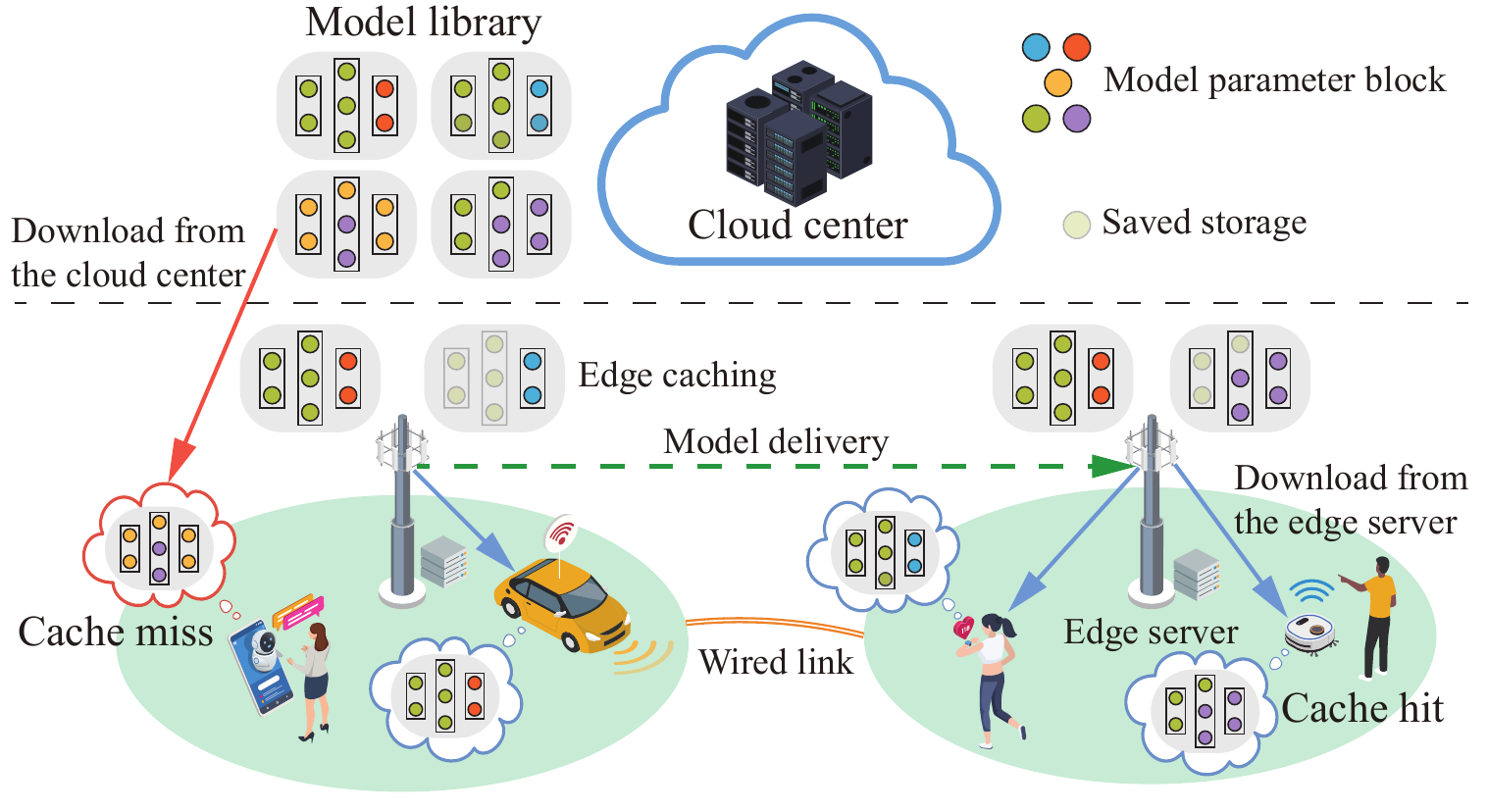}}
	\caption{The TrimCaching framework in a multi-edge scenario.}
	\label{fig_system_model}
\end{figure}

\subsection{Parameter-sharing Model Library}
We consider parameter sharing among models in $\mathcal{I}$. Given that AI models can contain billions of parameters, a \textit{parameter block} refers to a set of parameters, which reduces problem complexity. A parameter block can refer to a layer in a CNN, a block in a transformer, a set of low-dimensional trainable parameters in PEFT (e.g., the LoRA technique), and even an entire backbone network, and so on, depending on how parameters are grouped. Let $\mathcal{J}$ denote the set of parameter blocks of models in $\mathcal{I}$. Furthermore, let $\mathcal{J}_i$ denote the set of parameter blocks of model $i$, and let $\mathcal{I}_j$ denote the set of models in $\mathcal{I}$ that contain parameter block $j$. It is worth noting that a parameter block can be exclusive to one model or shared by multiple models. We formally define the concept of a shared parameter block.
\begin{definition}
    \textbf{Shared parameter block}: A parameter block $j$ is a shared parameter block if it is contained in more than one AI model in model set $\mathcal{I}$, i.e., $\left|\mathcal{I}_j\right|\ge 2$.
\end{definition}
Furthermore, we denote $\mathcal{J}^{\text{sh}}=\left\{j\ \middle|\ \left|\mathcal{I}_j\right|\ge 2 \right\}$ by the set of the shared parameter blocks across the models in $\mathcal{I}$. If a parameter block corresponds to a layer, a shared parameter block is also referred to as a \textit{shared layer}. Conversely, a parameter block is called a \textit{specific parameter block} if $\left|\mathcal{I}_j\right|=1$. 

\subsection{Storage Constraints}
Let a binary variable $x_{m,i}$ indicate the placement status of model $i$ on edge server $m$, where $x_{m,i} =1$ indicates that model $i$ is cached on edge server $m$. Considering the storage capacity of edge servers, $x_{m,i}$ satisfies that
\begin{equation}\label{const_1}
    g_{m}\left({\bf{X}}_{m}\right)=\sum\limits_{j\in\mathcal{J}} D'_j\left[1-\prod\limits_{i\in\mathcal{I}_j}\left(1-x_{m,i}\right) \right]\le Q_m, \ \forall m\in\mathcal{M},
\end{equation}
where ${\bf{X}}_{m}=\left\{x_{m,i}\ \middle| \ i\in\mathcal{I}\right\}$ denotes the model placement decision for edge server $m$, $D'_j$ is the size of parameter block $j$, and $Q_m$ is the storage capacity of edge server $m$. Here, $1-\prod\limits_{i\in\mathcal{I}_j}\left(1-x_{m,i}\right) = 1$ implies that parameter block $j$ is stored only once on edge server $m$ if it is contained by more than one model cached on the server, thereby enhancing storage efficiency.
\subsection{E2E latency}
Considering user $k$ requests model $i$, 
two cases of model downloading exist.   \par 
\begin{itemize}
	\item \textbf{Downloading from associated edge servers}: User $k$ first sends a model request to its associated edge servers in $\mathcal{M}_k$, where $\mathcal{M}_k$ represents the set of edge servers covering user $k$. If there exists an edge server $m\in \mathcal{M}_k$ that caches model $i$, and the latency of downloading the model from server $m$ and performing on-device inference does not exceed the latency requirement $\bar{T}_{k,i}$, then a cache hit occurs.
 
	\item\textbf{Downloading from non-associated edge servers}: If none of the edge servers in $\mathcal{M}_k$ caches model $i$, the model will be fetched from another edge server $m$ in $\mathcal{M} \setminus \mathcal{M}_k$ where the model is cached. Specifically, the model is delivered from edge server $m$ to an edge server $m'$ in $\mathcal{M}_k$ and then to user $k$. A cache hit occurs if the E2E latency (including edge-to-edge, edge-to-user, and on-device inference latency) meets the latency requirement $\bar{T}_{k,i}$. 
\end{itemize}\par 

The expected downloading data rate between edge server $m$ and its associated user $k$ is given by 
\begin{gather}\label{eq_communication}
\bar{C}_{m,k} = \bar{B}_{m,k}{\rm{log}}_2\left(1+\frac{\bar{P}_{m,k} \gamma_0  d_{m,k}^{-\alpha_0}}{n_0\bar{B}_{m,k}}\right), 
\end{gather}
where $\gamma_0$ is the antenna-related factor, $\alpha_0$ is the path loss factor, $d_{m,k}$ is the distance between user $k$ and edge server $m$, $\bar{P}_{m,k}$ and $\bar{B}_{m,k}$ are the expected transmit power and spectrum bandwidth of edge server $m$ allocated to user $k$, respectively, and $n_0$ is the power spectral density of the additive white Gaussian noise. Besides, we assume the transmission data rate between edge server $m$ and $m'$ is a constant and is denoted as $C_{m,m'}$. 

We denote $T_{m,k,i}$ as the E2E latency of user $k$, including model downloading latency and on-device inference latency, when downloading model $i$ from edge server $m$, which is given by
\begin{equation}
T_{m,k,i}=
\begin{cases}
    \frac{D_i}{\bar{C}_{m,k}} + t_{k,i}, \text{ if } m\in\mathcal{M}_{k},\\
    \mathop{\min}\limits_{m'\in\mathcal{M}_k}\left(\frac{D_i}{C_{m,m'}}+\frac{D_i}{\bar{C}_{m',k}}\right) +t_{k,i},\text{ if } m\notin\mathcal{M}_{k},\\
\end{cases}
\end{equation}
where $D_i$ is the size of model $i$ and $t_{k,i}$ is the inference latency of user $k$ using model $i$. 


\subsection{Design Objective}
The TrimCaching framework aims to judiciously place AI models on edge servers to serve as many user requests as possible. For cache misses on edge servers, user requests can be forwarded to the cloud center for fetching the model. However, since downloading models from the cloud can be much slower, our policy design aims to maximize the cache hit ratio, which is a common design objective in the literature~\cite{8374917,dan1997multimedia}. Similar to prior works~\cite{8352848,zhang2015survey}, 
the cache hit ratio of model caches across edge servers in $\mathcal{M}$ is given by
\begin{equation}\label{eq_u_X}
    U\left({\bf{X}}\right)=\frac{\sum\limits_{k\in\mathcal{K}}\sum\limits_{i\in\mathcal{I}}{p}_{k,i}\left[1-\prod\limits_{m\in\mathcal{M}}\left(1-x_{m,i}{\mathbb{I}}_{1}\left(m,k,i\right)\right)\right]}{\sum\limits_{k\in\mathcal{K}}\sum\limits_{i\in\mathcal{I}}{p}_{k,i}},
\end{equation}
which captures the expected fraction of user model requests that can be served with the cached models on edge servers within the users' latency requirements. Here, ${\bf{X}}=\left\{x_{m,i} \ \middle| \ m\in\mathcal{M}, i\in\mathcal{I}\right\}$ represents the model caching decisions. Besides, ${\mathbb{I}}_{1}\left(m,k,i\right)$ is an indicator function, which is given by
\begin{equation}\label{eq_i1}
\mathbb{I}_{1}\left(m,k,i\right)={\mathbb{I}}_{\left\{T_{m,k,i}\le {\bar{T}}_{k,i}\right\}},
\end{equation} 
where ${\mathbb{I}}_{1}\left(m,k,i\right) = 1$ if and only if $T_{m,k,i}\le {\bar{T}}_{k,i}$. Consequently, $1-\prod\limits_{m\in\mathcal{M}}\left(1-x_{m,i}\mathbb{I}_{1}\left(m,k,i\right)\right) = 1$ indicates that at least one edge server with model $i$ in $\mathcal{M}$ can serve user $k$ within the latency constraint.

\section{Cache Hit Ratio Maximization Problem}
This section first formulates the cache hit ratio maximization problem under the TrimCaching framework. Then, we map the formulated problem to a known NP-hard problem and show that no polynomial-time algorithm can solve this problem with a constant approximation guarantee.
\subsection{Problem Formulation}
The TrimCaching framework aims to maximize the cache hit ratio for users' model requests by addressing the model placement problem under the storage capacity of edge servers and user latency requirements. 
The problem formulation is given as follows. 
\begin{subequations}
	\begin{equation}
		{\mathcal{P}1.1}:\ \mathop{\max}\limits_{{\bf{X}}}\ U\left({\bf{X}}\right)
	\end{equation}	
	\begin{equation}\label{general_problem_storage}
		{\rm{s.t.}}\ \eqref{const_1},
	\end{equation}	
	\begin{equation}\label{general_problem_binary}
		x_{m,i}\in\left\{0,1\right\},\ \forall m\in{\mathcal{M}},\forall i\in{\mathcal{I}},
	\end{equation}	
\end{subequations}
where \eqref{const_1} ensures that the total storage used at each edge server does not exceed its capacity.

Note that our formulation ignores user mobility because the problem is solved based on a ``snapshot'' of user locations. This is commonly adopted in model placement schemes~\cite{6600983}. In practice, our algorithm can make model placement decisions by solving the above problem, then re-initiate model placement when the performance degrades to a certain threshold. Our simulation results will show that our algorithm is resilient to user mobility over time, thus eliminating the need for frequent model replacement that would consume backbone bandwidth.

\subsection{Problem Mapping}
Solving $\mathcal{P}1.1$ is extremely challenging due to the product of integer decision variables arising from parameter block sharing. In this subsection, we show that the problem can be mapped to a known NP-hard problem and that no polynomial-time algorithm can achieve a constant approximation guarantee.

We begin by introducing the concept of submodularity. A set function $f: 2^{{\bf{W}}}\rightarrow\mathbb{R}$ is called submodular if $f\left({\bf{S}}\right)+f\left({\bf{T}}\right)\ge f\left({\bf{S}}\cup {\bf{T}}\right) +f\left({\bf{S}}\cap {\bf{T}}\right)$ holds for all subsets ${\bf{S}},{\bf{T}}\subseteq {\bf{W}}$, where ${\bf{W}}$ is a finite ground set~\cite{fujishige2005submodular}. An equivalent characterization is that $f$ is submodular if and only if $f\left({\bf{S}}\cup\left\{w\right\}\right)-f\left({\bf{S}}\right)\ge f\left({\bf{T}}\cup\left\{w\right\}\right)-f\left({\bf{T}}\right)$ for all ${\bf{S}}\subseteq {\bf{T}}\subseteq {\bf{W}}$ and $w\in {\bf{W}}\setminus {\bf{T}}$. Conversely, a function $f$ is called supermodular if the reversed inequalities hold for every pair of subsets~\cite{lovasz1983submodular}. 

The problem mapping for ${\mathcal{P}1.1}$ is presented in the following proposition.
\begin{proposition}\label{proposition_submodular}
${\mathcal{P}1.1}$ can be mapped to a submodular maximization problem with $M$ submodular constraints. 
\end{proposition}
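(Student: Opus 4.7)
The plan is to reformulate $\mathcal{P}1.1$ as a set-function optimization problem and then exhibit submodularity of both the objective and each of the $M$ storage-usage functions. I would begin by identifying the ground set $\Omega = \mathcal{M}\times\mathcal{I}$ of (server, model) pairs and putting the binary vector $\bf X$ in bijection with a subset $S\subseteq\Omega$ via the rule $(m,i)\in S \iff x_{m,i}=1$. Under this identification, $\mathcal{P}1.1$ becomes the problem of choosing $S\subseteq\Omega$ to maximize $U(S)$ subject to $g_m(S)\le Q_m$ for every $m\in\mathcal{M}$.

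Next, I would analyze the structure of each term in the numerator of $U$. For a fixed pair $(k,i)$, note that since $x_{m,i}$ and $\mathbb{I}_{1}(m,k,i)$ are both binary,
\[
1-\prod_{m\in\mathcal{M}}\bigl(1-x_{m,i}\mathbb{I}_{1}(m,k,i)\bigr) \;=\; \mathbb{I}\bigl[\,S\cap A_{k,i}\ne\emptyset\,\bigr],
\]
where $A_{k,i}=\{(m,i):\mathbb{I}_{1}(m,k,i)=1\}\subseteq\Omega$ is a fixed subset determined by the latency requirements. This intersection-indicator is a classical monotone coverage function, hence monotone submodular in $S$. Since $U(S)$ is a nonnegative weighted sum of such indicators with weights $p_{k,i}/\sum_{k,i}p_{k,i}\ge 0$, and nonnegative weighted sums preserve submodularity, $U$ is itself monotone submodular on $\Omega$.

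The same argument applies to each $g_m$. For fixed $m$ and parameter block $j$, the quantity $1-\prod_{i\in\mathcal{I}_j}(1-x_{m,i})$ is again the indicator that $S$ intersects the fixed subset $\{(m,i):i\in\mathcal{I}_j\}\subseteq\Omega$, and is therefore a monotone submodular coverage function. Weighting by $D'_j\ge 0$ and summing over $j\in\mathcal{J}$ preserves submodularity, so each $g_m(S)$ is submodular in $S$, and thus $g_m(S)\le Q_m$ is a submodular constraint. Taken across $m\in\mathcal{M}$, this produces $M$ such constraints, which combined with the submodular objective yields the advertised form.

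The main obstacle, if any, is bookkeeping rather than a conceptual hurdle: one must carefully verify the equivalence between the product representation in \eqref{const_1} and the intersection-indicator rewriting above, and confirm that restricting $g_m$ to depend only on ${\bf X}_m$ is consistent with submodularity on the larger ground set $\Omega$ (which it is, since $g_m$ trivially extends to a function that ignores coordinates with $m'\ne m$). Once these reductions are in place, the submodularity of coverage functions is standard and requires no further computation, so the proposition follows directly.
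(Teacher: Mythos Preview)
Your proposal is correct and, in fact, cleaner than the paper's own argument. The paper proves submodularity of $U$ and of each $g_m$ from first principles: it fixes two nested placement sets, picks an element outside the larger one, and verifies the diminishing-returns inequality through an explicit three-case analysis (whether the relevant request is already covered, covered by neither set, or covered only by the larger set). You instead recognize the structural reason this works---each summand $1-\prod(1-\cdot)$ is exactly the indicator that $S$ meets a fixed subset $A_{k,i}$ (or $\{(m,i):i\in\mathcal{I}_j\}$), i.e., a coverage function---and then invoke the standard facts that coverage functions are monotone submodular and that nonnegative weighted sums preserve submodularity. Your approach is more modular and makes the submodularity essentially immediate once the rewriting is observed; the paper's approach is more self-contained but requires redoing the case analysis separately for the objective and for the constraints. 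Your remark about $g_m$ extending trivially to the full ground set $\Omega$ is also a point the paper leaves implicit.
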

\begin{proof}
The proof is provided in Appendix \myref{proof_proposition_submodular} in our supplementary material.
\end{proof}

Next, we characterize the computational hardness of ${\mathcal{P}1.1}$.
\begin{proposition}\label{NPhard}
	${\mathcal{P}1.1}$ is an NP-hard problem. Moreover, no polynomial-time algorithm that solves ${\mathcal{P}1.1}$ with a constant approximation guarantee exists.
\end{proposition}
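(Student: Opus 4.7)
The plan is to establish the two assertions separately, building on the characterization from Proposition 1 that $\mathcal{P}1.1$ is a submodular maximization problem under $M$ submodular constraints.

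For the NP-hardness part, I would reduce from the classical 0-1 knapsack problem. Consider the restricted class of instances with a single edge server ($M=1$), no shared parameter blocks ($\mathcal{J}^{\mathrm{sh}}=\emptyset$), and latency thresholds $\bar{T}_{k,i}$ chosen large enough that $\mathbb{I}_1(1,k,i)=1$ for every user-model pair. Under these simplifications, the submodular storage constraint \eqref{const_1} collapses to the linear knapsack constraint $\sum_{i} D_i x_{1,i}\le Q_1$, and the cache hit ratio $U(\mathbf{X})$ reduces to the linear expression $\sum_i\bigl(\sum_k p_{k,i}\bigr) x_{1,i}$ up to a constant normalizing denominator. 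Maximizing this is exactly 0-1 knapsack with item values $\sum_k p_{k,i}$, item sizes $D_i$, and capacity $Q_1$. Since 0-1 knapsack is NP-hard, so is $\mathcal{P}1.1$.

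For the inapproximability claim, the decisive obstruction is the submodularity of the storage constraint, rather than the submodularity of the objective. I would invoke known hardness results for maximizing a monotone submodular function subject to a monotone submodular packing constraint, as established by Svitkina and Fleischer and refined by subsequent work. In this regime, no polynomial-time algorithm achieves a constant approximation factor, with lower bounds on the order of $\Omega(1/\sqrt{n})$. To transfer this hardness to $\mathcal{P}1.1$, I would construct a polynomial mapping in which each ground-set element of the source instance becomes a model in $\mathcal{I}$, and the sharing relations among models (i.e., the structure of $\mathcal{J}^{\mathrm{sh}}$ together with the block sizes $D'_j$) are chosen so that the single-server storage usage $g_m$ reproduces the target submodular constraint function, while the cache hit ratio reproduces the target submodular objective.

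The main obstacle will be ensuring that the sharing-induced submodular functions realized by \eqref{const_1} form a sufficiently expressive subclass for the hardness reduction to go through. Because $g_m(\mathbf{X}_m)=\sum_j D'_j\bigl[1-\prod_{i\in\mathcal{I}_j}(1-x_{m,i})\bigr]$ is always a weighted coverage function, I would restrict attention to instances of the hard source problem where the submodular constraint is already of weighted coverage type, so that inapproximability persists via reductions from Set Cover or Maximum Coverage within the coverage subclass. Since this restricted reduction remains polynomial, any polynomial-time constant-factor approximation for $\mathcal{P}1.1$ would contradict the established inapproximability, completing the argument.
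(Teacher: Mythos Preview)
Your NP-hardness argument via a direct reduction from 0--1 knapsack is correct and actually more elementary than the paper's route. The paper handles both assertions at once by changing variables from model-level decisions $x_{m,i}$ to parameter-block decisions $y_{m,j}$ through $x_{m,i}=\prod_{j\in\mathcal{J}_i} y_{m,j}$ and $y_{m,j}=1-\prod_{i\in\mathcal{I}_j}(1-x_{m,i})$. Under this bijection the storage constraint \eqref{const_1} becomes the \emph{linear} knapsack $\sum_j D'_j\, y_{m,j}\le Q_m$, while the objective becomes \emph{supermodular} in $\mathbf{Y}$ (a model contributes only once all of its blocks are present, so marginals are nondecreasing). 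The paper then invokes known results that supermodular maximization under a knapsack constraint is NP-hard and admits no constant-factor polynomial-time approximation even for $M=1$, and transfers this back to $\mathcal{P}1.1$ via the equivalence.

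Your inapproximability plan, however, has a genuine gap. The Svitkina--Fleischer bound you invoke is an information-theoretic lower bound in the value-oracle model: it rules out algorithms making polynomially many oracle queries, not polynomial-time algorithms on explicitly described inputs, and $\mathcal{P}1.1$ is given explicitly (block sizes, request probabilities, indicators). Your fallback of restricting to coverage-type constraints and reducing from Set Cover or Maximum Coverage cannot close the gap either: both problems already admit constant-factor (logarithmic and $1-1/e$) approximations, so an approximation-preserving reduction from them would at best show that beating those particular constants is hard, not that \emph{no} constant is achievable. The paper avoids this entirely by pushing the difficulty out of the constraint and into the objective: after the change of variables the constraint is linear and the obstruction is supermodularity of the objective, a setting for which the paper cites established inapproximability rather than oracle bounds.
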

\begin{proof}
The proof is provided in Appendix \myref{proof_of_NPhard} in our supplementary material.
\end{proof}
Although ${\mathcal{P}1.1}$ cannot be solved approximately in general, in the following sections, we will first introduce a special case of $\mathcal{P}1.1$, which captures a typical parameter sharing scenario in practice, for which a polynomial-time algorithm can be developed to obtain a solution with a constant approximation guarantee. Subsequently, for the general case, we will propose a greedy algorithm to solve $\mathcal{P}1.1$. Although this algorithm does not offer a constant approximation guarantee, the solution approach is still highly effective.

\section{Special Case: A Small Fixed Number of
Shared Parameter blocks}
This section first presents the assumption for the special case of $\mathcal{P}1.1$, supported by concrete real-world examples. 
Under this special case, we design an algorithm with polynomial-time complexity and a constant approximation guarantee for maximizing the cache hit ratio in the TrimCaching framework.
\subsection{Assumption}
Recall that $\mathcal{J}^{\text{sh}}$ denotes the set of shared parameter blocks. The following assumption formally describes the special case considered in this paper.
\begin{assumption}\label{assumption_1}
    There is a small fixed number of shared parameter blocks among models in $\mathcal{I}$; that is, there exists a constant $C$, satisfying $C\ll \left|\mathcal{I}\right|$ and $\left|\mathcal{J}^{\text{sh}}\right|\le C$.
\end{assumption}
Such special cases are often observed in practice, as numerous downstream AI models can be derived from a small number of pre-trained models. 
For example, transfer learning enables fine-tuning models pre-trained on large-scale datasets for a wide range of specific tasks~\cite{zhuang2020comprehensive,basha2021autotune}. In practice, this is supported by frameworks such as TensorFlow and PyTorch, which provide a few pre-trained CNNs (e.g., ResNet models pre-trained on ImageNet) that can be adapted to various downstream tasks by freezing some layers and updating only task-specific layers~\cite{tenser2023,PyTorch}. 
Similarly, in the LLM domain, PEFT methods adapt foundation models by updating only a tiny fraction of parameters, producing a large number of downstream models that share the same backbone models. For instance, Apple Intelligence freezes a base pre-trained model and fine-tunes many lightweight adapter layers for diverse tasks, such as summarization, proofreading, and mail replies~\cite{apple2024}. Moreover, Hugging Face hosts tens of thousands of LoRA modules, which are fine-tuned from a few pre-trained LLMs~\cite{ibm2024,hu2022lora}. Notably, even GPT models alone have hundreds of LoRA modules available. These examples demonstrate that, despite the vast number of models in the considered model library, there might only be a very small number of shared parameter blocks, where a block can be a layer or an entire pre-trained backbone (e.g., a pre-trained GPT model). An illustrative example of the considered special case is shown in Fig. \ref{fig_tree_model}, where all shared parameter blocks originate from 2 pre-trained models.

\par
\begin{figure}[!t]
	\centerline{\includegraphics[width=0.4\textwidth]{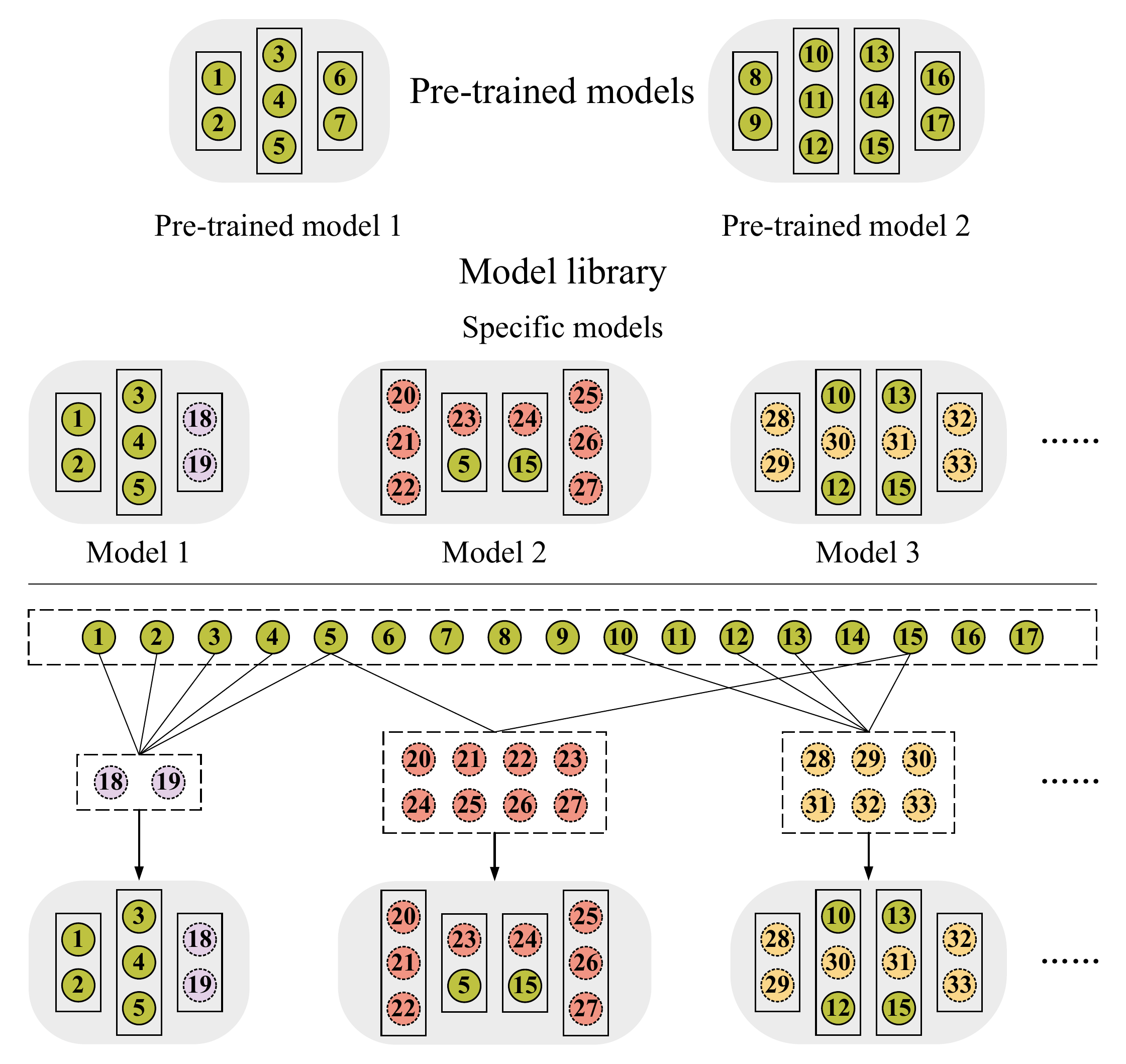}}
	\caption{An example of the special case with a small fixed number of shared parameter blocks. In the figure, regardless of the scale of the model library, the shared parameter blocks (green) come from two pre-trained models. Nodes in other colors represent specific parameter blocks.
 }
	\label{fig_tree_model}
\end{figure}

\par 
\subsection{TrimCaching Spec Algorithm}
Under Assumption \ref{assumption_1} in the special case, the number of shared parameter blocks is independent of the model library size, and thus of the problem scale, making it feasible to traverse all the shared parameter blocks. Building on this, our key idea is to develop an approximation algorithm by traversing the combinations of all shared parameter blocks while judiciously selecting specific parameter blocks, resulting in a polynomial-time algorithm and $\frac{1-\epsilon}{2}$ approximation ratio.

We propose a successive greedy-based algorithm to solve $\mathcal{P}1.1$ under the special case. The proposed TrimCaching Spec algorithm is summarized in Algorithm~\ref{algorithm_successive_greedy}. This algorithm operates by decomposing $\mathcal{P}1.1$ into $M$ sub-problems, each corresponding to edge server $m$, and solving sub-problems sequentially in ascending order of server indices. Specifically, in Line~\ref{line:successive_p2} of Algorithm~\ref{algorithm_successive_greedy}, the caching decision $\hat{{\bf{X}}}_{m}$ for edge server $m$ is determined by solving the $m$-th sub-problem, which is formulated as follows.
\begin{subequations}
	\begin{equation}
		{\mathcal{P}2.1}_m:\ \mathop{\max}\limits_{\hat{{\bf{X}}}_m}\ \hat{U}_m\left(\hat{{\bf{X}}}_m\right)
	\end{equation}	
	\begin{equation}
		{\rm{s.t.}} \ \sum\limits_{j\in\mathcal{J}} D'_j\left[1-\prod\limits_{i\in\mathcal{I}_j}\left(1-\hat{x}_{m,i}\right) \right]\le Q_m,\ \forall m\in\mathcal{M},
	\end{equation}	
	\begin{equation} 
		\hat{x}_{m,i}\in\left\{0,1\right\},\ \forall i\in\mathcal{I},
	\end{equation}	
\end{subequations}
where\footnote{For notation simplicity, $\hat{U}_m\left(\hat{{\bf{X}}}_m\right)$ is used as shorthand for $\hat{U}_m\left(\hat{{\bf{X}}}_m\middle | \bigcup\limits_{m'=1}^{m-1}\hat{{\bf{X}}}_{m'}\right)$.} 
\begin{equation}\label{eq_hat_u}
	\begin{aligned}
	\hat{U}_m\left(\hat{{\bf{X}}}_m\right)
	=\frac{\sum\limits_{k\in\mathcal{K}}\sum\limits_{i\in\mathcal{I}}{p}_{k,i}\hat{x}_{m,i}\mathbb{I}_{1}\left(m,k,i\right)\mathbb{I}_{2}\left(m,k,i\right)}{\sum\limits_{k\in\mathcal{K}}\sum\limits_{i\in\mathcal{I}}{p}_{k,i}}.
	\end{aligned}
\end{equation}
Here, $\mathbb{I}_{2}\left(m,k,i\right)$ is given by
\begin{equation}\label{eq_i2}
\mathbb{I}_{2}\left(m,k,i\right)=\prod\limits_{m'=1}^{m-1}\left(1-\hat{x}_{m',i} \mathbb{I}_{\left\{T_{m',k,i}\le\bar{T}_{k,i}\right\}}\right),
\end{equation}
where $\mathbb{I}_{2}\left(m,k,i\right)=1$ represents that the model request for model $i$ of user $k$ has not been satisfied by any of the first $m-1$ edge servers. For initialization, $\mathbb{I}_{2}\left(m,k,i\right)$ is set to 1 when $m=1$. 

\begin{algorithm}[!t]
	\caption{TrimCaching Spec Algorithm} 
	\label{algorithm_successive_greedy}
	\LinesNumbered
	\KwIn{$\mathcal{I}$, $\mathcal{K}$, and $\mathcal{M}$.}
	\KwOut{$\hat{{\bf{X}}}$.} 
	{\bf Initialize:} $\hat{{\bf{X}}}_m={\bf{0}}$ and $\mathbb{I}_{2}\left(m,k,i\right)=1$.\\
	\For{$m\in\mathcal{M}$}
	{
		Solve $\mathcal{P}2.1_m$ with Algorithm \ref{algorithm_DP} to obtain $\hat{{\bf{X}}}_m$ for edge server $m$.\label{line:successive_p2}\\ 
	}
        $\hat{{\bf{X}}}=\bigcup\limits_{m\in\mathcal{M}}\hat{{\bf{X}}}_m$.\\
\end{algorithm}

The details of solving $\mathcal{P}2.1_{m}$ will be presented in the next subsection. With $\hat{{\bf{X}}}_m$, the solution to $\mathcal{P}1.1$ produced by Algorithm \ref{algorithm_successive_greedy} is denoted by $\hat{{\bf{X}}}=\bigcup\limits_{m\in\mathcal{M}}\hat{{\bf{X}}}_m$, satisfying the following proposition. 
\begin{proposition}\label{proposition_eq_u_m}
    The cache hit ratio of $\hat{{\bf{X}}}$ is equal to the sum of the cache hit ratios of $\hat{{\bf{X}}}_{m}$ for edge server $m$, i.e.,
\begin{equation}\label{eq_u_m}
U\left(\hat{{\bf{X}}}\right) = U\left(\bigcup\limits_{m\in\mathcal{M}}\hat{{\bf{X}}}_m\right) = \sum\limits_{m\in\mathcal{M}}\hat{U}_m\left(\hat{{\bf{X}}}_m\right).
\end{equation}
\end{proposition}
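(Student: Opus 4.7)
The plan is to establish Proposition~\ref{proposition_eq_u_m} by a telescoping/disjointification identity that rewrites the ``at least one server hits'' probability as a sum of ``server $m$ is the first to hit'' probabilities. Specifically, for any sequence of binary numbers $a_1,\dots,a_M\in\{0,1\}$, the following identity holds:
\begin{equation*}
1-\prod_{m=1}^{M}(1-a_m)=\sum_{m=1}^{M}a_m\prod_{m'=1}^{m-1}(1-a_{m'}),
\end{equation*}
which is readily verified by induction on $M$ (or by observing that both sides equal $1$ iff some $a_m=1$, and the right-hand side has at most one nonzero term, namely the one indexed by the smallest $m$ with $a_m=1$).

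With this identity in hand, the proof becomes a direct substitution. First I would set $a_m:=\hat{x}_{m,i}\mathbb{I}_{1}(m,k,i)$ for each fixed pair $(k,i)$. Then the product $\prod_{m'=1}^{m-1}(1-a_{m'})$ coincides exactly with $\mathbb{I}_{2}(m,k,i)$ as defined in \eqref{eq_i2}, because $\mathbb{I}_{1}(m',k,i)=\mathbb{I}_{\{T_{m',k,i}\le\bar{T}_{k,i}\}}$. Consequently,
\begin{equation*}
1-\prod_{m\in\mathcal{M}}\bigl(1-\hat{x}_{m,i}\mathbb{I}_{1}(m,k,i)\bigr)=\sum_{m\in\mathcal{M}}\hat{x}_{m,i}\mathbb{I}_{1}(m,k,i)\mathbb{I}_{2}(m,k,i).
\end{equation*}

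Finally, I would plug this into the definition of $U(\hat{{\bf{X}}})$, interchange the order of summation to pull $\sum_{m\in\mathcal{M}}$ to the outside, and recognize the inner quantity as $\hat{U}_{m}(\hat{{\bf{X}}}_{m})$ from \eqref{eq_hat_u}. This yields $U(\hat{{\bf{X}}})=\sum_{m\in\mathcal{M}}\hat{U}_{m}(\hat{{\bf{X}}}_{m})$, as claimed.

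The proof contains no real obstacle: the only point requiring care is checking that the prefix product $\prod_{m'=1}^{m-1}(1-\hat{x}_{m',i}\mathbb{I}_{1}(m',k,i))$ literally matches $\mathbb{I}_{2}(m,k,i)$, and that the convention $\mathbb{I}_{2}(1,k,i)=1$ is consistent with the empty product for $m=1$. Since the model placement decisions across servers are binary, no server ``double counts'' a request when the telescoping is applied, which is exactly why the per-server cache hit ratios sum to the aggregate cache hit ratio.
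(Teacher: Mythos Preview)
Your proof is correct and follows essentially the same approach as the paper: both rest on the observation that $\mathbb{I}_{2}(m,k,i)$ credits a request to server $m$ only if no earlier server already served it, so the per-server contributions are disjoint and sum to the aggregate. Your version makes the underlying telescoping identity explicit, whereas the paper states the same disjointness argument verbally.
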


\begin{proof}
    The proof is shown in Appendix \myref{proof_proposition_eq_u_m} in our supplementary material.
\end{proof}

Furthermore, based on Proposition \ref{proposition_eq_u_m}, we have the following proposition.
\begin{proposition}\label{Successive_greedy_optimal}
	Assume that each sub-problem $\mathcal{P}2.1_m$ can be solved optimally in Algorithm \ref{algorithm_successive_greedy}. Under the considered special case, the TrimCaching Spec algorithm obtains a solution $\hat{{\bf{X}}}$ to $\mathcal{P}1.1$ which is lower bounded by $U\left(\hat{{\bf{X}}}\right)\ge\frac{1}{2}U\left({\bf{X}}^*\right)$, where ${\bf{X}}^*$ is the optimal solution to $\mathcal{P}1.1$.
\end{proposition}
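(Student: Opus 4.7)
The plan is to prove the $\frac{1}{2}$ bound by relating the algorithm's objective to a telescoping expansion of $U({\bf{X}}^*)$, using the submodularity and monotonicity of $U$ established in Proposition \ref{proposition_submodular}. Let ${\bf{X}}^*_m=\{x^*_{m,i}: i\in\mathcal{I}\}$ denote the server-$m$ projection of the optimum, and let $\hat{{\bf{X}}}_{1:m}=\bigcup_{m'\le m}\hat{{\bf{X}}}_{m'}$ with $\hat{{\bf{X}}}_{1:0}=\emptyset$. The first step is to observe that ${\bf{X}}^*_m$ is feasible for $\mathcal{P}2.1_m$, because the $\mathcal{P}2.1_m$ storage constraint involves only server $m$ and is inherited from the feasibility of ${\bf{X}}^*$ in $\mathcal{P}1.1$. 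A careful reading of \eqref{eq_hat_u}--\eqref{eq_i2} then yields the marginal-gain identification $\hat{U}_m({\bf{Y}})=U(\hat{{\bf{X}}}_{1:m-1}\cup{\bf{Y}})-U(\hat{{\bf{X}}}_{1:m-1})$ for any server-$m$ placement ${\bf{Y}}$, since $\mathbb{I}_{2}(m,k,i)$ is exactly the indicator that request $(k,i)$ has not yet been served by the placements on servers $1,\dots,m-1$.

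Next, for each $m$ I would chain three inequalities: optimality of $\hat{{\bf{X}}}_m$ for $\mathcal{P}2.1_m$ gives $\hat{U}_m(\hat{{\bf{X}}}_m)\ge\hat{U}_m({\bf{X}}^*_m)$; the marginal-gain identification rewrites the right side as $U(\hat{{\bf{X}}}_{1:m-1}\cup{\bf{X}}^*_m)-U(\hat{{\bf{X}}}_{1:m-1})$; and the diminishing-returns form of submodularity applied to $\hat{{\bf{X}}}_{1:m-1}\subseteq\hat{{\bf{X}}}$ yields
\begin{equation*}
U(\hat{{\bf{X}}}_{1:m-1}\cup{\bf{X}}^*_m)-U(\hat{{\bf{X}}}_{1:m-1})\ge U(\hat{{\bf{X}}}\cup{\bf{X}}^*_m)-U(\hat{{\bf{X}}}).
\end{equation*}
Summing over $m\in\mathcal{M}$ and invoking Proposition \ref{proposition_eq_u_m} on the left gives
\begin{equation*}
U(\hat{{\bf{X}}})\ \ge\ \sum_{m\in\mathcal{M}}\bigl[U(\hat{{\bf{X}}}\cup{\bf{X}}^*_m)-U(\hat{{\bf{X}}})\bigr].
\end{equation*}

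To finish, I would apply the standard decomposition of marginal gains for monotone submodular functions: writing ${\bf{X}}^*=\bigcup_m{\bf{X}}^*_m$ and telescoping $U(\hat{{\bf{X}}}\cup{\bf{X}}^*)-U(\hat{{\bf{X}}})$ along the prefixes $\hat{{\bf{X}}}\cup\bigcup_{m'\le m}{\bf{X}}^*_{m'}$, submodularity bounds each prefix increment by the corresponding single-block increment, yielding $\sum_m[U(\hat{{\bf{X}}}\cup{\bf{X}}^*_m)-U(\hat{{\bf{X}}})]\ge U(\hat{{\bf{X}}}\cup{\bf{X}}^*)-U(\hat{{\bf{X}}})$. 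Monotonicity of $U$ then gives $U(\hat{{\bf{X}}}\cup{\bf{X}}^*)\ge U({\bf{X}}^*)$, so $U(\hat{{\bf{X}}})\ge U({\bf{X}}^*)-U(\hat{{\bf{X}}})$, i.e., $U(\hat{{\bf{X}}})\ge\frac{1}{2}U({\bf{X}}^*)$. The main obstacle I anticipate is making the marginal-gain identification between $\hat{U}_m(\cdot)$ and differences of $U(\cdot)$ fully rigorous, especially because ${\bf{X}}^*_m$ may select models not chosen by $\hat{{\bf{X}}}_m$ and thus $\mathbb{I}_{2}$ must be interpreted consistently with $\hat{{\bf{X}}}_{1:m-1}$ being fixed; once that identification and the feasibility of ${\bf{X}}^*_m$ in $\mathcal{P}2.1_m$ are nailed down, the rest is routine submodular algebra.
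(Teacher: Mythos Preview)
Your argument is correct and is the standard ``greedy over a partition'' analysis for monotone submodular maximization: identify $\hat{U}_m(\cdot)$ as the marginal gain of $U$ over the prefix $\hat{{\bf{X}}}_{1:m-1}$, use per-server optimality against the feasible ${\bf{X}}^*_m$, push each marginal down to the full $\hat{{\bf{X}}}$ via diminishing returns, and finish with the telescoping bound $\sum_m[U(\hat{{\bf{X}}}\cup{\bf{X}}^*_m)-U(\hat{{\bf{X}}})]\ge U(\hat{{\bf{X}}}\cup{\bf{X}}^*)-U(\hat{{\bf{X}}})$ plus monotonicity. The marginal-gain identification you flagged as the main obstacle is indeed exact here, since the $(k,i)$-term of $U(\hat{{\bf{X}}}_{1:m-1}\cup{\bf{Y}})-U(\hat{{\bf{X}}}_{1:m-1})$ collapses to $p_{k,i}\,y_{m,i}\,\mathbb{I}_1(m,k,i)\,\mathbb{I}_2(m,k,i)$ once servers $m{+}1,\dots,M$ carry nothing.

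The paper takes a different route. It defines the ``residual'' placement $\tilde{{\bf{X}}}$ consisting of decisions made by ${\bf{X}}^*$ but not by $\hat{{\bf{X}}}$, argues by contradiction that $U(\tilde{{\bf{X}}})\le U(\hat{{\bf{X}}})$ (otherwise some server $m$ would have $\hat{U}_m(\tilde{{\bf{X}}}_m)>\hat{U}_m(\hat{{\bf{X}}}_m)$, violating optimality of $\mathcal{P}2.1_m$), and then invokes $U({\bf{X}}^*)\le U(\hat{{\bf{X}}})+U(\tilde{{\bf{X}}})$. Your approach is more transparent: it uses Proposition~\ref{proposition_submodular} explicitly and avoids the auxiliary object $\tilde{{\bf{X}}}$ and the somewhat delicate step of applying the decomposition of Proposition~\ref{proposition_eq_u_m} to $\tilde{{\bf{X}}}$. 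The paper's approach, on the other hand, is closer in spirit to the multiple-knapsack literature it cites \cite{dawande2000approximation,chekuri2005polynomial} and does not require invoking diminishing returns for set-valued increments. Both yield the same $\tfrac{1}{2}$ factor.
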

\begin{proof}
The proof is presented in Appendix \myref{proof_Successive_greedy_optimal} in our supplementary material.
\end{proof}
\subsection{Rounding DP Approach and $\epsilon$-optimal Solutions for Sub-problems}
In this subsection, we propose a rounding DP-based algorithm to obtain an $\epsilon$-optimal solution to $\mathcal{P}2.1_m$. To begin with, we address the submodularity of the constraint in $\mathcal{P}2.1_m$ by decoupling the caching decisions for shared and specific parameter blocks. Specifically, under Assumption~\ref{assumption_1}, we determine the placement of shared parameter blocks via the exhaustive search. Once the placement decisions for shared parameter blocks are determined, we develop a rounding DP-based method to make the placement decisions for specific parameter blocks. For ease of presentation, let $\mathcal{A}$ denote the set of all possible combinations of shared parameter blocks in $\mathcal{J}^{\text{sh}}$, where each element $\mathcal{N}\in\mathcal{A}$ represents a specific combination of shared parameter blocks. For any $\mathcal{N}$, let $\mathcal{I}_\mathcal{N}=\left\{i \ \middle| \ \mathcal{J}_{i} \subseteq\mathcal{N}\right\}$ be the set of models whose shared parameter blocks are fully contained in $\mathcal{N}$. Additionally, let $d_{\mathcal{N}}$ denote the total size of the parameter blocks in $\mathcal{N}$, and let $D\left(i\right)$ denote the total size of the specific parameter blocks of model $i$\footnote{For example, in Fig. \ref{fig_tree_model}$, \{1,2,3,4,5,15\}$ and $\{10,12,13,15\}$ are two valid instances of $\mathcal{N}$. For $\mathcal{N}=\{1,2,3,4,5,15\}$, $\mathcal{I}_\mathcal{N}$ includes models 1 and 2. Moreover, $d_{\mathcal{N}}$ is the total size of parameter blocks \{1, 2, 3, 4, 5, 15\}, and $D\left(1\right)$ is the total size of parameter blocks \{18, 19\}.}. 

\subsubsection{Number of cache hits of models in $\mathcal{I}_{\mathcal{N}}$ on an edge server} Given $\mathcal{N}$, the number of cache hits of model $i\in\mathcal{I}_\mathcal{N}$ on edge server $m$ is given by
\begin{equation}\label{eq_utility}
	u\left(m,i\right)=\sum\limits_{k\in\mathcal{K}}{p}_{k,i}\mathbb{I}_{1}\left(m,k,i\right)\mathbb{I}_{2}\left(m,k,i\right),
\end{equation}
which is a fixed-point number due to the existence of $p_{k,i}$. We denote the granularity of $u\left(m,i\right)$ of models in $\mathcal{I}_{\mathcal{N}}$ on edge server $m$ by $\delta_{m,\mathcal{N}}$, which reflects the precision of $u\left(m,i\right)$\footnote{For example, given $\mathcal{I}_{\mathcal{N}}=\{1,2\}$, if $u\left(m,1\right) = 0.12$ and $u\left(m,2\right)=0.14$, the precision of both values is two decimal places, and $\delta_{m,\mathcal{N}}=0.01$.}. 
Therefore, the total number of cache hits of models in $\mathcal{I}_\mathcal{N}$ on edge server $m$ has at most $W_{m,\mathcal{N}} + 1$ possible values, where $W_{m,\mathcal{N}}=\frac{\sum\limits_{i\in\mathcal{I}_\mathcal{N}}u\left(m,i\right)}{\delta_{m,\mathcal{N}}}$, and the $w_{m,\mathcal{N}}$-th value is $w_{m,\mathcal{N}}\delta_{m,\mathcal{N}}$, for $w_{m,\mathcal{N}}\in\left\{0,1,\dots,W_{m,\mathcal{N}}\right\}$. Since the DP method traverses all possible values of the total number of cache hits, a smaller $\delta_{m,\mathcal{N}}$ leads to a larger $W_{m,\mathcal{N}}$, thereby increasing the DP complexity. 

To facilitate the DP execution, we introduce a constant factor $\epsilon\in\left[0,1\right]$, and $u\left(m,i\right)$ is rounded to 
\begin{equation}\label{eq_round_u}
\dot{u}\left(m,i\right) = 
\begin{cases}
    \lfloor\frac{u\left(m,i\right)}{\epsilon u_{m,\min}}\rfloor,\ \epsilon >0,\\
    u\left(m,i\right), \ \epsilon=0,
\end{cases}
\end{equation}
where $u_{m,\min} = \mathop{\min}\limits_{i\in\mathcal{I}}u\left(m,i\right)$. 
After rounding, the number of total cache hits of models in $\mathcal{I}_\mathcal{N}$ on edge server $m$ has at most $\dot{W}_{m,\mathcal{N}} + 1$ possible values, where $\dot{W}_{m,\mathcal{N}}=\frac{\sum\limits_{i\in\mathcal{I}_\mathcal{N}}\dot{u}\left(m,i\right)}{\dot{\delta}_{m,\mathcal{N}}}$, $\dot{\delta}_{m,\mathcal{N}}$ is the granularity of $\dot{u}\left(m,i\right)$ for models in $\mathcal{I}_{\mathcal{N}}$ on edge server $m$, and the $\dot{w}_{m,\mathcal{N}}$-th value is $\dot{w}_{m,\mathcal{N}}\dot{\delta}_{m,\mathcal{N}}$, for $\dot{w}_{m,\mathcal{N}}\in\left\{0,1,\dots,\dot{W}_{m,\mathcal{N}}\right\}$.

\subsubsection{Maximum cache hit ratio of models in $\mathcal{I}_{\mathcal{N}}$ on an edge server}Let $\mathcal{T}\left(e_{\mathcal{N}},\dot{w}_{\mathcal{N}}\right)$ represent the minimum data size of specific parameter blocks that must be cached on edge server $m$, which achieves $\dot{w}_{\mathcal{N}}\dot{\delta}_{m,\mathcal{N}}$ cache hits (i.e., the $\dot{w}_{\mathcal{N}}$-th value of the number of cache hits) with the first $e_{\mathcal{N}}$ models in $\mathcal{I}_\mathcal{N}$. The state-transition equation for updating $\mathcal{T}\left(e_{\mathcal{N}},\dot{w}_{\mathcal{N}}\right)$ is given by 
\begin{equation}\label{eq_dp}
  \begin{aligned}	&\mathcal{T}\left(e_{\mathcal{N}},\dot{w}_{\mathcal{N}}\right)=\\
    &\begin{cases}
                    \begin{aligned}
				&{\min\left\{
					\begin{aligned}
					&\mathcal{T}\left(e_{\mathcal{N}}-1,\dot{w}_{\mathcal{N}}\right),\\
					&\mathcal{T}\left(e_{\mathcal{N}}-1,\dot{w}_{\mathcal{N}}-\frac{\dot{u}\left(m,e_{\mathcal{N}}\right)}{\dot{\delta}_{m,\mathcal{N}}}\right)\\
                    &+D\left(e_{\mathcal{N}}\right)
					\end{aligned}
					\right\},\dot{w}_{\mathcal{N}}\ge\frac{\dot{u}\left(m,e_{\mathcal{N}}\right)}{\dot{\delta}_{m,\mathcal{N}}},}\\
				&{\mathcal{T}\left(e_{\mathcal{N}}-1,\dot{w}_{\mathcal{N}}\right), \dot{w}_{\mathcal{N}}<\frac{\dot{u}\left(m,e_{\mathcal{N}}\right)}{\dot{\delta}_{m,\mathcal{N}}},}
                    \end{aligned}
		\end{cases}
    \end{aligned}
\end{equation}
where $e_{\mathcal{N}}\in\left\{0,1,\dots,\left|\mathcal{I}_\mathcal{N}\right|\right\}$, and the initial value of $\mathcal{T}\left(e_{\mathcal{N}},\dot{w}_{\mathcal{N}}\right)$ is
\begin{equation}\label{eq_dp_initial}
\mathcal{T}\left(e_{\mathcal{N}},\dot{w}_{\mathcal{N}}\right) = 
\begin{cases}
    {\infty,\ \text{if } \dot{w}_{\mathcal{N}}\ne0,}\\
		{0,\ \text{if }\dot{w}_{\mathcal{N}}=0 \text{ or } e_{\mathcal{N}}=0.}
\end{cases}
\end{equation}

\begin{figure}[!t]
\centerline{\includegraphics[width=0.48\textwidth]{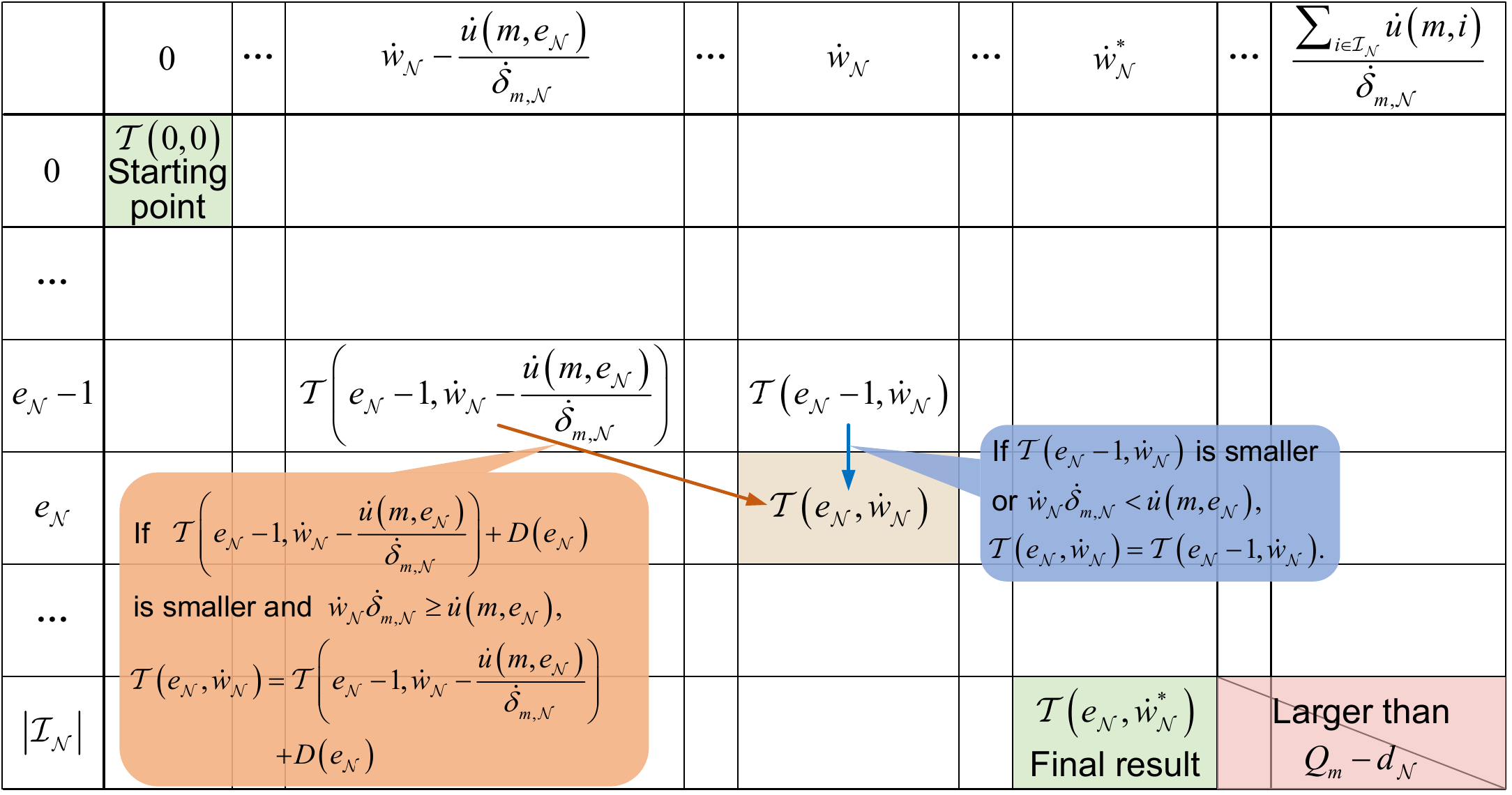}}
	\caption{The process of updating $\mathcal{T}\left(e_{\mathcal{N}},\dot{w}_{\mathcal{N}}\right)$ for edge server $m$, where the first row is the index of the number of cache hits and the first column is the model index in $\mathcal{I}_{\mathcal{N}}$.
 }
	\label{fig_dp}
\end{figure}

The detailed process of updating $\mathcal{T}\left(e_{\mathcal{N}},\dot{w}_{\mathcal{N}}\right)$ is illustrated in Fig. \ref{fig_dp} and summarized below. When calculating $\mathcal{T}\left(e_{\mathcal{N}},\dot{w}_{\mathcal{N}}\right)$, we consider the following two cases. (1) If $\dot{w}_{\mathcal{N}}\ge\frac{\dot{u}\left(m,e_{\mathcal{N}}\right)}{\dot{\delta}_{m,\mathcal{N}}}$, i.e., the number of cache hits of model $e_{\mathcal{N}}$ is less than or equal to the $\dot{w}_{\mathcal{N}}$-th value of the number of cache hits, then $\mathcal{T}\left(e_{\mathcal{N}},\dot{w}_{\mathcal{N}}\right)$ is the minimum of $\mathcal{T}\left(e_{\mathcal{N}} - 1,\dot{w}_{\mathcal{N}}\right)$ and $\mathcal{T}\left(e_{\mathcal{N}}-1,\dot{w}_{\mathcal{N}}-\frac{\dot{u}\left(m,e_{\mathcal{N}}\right)}{\dot{\delta}_{m,\mathcal{N}}}\right) + D\left(e_{\mathcal{N}}\right)$.
\begin{itemize}
\item If $\mathcal{T}\left(e_{\mathcal{N}} - 1,\dot{w}_{\mathcal{N}}\right)$ is smaller, it indicates that a subset of the first $e_{\mathcal{N}}-1$ models can achieve the $\dot{w}_{\mathcal{N}}$-th value of the number of cache hits with a smaller data size. Therefore, the same data size is sufficient to achieve the $\dot{w}_{\mathcal{N}}$-th value of the number of cache hits using the first $e_{\mathcal{N}}$ models.
\item If $\mathcal{T}\left(e_{\mathcal{N}}-1,\dot{w}_{\mathcal{N}}-\frac{\dot{u}\left(m,e_{\mathcal{N}}\right)}{\dot{\delta}_{m,\mathcal{N}}}\right) + D\left(e_{\mathcal{N}}\right)$ is smaller, it represents $\dot{w}_{\mathcal{N}}\dot{\delta}_{m,\mathcal{N}}$ is achieved by including model $e_{\mathcal{N}}$, which contributes $\dot{u}\left(m,e_{\mathcal{N}}\right)$, and using a subset of the first $e_{\mathcal{N}}-1$ models to provide the remaining number of cache hits. Therefore, $D\left(e_{\mathcal{N}}\right)$ is added to $\mathcal{T}\left(e_{\mathcal{N}}-1,\dot{w}_{\mathcal{N}}-\frac{\dot{u}\left(m,e_{\mathcal{N}}\right)}{\dot{\delta}_{m,\mathcal{N}}}\right)$.
\end{itemize}
(2) If $\dot{w}_{\mathcal{N}}<\frac{\dot{u}\left(m,e_{\mathcal{N}}\right)}{\dot{\delta}_{m,\mathcal{N}}}$, i.e., the number of cache hits of model $e_{\mathcal{N}}$ exceeds the $\dot{w}_{\mathcal{N}}$-th value of the number of cache hits, then the $\dot{w}_{\mathcal{N}}$-th value of the number of cache hits must be achieved only by a subset of the first $e_{\mathcal{N}}-1$ models without including model $e_{\mathcal{N}}$. 

After obtaining $\mathcal{T}\left(e_{\mathcal{N}},\dot{w}_{\mathcal{N}}\right)$ for all $e_{\mathcal{N}}$ and $\dot{w}_{\mathcal{N}}$, the index corresponding to the maximum number of cache hits of models in $\mathcal{I}_{\mathcal{N}}$ on edge server $m$ is given by
\begin{equation}\label{eq_rounded_w}
\dot{w}^*_{\mathcal{N}}=\mathop{\arg\max}\limits_{\dot{w}_{\mathcal{N}}}\{\dot{w}_{\mathcal{N}}\mid\mathcal{T}\left(\left|\mathcal{I}_{\mathcal{N}}\right|,\dot{w}_{\mathcal{N}}\right) \le Q_m-d_{\mathcal{N}}\}.
\end{equation}
With $\dot{w}^*_{\mathcal{N}}$, the maximum cache hit ratio of models in $\mathcal{I}_{\mathcal{N}}$ on edge server $m$ is expressed as
\begin{equation}\label{eq_rounded_u}
	\hat{U}_m\left(\hat{{\bf{X}}}_{m,\mathcal{N}}\right) = \frac{\sum\limits_{i\in\hat{\mathcal{I}}_{m,\mathcal{N}}}{u}\left(m,i\right)}{\sum\limits_{k\in\mathcal{K}}\sum\limits_{i\in\mathcal{I}}{p}_{k,i}},
\end{equation} 
where $\hat{{\bf{X}}}_{m,\mathcal{N}}$ is the model caching decision corresponding to $\dot{w}^*_{\mathcal{N}}$, and $\hat{\mathcal{I}}_{m,\mathcal{N}} = \left\{i\ \middle | \ \hat{x}_{m,i}=1,\hat{x}_{m,i}\in\hat{{\bf{X}}}_{m,\mathcal{N}}\right\}$. The details for determining $\hat{{\bf{X}}}_{m,\mathcal{N}}$ will be presented in the following paragraphs. Moreover, note that the rounding DP-based algorithm described above operates based on $\dot{u}\left(m,i\right)$, 
whereas we use $u\left(m,i\right)$ to calculate $\hat{U}_m\left(\hat{{\bf{X}}}_{m,\mathcal{N}}\right)$ in \eqref{eq_rounded_u} at last. 

\subsubsection{Optimal model caching of models in $\mathcal{I}_{\mathcal{N}}$ on an edge server}The determination of $\hat{{\bf{X}}}_{m,\mathcal{N}}$ is summarized in Algorithm \ref{algorithm_recursive} and illustrated in Fig. \ref{fig_recursive}. The algorithm determines $\hat{{\bf{X}}}_{m,\mathcal{N}}$ recursively by comparing $\dot{u}\left(m,e_{\mathcal{N}}\right)$ with the remaining number of cache hits $\dot{w}_{\mathcal{N}}\dot{\delta}_{m,\mathcal{N}}$, starting from model $\left|\mathcal{I}_{\mathcal{N}}\right|$ and number of cache hits $\dot{w}^*_{\mathcal{N}}\dot{\delta}_{m,\mathcal{N}}$. In Line \ref{line:recursive_x} of Algorithm \ref{algorithm_recursive}, the $e_{\mathcal{N}}$-th model in $\mathcal{I}_\mathcal{N}$ is placed on edge server $m$ if the following two conditions are satisfied. 
\begin{itemize}
    \item $\dot{w}_{\mathcal{N}}\dot{\delta}_{m,\mathcal{N}}\ge \dot{u}\left(m,e_{\mathcal{N}}\right)$, implying that the number of cache hits of the $e_{\mathcal{N}}$-th model is no greater than the remaining number of cache hits to be satisfied.
    \item $\mathcal{T}\left(e_{\mathcal{N}}-1,\dot{w}_{\mathcal{N}}-\frac{\dot{u}\left(m,e_{\mathcal{N}}\right)}{\dot{\delta}_{m,\mathcal{N}}}\right)+D\left(e_{\mathcal{N}}\right)<\mathcal{T}\left(e_{\mathcal{N}}-1,\dot{w}_{\mathcal{N}}\right)$, representing that the number of cache hits $\dot{w}_{\mathcal{N}}\dot{\delta}_{m,\mathcal{N}}$ is achieved by model $e_{\mathcal{N}}$, contributing $\dot{u}\left(m,e_{\mathcal{N}}\right)$, and a subset of the first $e_{\mathcal{N}}-1$ models, contributing $\dot{w}_{\mathcal{N}}\dot{\delta}_{m,\mathcal{N}}-\dot{u}\left(m,e_{\mathcal{N}}\right)$. Besides, the corresponding total data size, $\mathcal{T}\left(e_{\mathcal{N}}-1,\dot{w}_{\mathcal{N}}-\frac{\dot{u}\left(m,e_{\mathcal{N}}\right)}{\dot{\delta}_{m,\mathcal{N}}}\right)+D\left(e_{\mathcal{N}}\right)$, is less than $\mathcal{T}\left(e_{\mathcal{N}}-1,\dot{w}_{\mathcal{N}}\right)$. 
\end{itemize}
After processing model $e_{\mathcal{N}}$, the algorithm repeats the same steps above to process the $e_{\mathcal{N}}-1$-th model until all models in $\mathcal{I}_\mathcal{N}$ have been checked. Note that there may be more than one feasible model caching decision for $\mathcal{I}_{\mathcal{N}}$ that can achieve the same number of cache hits $\dot{w}^*_{\mathcal{N}}\dot{\delta}_{m,\mathcal{N}}$ with the data size $\mathcal{T}\left(\left|\mathcal{I}_\mathcal{N}\right|,w^*_{\mathcal{N}}\right)$. Although Algorithm \ref{algorithm_recursive} produces one feasible $\hat{{\bf{X}}}_{m,{\mathcal{N}}}$, this does not affect the optimality.

\begin{figure}[!t]
\centerline{\includegraphics[width=0.48\textwidth]{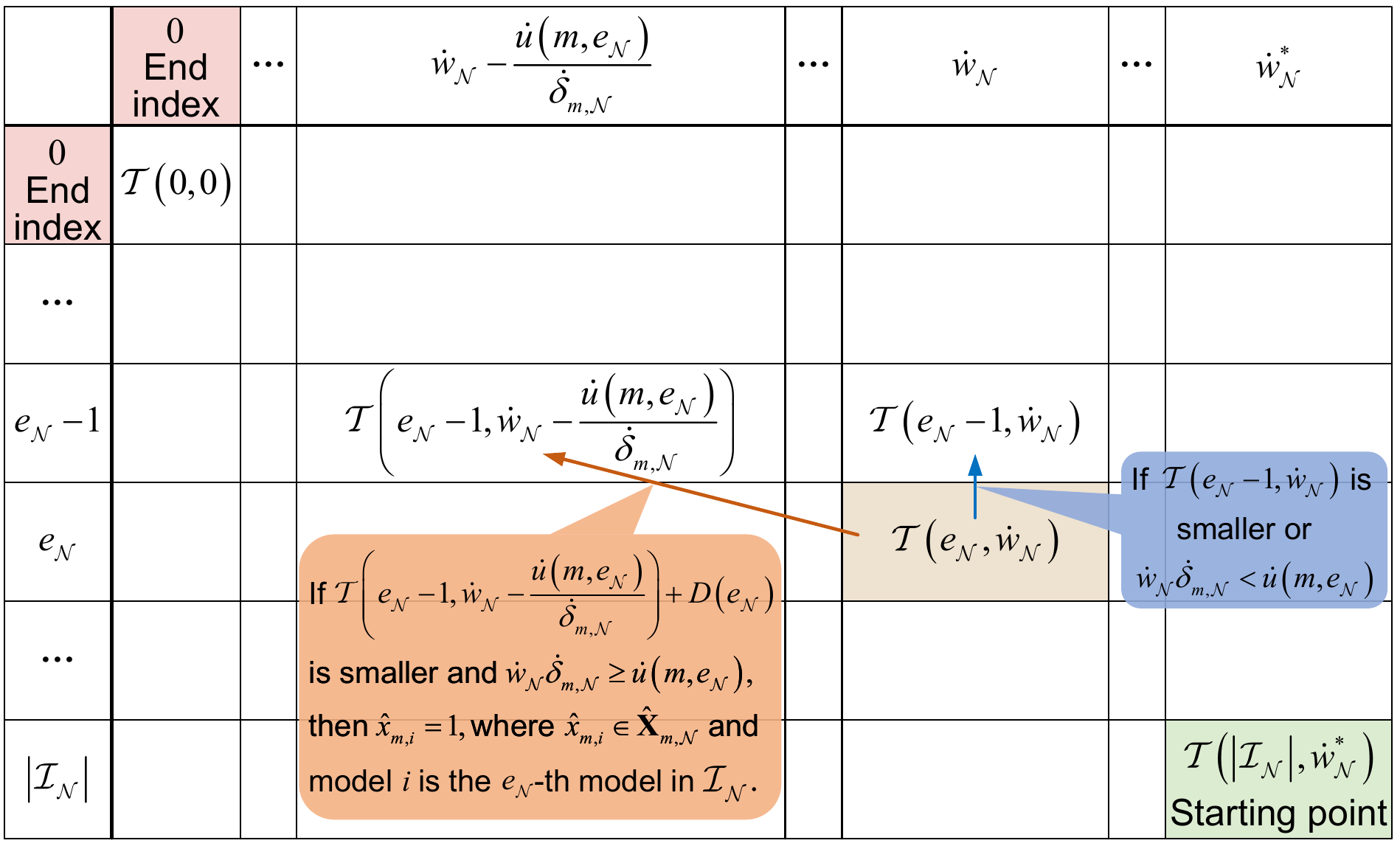}}
	\caption{The process of determining $\hat{{\bf{X}}}_{m,{\mathcal{N}}}$, where the first row and the first column are the same as those in Fig. \ref{fig_dp}.
 }
	\label{fig_recursive}
\end{figure}

\begin{algorithm}[!t]
	\caption{Model Caching Decision Algorithm} 
	\label{algorithm_recursive}
	\LinesNumbered
	\KwIn{$m$, $\mathcal{I}_{\mathcal{N}}$, $\dot{\delta}_{m,\mathcal{N}}$, $\dot{u}\left(m,e_{\mathcal{N}}\right)$, and $\mathcal{T}\left(e_{\mathcal{N}},\dot{w}_{\mathcal{N}}\right)$.}
	\KwOut{$\hat{{\bf{X}}}_{m,\mathcal{N}}$.} 
	{\bf Initialize:} $e_{\mathcal{N}}= \left|\mathcal{I}_{\mathcal{N}}\right|$, $\dot{w}_{\mathcal{N}} = \dot{w}^*_{\mathcal{N}} $, and $\hat{{\bf{X}}}_{m,{\mathcal{N}}}={\bf{0}}$\\
	\While{$e_{\mathcal{N}}\ne 0 $ and $\dot{w}_{\mathcal{N}}\ne 0 $}
        {
            \If{$\dot{w}_{\mathcal{N}}\dot{\delta}_{m,\mathcal{N}}\ge \dot{u}\left(m,e_{\mathcal{N}}\right)$ }
            {
                \If{$\mathcal{T}\left(e_{\mathcal{N}}-1,\dot{w}_{\mathcal{N}}-\frac{\dot{u}\left(m,e_{\mathcal{N}}\right)}{
            \dot{\delta}_{m,\mathcal{N}}}\right)+D\left(e_{\mathcal{N}}\right)<\mathcal{T}\left(e_{\mathcal{N}}-1,\dot{w}_{\mathcal{N}}\right)$}
                {
                    $\hat{x}_{m,i}=1$, where $\hat{x}_{m,i}\in\hat{{\bf{X}}}_{m,{\mathcal{N}}}$ and model $i$ is the $e_{\mathcal{N}}$-th model in $\mathcal{I}_\mathcal{N}$.\label{line:recursive_x}\\
                    $\dot{w}_{\mathcal{N}}=\dot{w}_{\mathcal{N}}-\frac{\dot{u}\left(m,e_{\mathcal{N}}\right)}{\dot{\delta}_{m,\mathcal{N}}}$.\\
                }
            }
            $e_{\mathcal{N}}=e_{\mathcal{N}}-1$.\\
        }
\end{algorithm}\par 

\subsubsection{$\epsilon$-optimal solution to $\mathcal{P}2.1_{m}$ and rounding DP-based algorithm outline}
After traversing all feasible $\mathcal{N}$ in $\mathcal{A}$, the maximum number of cache hits of models on edge server $m$ corresponds to the index $\dot{w}^*_{\mathcal{N}^*}$, where $\mathcal{N}^*=\mathop{\arg\max}\limits_{\mathcal{N}}\{\dot{w}^*_{\mathcal{N}}\}$. Letting $\hat{{\bf{X}}}_{m,\mathcal{N}^{*}}$ be the solution $\hat{{\bf{X}}}_{m}$ to $\mathcal{P}2.1_{m}$, the maximum cache hit ratio of edge server $m$ is given by
\begin{equation}\label{eq_rounded_u_star}
	\hat{U}_m\left(\hat{{\bf{X}}}_{m}\right)=\hat{U}_m\left(\hat{{\bf{X}}}_{m,\mathcal{N}^{*}}\right) = \frac{\sum\limits_{i\in\hat{\mathcal{I}}_{m}}{u}\left(m,i\right)}{\sum\limits_{k\in\mathcal{K}}\sum\limits_{i\in\mathcal{I}}{p}_{k,i}},
\end{equation} 
where $\hat{\mathcal{I}}_{m}=\hat{\mathcal{I}}_{m,\mathcal{N}^{*}}$.

The rounding DP-based algorithm is outlined in Algorithm \ref{algorithm_DP}. 
\begin{algorithm}[!t]
	\caption{Rounding DP-based Algorithm} 
	\label{algorithm_DP}
	\LinesNumbered
	\KwIn{$m$, $\epsilon$, $\mathcal{K}$, $\mathcal{I}$.}
	\KwOut{$\hat{U}_{m}\left(\hat{{\bf{X}}}_m\right)$, $\hat{{\bf{X}}}_m$.} 
	{\bf Initialize:} $\hat{U}_{m}\left(\hat{{\bf{X}}}_m\right) = 0$, $\hat{{\bf{X}}}_m={\bf{0}}$, $\mathcal{N}^{*}=\emptyset$, and $\hat{U}_{m}\left(\hat{{\bf{X}}}_{m,\mathcal{N}^{*}}\right)=0$. \\
	Calculate $u\left(m,i\right)$, and $\dot{u}\left(m,i\right)$ with \eqref{eq_utility} and \eqref{eq_round_u}, respectively.\label{line:dp_u}\\
        \For{$\mathcal{N}\in\mathcal{A}$}
	{
		Calculate $d_{\mathcal{N}}$.\label{line:dp_d}\\
		\If{$d_{\mathcal{N}}>Q_m$}
		{
			\textbf{Continue}.\\
		}
		Calculate $\dot{\delta}_{m,\mathcal{N}}$ of $\dot{u}\left(m,i\right)$ for models in $\mathcal{I}_{\mathcal{N}}$.\label{line:dp_delta}\\
            Calculate $D\left(e_{\mathcal{N}}\right)$ for all $e_{\mathcal{N}}$ in $\mathcal{I}_\mathcal{N}$.\label{line:dp_D}\\
		Initialize $\mathcal{T}\left(e_{\mathcal{N}},\dot{w}_{\mathcal{N}}\right)$ using \eqref{eq_dp_initial}.\\
		\For{$e_{\mathcal{N}}\in\left\{1,\dots,\left|\mathcal{I}_{\mathcal{N}}\right|\right\}$}
		{
			\For{$\dot{w}_{\mathcal{N}}\in\left\{1,\dots,\dot{W}_{m,\mathcal{N}}\right\}$}
			{
                Calculate $\mathcal{T}\left(e_{\mathcal{N}},\dot{w}_{\mathcal{N}}\right)$ with \eqref{eq_dp}.\\
			}
		}
		Calculate $\dot{w}^*_{\mathcal{N}}$ with \eqref{eq_rounded_w}.\label{line:dp_w}\\
        Calculate $\hat{{\bf{X}}}_{m,\mathcal{N}}$ with Algorithm \ref{algorithm_recursive}. \label{line:dp_x}\\
        Calculate $\hat{U}_m\left(\hat{{\bf{X}}}_{m,\mathcal{N}}\right)$ with \eqref{eq_rounded_u}.\label{line:dp_U}\\
		\If{$\hat{U}_m\left(\hat{{\bf{X}}}_{m,\mathcal{N}}\right)>\hat{U}_{m}\left(\hat{{\bf{X}}}_{m,\mathcal{N}^{*}}\right)$}
		{
            $\mathcal{N}^{*}=\mathcal{N}$.\\
		}
	}
    $\hat{{\bf{X}}}_{m}=\hat{{\bf{X}}}_{m,\mathcal{N}^{*}}$.\\
\end{algorithm}
Furthermore, we establish the following proposition for Algorithm \ref{algorithm_DP}.
\begin{proposition}\label{DP_epsilon}
	The cache hit ratio produced by Algorithm \ref{algorithm_DP} satisfies $\hat{U}_{m}\left(\hat{{\bf{X}}}_m\right)\ge\left(1-\epsilon\right)\hat{U}_{m}\left(\hat{{\bf{X}}}^*_m\right)$, where $\hat{{\bf{X}}}^*_m$ is the optimal solution to $\mathcal{P}2.1_{m}$.
\end{proposition}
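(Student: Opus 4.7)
The plan is to exploit two facts: (i) the outer loop of Algorithm \ref{algorithm_DP} exhaustively enumerates every subset of shared parameter blocks, so the particular combination $\mathcal{N}^{\text{opt}}$ used by the optimal solution $\hat{{\bf{X}}}^*_m$ is examined at some iteration; (ii) once $\mathcal{N}$ is fixed, the inner DP over the specific parameter blocks maximizes $\sum_i \dot{u}(m,i)\hat{x}_{m,i}$ exactly, because after decoupling the shared blocks the residual constraint $\sum_{i\in\mathcal{I}_\mathcal{N}} D(i)\hat{x}_{m,i}\le Q_m-d_{\mathcal{N}}$ becomes a knapsack constraint, which the recursion in \eqref{eq_dp}--\eqref{eq_dp_initial} solves optimally. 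So the only source of suboptimality is the integer rounding in \eqref{eq_round_u}, and the proof reduces to quantifying this rounding loss.

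First I would denote by $\hat{\mathcal{I}}^*_m$ the set of models cached under $\hat{{\bf{X}}}^*_m$ and by $\hat{\mathcal{I}}_m$ the set returned by Algorithm~\ref{algorithm_DP}. From the definition of $\dot{u}(m,i)=\lfloor u(m,i)/(\epsilon u_{m,\min})\rfloor$ I would establish the per-model sandwich
\begin{equation}
u(m,i)-\epsilon u_{m,\min} \;\le\; \epsilon u_{m,\min}\,\dot{u}(m,i)\;\le\; u(m,i),
\end{equation}
which is the standard FPTAS rounding bound. Summing over $\hat{\mathcal{I}}_m$ using the right inequality, and over $\hat{\mathcal{I}}^*_m$ using the left inequality, and invoking the DP-optimality $\sum_{i\in\hat{\mathcal{I}}_m}\dot{u}(m,i)\ge \sum_{i\in\hat{\mathcal{I}}^*_m}\dot{u}(m,i)$ (valid because $\hat{{\bf{X}}}^*_m$ remains feasible for the iteration $\mathcal{N}=\mathcal{N}^{\text{opt}}$), I would obtain
\begin{equation}
\sum_{i\in\hat{\mathcal{I}}_m} u(m,i) \;\ge\; \sum_{i\in\hat{\mathcal{I}}^*_m} u(m,i) \;-\; |\hat{\mathcal{I}}^*_m|\,\epsilon\, u_{m,\min}.
\end{equation}

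Next I would close the gap by using the defining property of $u_{m,\min}$: for every model $i$, $u(m,i)\ge u_{m,\min}$, hence $\sum_{i\in\hat{\mathcal{I}}^*_m} u(m,i) \ge |\hat{\mathcal{I}}^*_m|\, u_{m,\min}$. Substituting gives $|\hat{\mathcal{I}}^*_m|\,\epsilon\,u_{m,\min}\le \epsilon\sum_{i\in\hat{\mathcal{I}}^*_m} u(m,i)$, and dividing both sides of the chain by $\sum_{k,i} p_{k,i}$ (the common normalizer in \eqref{eq_rounded_u_star}) yields exactly
\begin{equation}
\hat{U}_m\!\left(\hat{{\bf{X}}}_m\right) \;\ge\; (1-\epsilon)\,\hat{U}_m\!\left(\hat{{\bf{X}}}^*_m\right).
\end{equation}
The $\epsilon=0$ case follows trivially from $\dot{u}=u$, since the DP is then exact.

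The main obstacle, and the step I would be most careful about, is justifying the DP-optimality used above. Two things must be checked: that the enumeration over $\mathcal{A}$ really does include $\mathcal{N}^{\text{opt}}$ (immediate, because $\mathcal{A}$ lists all subsets of $\mathcal{J}^{\text{sh}}$), and that once the shared blocks are fixed, the remaining problem is a pure knapsack in the specific blocks, so \eqref{eq_dp} produces the minimum storage for each rounded hit level $\dot{w}_{\mathcal{N}}\dot{\delta}_{m,\mathcal{N}}$. The latter requires noting that, given $\mathcal{N}$, the set of models $\mathcal{I}_{\mathcal{N}}$ contributes an additive storage cost $D(e_{\mathcal{N}})$ per selected model (no further sharing), making the recursion in \eqref{eq_dp} a textbook knapsack DP whose optimality is standard. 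Once this is in place, the rounding bound above is the only quantitative work, and Proposition~\ref{DP_epsilon} follows.
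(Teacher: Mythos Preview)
Your proposal is correct and follows essentially the same approach as the paper's proof: both establish the sandwich $u(m,i)-\epsilon u_{m,\min}\le \epsilon u_{m,\min}\dot{u}(m,i)\le u(m,i)$, invoke the DP-optimality inequality $\sum_{i\in\hat{\mathcal{I}}_m}\dot{u}(m,i)\ge\sum_{i\in\hat{\mathcal{I}}^*_m}\dot{u}(m,i)$, and then bound the residual $|\hat{\mathcal{I}}^*_m|\,\epsilon\,u_{m,\min}$ by $\epsilon\sum_{i\in\hat{\mathcal{I}}^*_m}u(m,i)$ using the definition of $u_{m,\min}$. Your write-up is in fact slightly more explicit than the paper about \emph{why} the DP-optimality inequality holds (namely, that the enumeration over $\mathcal{A}$ visits $\mathcal{N}^{\text{opt}}$ and that, with the shared blocks fixed, the residual constraint is a pure additive knapsack over specific blocks), whereas the paper simply asserts ``due to the optimality of the DP approach''.
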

 \begin{proof}
 	The proof is shown in Appendix \myref{proof_DP_epsilon} in our supplementary material.
 \end{proof}

\subsection{Analysis of the TrimCaching Spec Algorithm}
\subsubsection{Time complexity analysis}
With Proposition \ref{DP_epsilon}, we first establish the following theorem on the time complexity of the proposed TrimCaching Spec Algorithm under the special case.
\begin{theorem}\label{theorem_spec_time}
    Under Assumption \ref{assumption_1} for the special case of $\mathcal{P}1.1$, where the number of shared parameter blocks is small and fixed, the TrimCaching Spec algorithm has a polynomial-time computational complexity $O\left(MI\right)$.
\end{theorem}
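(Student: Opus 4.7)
The plan is to bound the computational cost of Algorithm 1 by analyzing its two nested loops, the outer one over edge servers in Algorithm 1 and the inner one (inside Algorithm 3) over combinations of shared parameter blocks, and to show that each iteration contributes at most $O(I)$ work. First, I would invoke Assumption 1 to control the enumeration over $\mathcal{A}$ inside Algorithm 3. Since $|\mathcal{J}^{\text{sh}}| \leq C$ for the fixed constant $C$, the family of combinations satisfies $|\mathcal{A}| \leq 2^{|\mathcal{J}^{\text{sh}}|} \leq 2^C = O(1)$, so this loop does not scale with the problem size. For each fixed $\mathcal{N} \in \mathcal{A}$, the auxiliary work in Lines~\ref{line:dp_u}--\ref{line:dp_D} (computing $u(m,i)$, $\dot u(m,i)$, $d_\mathcal{N}$, and the $D(e_\mathcal{N})$'s), together with Algorithm 2 and the backtracking in Lines~\ref{line:dp_w}--\ref{line:dp_U}, each cost at most $O(I)$ when $K$ is treated as an instance constant. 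The dominant term is therefore the DP table in \eqref{eq_dp}, which has $|\mathcal{I}_\mathcal{N}| \cdot (\dot W_{m,\mathcal{N}} + 1)$ entries, each filled in $O(1)$ time.

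The crux of the argument is to show that $\dot W_{m,\mathcal{N}}$ is bounded by a constant independent of the model library size $I$. Because the rounding in \eqref{eq_round_u} produces integer values, the granularity $\dot\delta_{m,\mathcal{N}}$ equals $1$, and $\dot u(m,i) \leq u(m,i)/(\epsilon u_{m,\min})$. Using $\sum_{i \in \mathcal{I}} p_{k,i} \leq 1$ for each user $k$, I would chain
\begin{equation*}
\dot W_{m,\mathcal{N}} \;=\; \sum_{i \in \mathcal{I}_\mathcal{N}} \dot u(m,i) \;\leq\; \frac{1}{\epsilon u_{m,\min}} \sum_{i \in \mathcal{I}_\mathcal{N}} u(m,i) \;\leq\; \frac{1}{\epsilon u_{m,\min}} \sum_{k \in \mathcal{K}} \sum_{i \in \mathcal{I}} p_{k,i} \;\leq\; \frac{K}{\epsilon u_{m,\min}},
\end{equation*}
which is $O(1)$ with $\epsilon$, $K$, and $u_{m,\min}$ treated as instance constants. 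Hence the DP table has only $O(I)$ entries, one invocation of Algorithm 3 costs $O(2^C \cdot I) = O(I)$, and the outer loop of Algorithm 1 across $M$ edge servers yields the claimed $O(MI)$ bound.

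The main obstacle I anticipate is precisely this constant bound on $\dot W_{m,\mathcal{N}}$; the unrounded DP dimension $W_{m,\mathcal{N}}$ in general depends on the granularity of the raw $u(m,i)$, which can be arbitrarily fine, so the linear-in-$I$ runtime is only possible because the $\epsilon$-rounding collapses the DP scale to a constant. This is also the mechanism that ties Theorem 1 to Proposition 4: the same rounding that yielded the $(1-\epsilon)$ approximation guarantee simultaneously keeps the table size instance-bounded, and without it the analysis would degrade to $O(MI^2)$ or worse.
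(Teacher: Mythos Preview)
Your proof follows the same overall decomposition as the paper's: the outer loop over $M$ servers, the inner loop over $|\mathcal{A}|\le 2^{|\mathcal{J}^{\text{sh}}|}\le 2^{C}=O(1)$ combinations via Assumption~\ref{assumption_1}, and a DP table with $O(I)$ rows whose entries are filled in $O(1)$ time each. On this level the two arguments coincide.

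The one substantive difference is how the second DP dimension $\dot W_{m,\mathcal{N}}$ is bounded. The paper bounds it by $p/\delta_{\min}$, where $p=\sum_{k}\sum_{i}p_{k,i}$ and $\delta_{\min}$ is the fixed-point granularity of the input request probabilities $p_{k,i}$; this is an input-precision constant and is independent of $\epsilon$. Your bound $K/(\epsilon\,u_{m,\min})$ instead hinges on the rounding and collapses at $\epsilon=0$. Consequently your closing claim that ``the linear-in-$I$ runtime is only possible because the $\epsilon$-rounding collapses the DP scale'' and that without it ``the analysis would degrade to $O(MI^{2})$ or worse'' is not what the paper argues: in the paper's accounting the $O(MI)$ bound holds for every $\epsilon\in[0,1]$, including the exact DP, because the probabilities already have finite precision. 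The rounding in \eqref{eq_round_u} is what buys the $(1-\epsilon)$ approximation guarantee of Proposition~\ref{DP_epsilon}, not the running-time bound. A secondary caveat is that $u_{m,\min}=\min_{i\in\mathcal{I}}u(m,i)$ can shrink as $I$ grows (a larger library is likelier to contain a model with tiny $u(m,i)$), so treating it as an instance constant is more delicate than fixing the input precision $\delta_{\min}$.
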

\begin{proof}
The proof is presented in Appendix \myref{proof_theorem_spec_time} in our supplementary material.
\end{proof}
In the special case, the complexity of the TrimCaching Spec algorithm can be further reduced if models are fine-tuned with the bottom-layer freezing technique, which involves freezing a number of the bottom layers of a pre-trained model and only updating the top layers. This approach is commonly used in transfer learning. In this case, we have the following corollary. 
\begin{corollary}\label{corollary_1}
    When models in the model library are fine-tuned from a single pre-trained model using the bottom-layer freezing technique, where each model shares a sequence of consecutive bottom layers from the pre-trained model, the time complexity of the TrimCaching Spec algorithm can be significantly reduced from $O\left(2^{\left|\mathcal{J}^{\text{sh}}\right|}MI\right)$ to $O\left(\left(\kappa+1\right)MI\right)$, where $\kappa$ denotes the maximum number of bottom layers shared between any model in the library and the pre-trained model.
\end{corollary}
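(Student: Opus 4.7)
The plan is to isolate where the exponential factor $2^{|\mathcal{J}^{\text{sh}}|}$ enters the cost of the TrimCaching Spec algorithm and then exploit the chain structure induced by bottom-layer freezing to collapse it to $\kappa+1$. Inspecting Algorithm~\ref{algorithm_DP}, every per-server operation outside the outer \textbf{for}-loop over $\mathcal{N}\in\mathcal{A}$ contributes only $O(I)$ work (by the same accounting used in Theorem~\ref{theorem_spec_time}); the $2^{|\mathcal{J}^{\text{sh}}|}$ blow-up comes solely from the cardinality of $\mathcal{A}$. Hence it suffices to show that under the bottom-layer freezing assumption, only $\kappa+1$ subsets of $\mathcal{J}^{\text{sh}}$ need ever be enumerated to preserve optimality.

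First, I would fix notation. Let the pre-trained model have layers indexed $1,2,\dots,L$ from the bottom, and for each model $i\in\mathcal{I}$ let $\kappa_i\in\{0,1,\dots,\kappa\}$ denote the number of consecutive bottom layers that model $i$ shares with the pre-trained model. Because the sharing is by construction a consecutive bottom prefix, the shared part of $\mathcal{J}_i$ is exactly the prefix $\mathcal{P}_i=\{1,2,\dots,\kappa_i\}$, and a layer $j$ belongs to $\mathcal{J}^{\text{sh}}$ iff $|\{i:\kappa_i\ge j\}|\ge 2$. In particular, every element of $\mathcal{J}^{\text{sh}}$ is itself an index in $\{1,\dots,\kappa\}$.

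Next, I would establish a ``prefix suffices'' reduction. For an arbitrary $\mathcal{N}\in\mathcal{A}$, let $k(\mathcal{N})$ be the largest integer such that $\{1,2,\dots,k(\mathcal{N})\}\subseteq\mathcal{N}$ (with $k(\mathcal{N})=0$ if $1\notin\mathcal{N}$), and define $\mathcal{N}'=\{1,2,\dots,k(\mathcal{N})\}$. The two facts to prove are (i) $\mathcal{I}_{\mathcal{N}'}=\mathcal{I}_{\mathcal{N}}$ and (ii) $d_{\mathcal{N}'}\le d_{\mathcal{N}}$. Statement (ii) is immediate since $\mathcal{N}'\subseteq\mathcal{N}$. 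For (i), any $i\in\mathcal{I}_{\mathcal{N}}$ satisfies $\mathcal{P}_i\subseteq\mathcal{N}$; but $\mathcal{P}_i$ is a consecutive prefix starting at $1$, so $\mathcal{P}_i\subseteq\mathcal{N}$ forces $\kappa_i\le k(\mathcal{N})$, giving $\mathcal{P}_i\subseteq\mathcal{N}'$ and hence $i\in\mathcal{I}_{\mathcal{N}'}$; the reverse inclusion is trivial. Combining (i) and (ii) with the definitions of $\dot{w}^*_{\mathcal{N}}$ in \eqref{eq_rounded_w} and $\hat{U}_m(\hat{\bf{X}}_{m,\mathcal{N}})$ in \eqref{eq_rounded_u} yields $\hat{U}_m(\hat{\bf{X}}_{m,\mathcal{N}'})\ge\hat{U}_m(\hat{\bf{X}}_{m,\mathcal{N}})$, because the candidate model set is identical while the residual storage $Q_m-d_{\mathcal{N}'}\ge Q_m-d_{\mathcal{N}}$ is no smaller, so every feasible placement for $\mathcal{N}$ remains feasible for $\mathcal{N}'$.

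With the reduction in hand, I would restrict the outer loop to the collection of prefixes $\{\emptyset,\{1\},\{1,2\},\dots,\{1,2,\dots,\kappa\}\}$, whose cardinality is $\kappa+1$. The optimum $\mathcal{N}^*$ of the restricted search coincides with that of the original search by the argument above. Re-running the bookkeeping from the proof of Theorem~\ref{theorem_spec_time} with $|\mathcal{A}|$ replaced by $\kappa+1$ then yields the stated $O((\kappa+1)MI)$ bound. The main obstacle, and the only step that requires care, is the formal verification of $\mathcal{I}_{\mathcal{N}'}=\mathcal{I}_{\mathcal{N}}$: it relies crucially on the hypothesis that each model's shared portion is a \emph{consecutive} bottom prefix (not an arbitrary subset), which is exactly what the bottom-layer freezing technique guarantees.
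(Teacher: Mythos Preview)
Your proposal is correct and follows essentially the same approach as the paper: both arguments observe that under bottom-layer freezing the only useful choices of $\mathcal{N}$ are the $\kappa+1$ prefixes $\emptyset,\{1\},\dots,\{1,\dots,\kappa\}$, so the exhaustive search over $\mathcal{A}$ collapses from $2^{|\mathcal{J}^{\text{sh}}|}$ to $\kappa+1$. Your write-up is considerably more detailed than the paper's (which merely asserts that ``caching any shared layer on an edge server requires that all its preceding layers are also cached''), in particular your explicit verification that $\mathcal{I}_{\mathcal{N}'}=\mathcal{I}_{\mathcal{N}}$ and $d_{\mathcal{N}'}\le d_{\mathcal{N}}$ makes the domination argument airtight.
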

\begin{proof}
    The proof is provided in Appendix \myref{proof_corollary_1} in our supplementary material.
\end{proof} 

\subsubsection{Approximation guarantee analysis}Next, we analyze the approximation guarantee of the proposed TrimCaching Spec Algorithm. The details are summarized in the following theorem.
\begin{theorem}\label{theorem_spec_final}
The TrimCaching Spec algorithm obtains a solution $\hat{{\bf{X}}}$ to $\mathcal{P}1.1$ satisfying $U\left(\hat{{\bf{X}}}\right)\ge\frac{1-\epsilon}{2}U\left({\bf{X}}^*\right)$. 
\end{theorem}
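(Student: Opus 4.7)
The plan is to chain Propositions~\ref{proposition_eq_u_m}, \ref{Successive_greedy_optimal}, and \ref{DP_epsilon} so that the $(1-\epsilon)$ factor of the rounding DP multiplies cleanly with the $\frac{1}{2}$ factor of the successive greedy analysis.

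First, I would invoke Proposition~\ref{proposition_eq_u_m} to write $U(\hat{{\bf{X}}}) = \sum_{m \in \mathcal{M}} \hat{U}_m(\hat{{\bf{X}}}_m)$, converting the global cache-hit ratio into a per-server sum over exactly the quantities that Algorithm~\ref{algorithm_DP} optimizes at Line~\ref{line:successive_p2} of Algorithm~\ref{algorithm_successive_greedy}. Next, I would apply Proposition~\ref{DP_epsilon} term by term: since the rounding DP returns $\hat{{\bf{X}}}_m$ with $\hat{U}_m(\hat{{\bf{X}}}_m) \geq (1-\epsilon)\,\hat{U}_m(\hat{{\bf{X}}}^*_m)$, summing yields $U(\hat{{\bf{X}}}) \geq (1-\epsilon) \sum_m \hat{U}_m(\hat{{\bf{X}}}^*_m)$. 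Finally, I would invoke Proposition~\ref{Successive_greedy_optimal} to bound $\sum_m \hat{U}_m(\hat{{\bf{X}}}^*_m) \geq \frac{1}{2} U({\bf{X}}^*)$. Composing these three inequalities gives the claimed $U(\hat{{\bf{X}}}) \geq \frac{1-\epsilon}{2} U({\bf{X}}^*)$.

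The step I expect to require the most care is the last invocation of Proposition~\ref{Successive_greedy_optimal}. That proposition is phrased under the assumption that \emph{every} sub-problem $\mathcal{P}2.1_m$ is solved exactly throughout the run of Algorithm~\ref{algorithm_successive_greedy}, whereas in our situation $\hat{{\bf{X}}}^*_m$ is only the exact optimum of $\mathcal{P}2.1_m$ conditional on the actual, approximate predecessors $\hat{{\bf{X}}}_1,\ldots,\hat{{\bf{X}}}_{m-1}$ produced by the DP. To bridge this gap, I would revisit the proof of Proposition~\ref{Successive_greedy_optimal} and verify that its $\frac{1}{2}$ bound follows from a per-step optimality argument — namely, that $\hat{U}_m(\hat{{\bf{X}}}^*_m)$ dominates the marginal contribution of ${\bf{X}}^*$ restricted to server $m$ and to requests not yet covered by prior servers — rather than from the predecessors themselves being globally optimal. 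Standard successive-greedy submodular analyses have this per-step structure, so the bound on $\sum_m \hat{U}_m(\hat{{\bf{X}}}^*_m)$ should transfer verbatim. As a safe fallback, one can instead thread the $(1-\epsilon)$ factor from Proposition~\ref{DP_epsilon} into each per-step inequality in the proof of Proposition~\ref{Successive_greedy_optimal} and telescope, arriving at the same $\frac{1-\epsilon}{2}$ ratio in one shot rather than as a product of two separate bounds.
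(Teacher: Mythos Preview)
Your proposal is correct and mirrors the paper's own proof, which likewise chains Proposition~\ref{proposition_eq_u_m}, the per-server bound of Proposition~\ref{DP_epsilon}, and then Proposition~\ref{Successive_greedy_optimal} to obtain $U(\hat{{\bf{X}}})\ge(1-\epsilon)\sum_m\hat U_m(\hat{{\bf{X}}}^*_m)\ge\frac{1-\epsilon}{2}U({\bf{X}}^*)$. You are also right to flag the conditioning subtlety in invoking Proposition~\ref{Successive_greedy_optimal}; the paper applies it directly to $\bigcup_m\hat{{\bf{X}}}^*_m$ without comment, and your observation that its proof is a per-step optimality argument (so the $\tfrac{1}{2}$ bound carries over regardless of whether the predecessors are exact or approximate) is exactly what justifies that step.
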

\begin{proof}
    The proof is provided in Appendix \myref{proof_theorem_spec_final} in our supplementary material.
\end{proof}

It is noted that, when $M=1$, $\mathcal{P}1.1$ is equivalent to $\mathcal{P}2.1_m$. In this case, the TrimCaching Spec algorithm can directly use Algorithm \ref{algorithm_DP} to solve the problem, leading to the following corollary.
\begin{corollary}\label{proposition_spec_same}
    In the single-edge scenario (M=1), the solution to $\mathcal{P}1.1$ produced by the TrimCaching Spec algorithm satisfies $U\left(\hat{\bf{X}}\right)\ge \left(1-\epsilon\right)U\left({\bf{X}^*}\right)$.
 \begin{proof}
The proof directly follows from Proposition \ref{DP_epsilon}. 
 \end{proof}
\end{corollary}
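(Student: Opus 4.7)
The plan is to reduce the single-edge case directly to a single invocation of the rounding DP subroutine, so that the approximation bound from Proposition \ref{DP_epsilon} transfers immediately to the outer problem. The first step is to verify that when $M=1$, problem $\mathcal{P}1.1$ collapses to $\mathcal{P}2.1_1$: the only server index is $m=1$, the storage constraint \eqref{const_1} becomes a single inequality, and the cache hit ratio $U(\mathbf{X})$ loses its product over $m$ since that product has a single factor. Moreover, the indicator $\mathbb{I}_2(1,k,i)$ is initialized to $1$ in Algorithm \ref{algorithm_successive_greedy}, so the objective $\hat{U}_1(\hat{\mathbf{X}}_1)$ defined in \eqref{eq_hat_u} coincides term-by-term with $U(\mathbf{X})$ in the single-server case. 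Hence any optimal solution $\mathbf{X}^*$ to $\mathcal{P}1.1$ is also an optimal solution $\hat{\mathbf{X}}_1^*$ to $\mathcal{P}2.1_1$, and vice versa, with equal objective values.

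Next I would observe that the successive-greedy outer loop of the TrimCaching Spec algorithm runs for only one iteration when $M=1$, so its output $\hat{\mathbf{X}}=\hat{\mathbf{X}}_1$ is exactly what Algorithm \ref{algorithm_DP} returns for server $m=1$. Applying Proposition \ref{DP_epsilon} then yields
\begin{equation}
\hat{U}_1(\hat{\mathbf{X}}_1) \;\ge\; (1-\epsilon)\,\hat{U}_1(\hat{\mathbf{X}}_1^*).
\end{equation}
Combining with the equivalence from the first step, the left-hand side equals $U(\hat{\mathbf{X}})$ and the right-hand side equals $(1-\epsilon)\,U(\mathbf{X}^*)$, giving the claimed bound.

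There is no real obstacle here beyond bookkeeping; the content of the corollary is essentially that the factor of $1/2$ in Theorem \ref{theorem_spec_final} comes entirely from the successive-greedy decomposition across servers (as captured in Proposition \ref{Successive_greedy_optimal}) and disappears when only one server is present. The only point worth double-checking is the equivalence between $U$ and $\hat{U}_1$ when $M=1$, i.e., that the rewriting from the product form $1-\prod_{m\in\mathcal{M}}(1-x_{m,i}\mathbb{I}_1(m,k,i))$ to the linear form $\hat{x}_{m,i}\mathbb{I}_1(m,k,i)\mathbb{I}_2(m,k,i)$ in \eqref{eq_hat_u} is exact for a single-factor product with $\mathbb{I}_2\equiv 1$; this is immediate. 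With that in hand, the corollary follows in one line from Proposition \ref{DP_epsilon}, which matches the author's short justification.
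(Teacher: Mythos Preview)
Your proposal is correct and matches the paper's approach exactly: the text preceding the corollary already notes that $\mathcal{P}1.1$ coincides with $\mathcal{P}2.1_1$ when $M=1$, so the TrimCaching Spec output is just the output of Algorithm~\ref{algorithm_DP}, and Proposition~\ref{DP_epsilon} gives the bound directly. Your additional bookkeeping (checking that $\mathbb{I}_2(1,k,i)\equiv 1$ and that the single-factor product collapses) simply spells out this equivalence.
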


\section{The General Case: Arbitrary Parameter Sharing}
This section examines the general case of $\mathcal{P}1.1$, where models can arbitrarily share parameter blocks, in contrast to the special case with a small fixed number of shared parameter blocks. Formally speaking, the number of shared parameter blocks among models may increase with the scale of the model library, such that searching for all combinations of shared parameter blocks, as required by the TrimCaching Spec algorithm, results in exponential time complexity in terms of the problem scale, which should be avoided in the general case.

To solve $\mathcal{P}1.1$ in the general case, we propose a greedy-based algorithm, TrimCaching Gen, outlined in Algorithm~\ref{algorithm_greedy}. The detailed procedures of Algorithm \ref{algorithm_greedy} are as follows. First, in the $l$-th step, based on the placement decision ${\bf{X}}^{l-1}$ determined in the $l-1$-th step, Algorithm \ref{algorithm_greedy} computes the marginal increase in the cache hit ratio of placing model $i$ on edge server $m$ in Line \ref{line:greedy_delta}, which is expressed as 
\begin{equation}\label{eq_delta_u}
    \Delta U\left({\bf{X}}^{l-1},m,i\right) =  U\left(\rho\left({\bf{X}}^{l-1},m,i\right)\right)-U\left({\bf{X}}^{l-1}\right),
\end{equation}
where
\begin{equation}\label{eq_rho}
    \begin{aligned}
        &\rho\left({\bf{X}}^{l-1},m,i\right)\\
        &=
    \begin{cases}
        {\bf{X}}^{l-1}\setminus \left\{x_{m,i}=0\right\} \cup \left\{x_{m,i}=1\right\}, \text{ if }x_{m,i}=0,\\
        {\bf{X}}^{l-1}, \text{ if }x_{m,i}=1,
    \end{cases}
    \end{aligned}
\end{equation}
represents $x_{m,i}$ in ${\bf{X}}^{l-1}$ is updated to 1 if it was 0, and remains unchanged otherwise. Second, in Line \ref{line:greedy_optimal_m_i}, the algorithm identifies $\{m^*,i^*\}$ which yields the maximum marginal increase in the cache hit ratio, while ensuring that $g_{m^{*}}\left(\rho\left({\bf{X}}_{m^{*}}^{l-1},m^{*},i^{*}\right)\right)$ does not exceed the edge server capacity, where ${\bf{X}}_m^{l-1}$ is the model placement decision of edge server $m$ in the $l-1$-th step based on ${\bf{X}}^{l-1}$, with ${\bf{X}}^{l-1}=\bigcup\limits_{m\in\mathcal{M}}{\bf{X}}^{l-1}_m$. Third, in Line~\ref{line:greedy_x_l}, ${\bf{X}}^{l-1}$ is updated to ${\bf{X}}^l=\rho\left({\bf{X}}^{l-1},m^{*},i^{*}\right)$. Finally, the algorithm repeats the above steps until no feasible pair $\{m,i\}$ can be found that improves the cache hit ratio without violating any capacity constraint.

Next, we establish the following theorems for the TrimCaching Gen algorithm.

\begin{theorem}
    The proposed algorithm solves $\mathcal{P}1.1$ with the time complexity of $O\left(M^2I^2\right)$. 
\end{theorem}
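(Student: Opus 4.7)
The plan is a standard two-level counting argument for the greedy loop in Algorithm~\ref{algorithm_greedy}. First I would bound the number of outer (while) iterations. Each successful iteration sets exactly one binary variable $x_{m^*,i^*}$ from $0$ to $1$ (the case $x_{m,i}=1$ in \eqref{eq_rho} leaves $\rho$ unchanged and, by construction of the argmax step that also requires a strictly positive marginal gain plus a feasibility check, cannot be the chosen pair at a productive iteration). Since there are exactly $M \cdot I$ binary variables in $\mathbf{X}$, the number of productive outer iterations is at most $MI$, and the algorithm terminates the first time no pair $\{m,i\}$ strictly improves $U$ without violating \eqref{const_1}.

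Second, I would bound the per-iteration work. In the $l$-th iteration we enumerate every pair $(m,i) \in \mathcal{M}\times\mathcal{I}$ (at most $MI$ pairs), and for each we (i) evaluate the marginal gain $\Delta U(\mathbf{X}^{l-1},m,i)$ via \eqref{eq_delta_u} and (ii) check the storage feasibility $g_{m}\bigl(\rho(\mathbf{X}_m^{l-1},m,i)\bigr)\le Q_m$ using \eqref{const_1}. Treating one cache-hit-ratio evaluation and one storage-usage evaluation as a single oracle call (as is standard in the submodular-maximization literature cited earlier in the paper), both (i) and (ii) cost $O(1)$ per pair, so each outer iteration costs $O(MI)$, and the argmax plus the update in Lines~\ref{line:greedy_optimal_m_i}--\ref{line:greedy_x_l} adds only $O(MI)$ more.

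Combining the two bounds yields $O(MI) \cdot O(MI) = O(M^2 I^2)$, which is the claimed complexity. The only subtle point, and where I expect a reader might push back, is whether evaluating $\Delta U$ and $g_m$ really should be charged as constant time: strictly speaking, $U(\cdot)$ involves a sum over $\mathcal{K}\times\mathcal{I}$ and $g_m(\cdot)$ a sum over $\mathcal{J}$. I would address this by noting that, because flipping $x_{m,i}$ from $0$ to $1$ only affects the single term indexed by $(m,i)$ inside the inner product of \eqref{const_1} and only the model-$i$ summand in the numerator of $U$, the \emph{incremental} updates to both $U$ and $g_m$ can be maintained in amortized $O(1)$ work per candidate pair using precomputed indicators $\mathbb{I}_1(m,k,i)$ and block-occurrence counts; this keeps the per-pair cost independent of $M$ and $I$ and hence preserves the $O(M^2 I^2)$ bound.
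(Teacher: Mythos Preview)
Your proposal is correct and is essentially the standard two-level counting argument that the paper tacitly relies on: the paper itself omits the proof, stating only that it is ``similar to that of Theorem~\ref{theorem_spec_time}.'' Your bound of at most $MI$ outer iterations (one binary flip per productive pass) times $O(MI)$ candidate evaluations per iteration is exactly the intended decomposition, and your discussion of why the incremental evaluations of $\Delta U$ and $g_m$ can be charged $O(1)$ per candidate via precomputed indicators is more detail than the paper provides.
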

\begin{proof}
    The proof is similar to that of Theorem~\ref{theorem_spec_time}, which is omitted here.
\end{proof}

\begin{algorithm}[!t]
	\caption{TrimCaching Gen Algorithm} 
	\label{algorithm_greedy}
	\LinesNumbered
	\KwIn{$\mathcal{K}$, $\mathcal{I}$, and $\mathcal{M}$.}
	\KwOut{${\bf{X}}$ and $U\left({\bf{X}}\right)$.} 
	{\bf Initialize:}
	$l=0$, ${\bf{X}}^l={\bf{0}}$, and $U\left({\bf{X}}^l\right)=0$.\\
	\While {
    There exists $\left\{m,i\right\}$ such that $g_m\left(\rho\left({\bf{X}}_{m}^{l},m,i\right)\right)\le Q_m$ and $\Delta U\left({\bf{X}}^{l},m,i\right)\ne 0$.}
	{
		$l=l+1$.\\
            Calculate $\Delta U\left({\bf{X}}^{l-1},m,i\right)$ with \eqref{eq_delta_u} for $m\in\mathcal{M}$ and $i\in\mathcal{I}$.\label{line:greedy_delta}\\
			$\left\{m^*,i^*\right\} = \mathop{\arg\max}\limits_{m,i}\{\Delta U\left({\bf{X}}^{l-1},m,i\right) \mid \ g_m\left(\rho\left({\bf{X}}_{m}^{l-1},m,i\right)\right)\le Q_m\}$. \label{line:greedy_optimal_m_i}\\
		${\bf{X}}^l=\rho\left({\bf{X}}^{l-1},m^{*},i^{*}\right)$.\label{line:greedy_x_l}\\
	}
	${\bf{X}}={\bf{X}}^l$.\\
\end{algorithm}\par 
\begin{theorem}
The TrimCaching Gen algorithm obtains a solution ${\bf{X}}$ to $\mathcal{P}1.1$ satisfying $U\left({\bf{X}}\right)\ge\frac{1}{\Gamma}U\left({\bf{X}}^*\right)$, where $\Gamma=\max\{\left|{\bf{X}}\right| \mid g_m\left({\bf{X}}_m\right)\le Q_m,\forall m\in\mathcal{M}\}$, and ${\bf{X}}_m$ is the model placement decision of edge server $m$ in ${\bf{X}}$ with ${\bf{X}}=\bigcup\limits_{m\in\mathcal{M}}{\bf{X}}_m$. 
\end{theorem}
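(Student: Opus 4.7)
The plan is to show that the very first element picked by TrimCaching Gen already attains at least a $1/\Gamma$ fraction of $U({\bf X}^*)$, which, combined with the monotonicity of $U$, immediately gives the claimed bound. The argument rests on two structural facts I would establish at the outset: $U$ is monotone submodular with $U(\emptyset)=0$ (submodularity is from Proposition~\ref{proposition_submodular}, while monotonicity and the normalization are immediate from the definition of the cache hit ratio), and each storage function $g_m$ is monotone nondecreasing, since flipping any $x_{m,i}$ from $0$ to $1$ can only switch additional indicators $[1-\prod_{i\in\mathcal{I}_j}(1-x_{m,i})]$ from $0$ to $1$.

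First I would use the monotonicity of $g_m$ to show that every element of the optimum is, by itself, a feasible placement: for any $(m,i)\in{\bf X}^*$, $g_m(\{(m,i)\})\le g_m({\bf X}^*_m)\le Q_m$, so the singleton placement $\{(m,i)\}$ is feasible. In particular, every element of ${\bf X}^*$ is eligible for the greedy rule in the very first iteration. Second I would invoke the standard subadditive consequence of submodularity for a normalized monotone function: telescoping marginal gains along any enumeration of ${\bf X}^*$ gives $U({\bf X}^*)\le\sum_{e\in{\bf X}^*}U(\{e\})\le |{\bf X}^*|\cdot\max_{e\in{\bf X}^*}U(\{e\})\le\Gamma\cdot\max_{e\in{\bf X}^*}U(\{e\})$, where the last inequality uses $|{\bf X}^*|\le\Gamma$ by the definition of $\Gamma$.

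To close the loop, I would observe that at iteration $l=1$ the algorithm selects the feasible singleton maximizing $U(\{(m,i)\})$, so its marginal gain $\delta_1$ satisfies $\delta_1\ge\max_{e\in{\bf X}^*}U(\{e\})\ge U({\bf X}^*)/\Gamma$; then $U({\bf X})\ge\delta_1$ follows from monotonicity of $U$ (greedy never removes an element and every subsequent marginal gain is nonnegative), which chains with the previous inequalities to yield $U({\bf X})\ge U({\bf X}^*)/\Gamma$.

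The step I expect to be the real crux is the feasible-singleton observation, because if $g_m$ were not monotone (as can happen with general submodular constraints), an element of ${\bf X}^*$ might fit inside ${\bf X}^*$ only thanks to shared parameter blocks with its siblings and need not be a feasible singleton on its own, breaking the comparison to greedy's first pick. A minor edge case to handle separately is when greedy terminates at $l=0$ because no feasible singleton has positive marginal gain; submodularity then forces $U({\bf X}^*)\le\sum_{e\in{\bf X}^*}U(\{e\})=0$, so the claim becomes trivial.
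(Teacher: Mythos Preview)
Your argument is correct. The paper does not supply its own proof here; it simply states that a similar proof can be found in the cited references \cite{iyer2013submodular,iyer2013fast}. Your proposal gives a self-contained elementary derivation: monotonicity of each $g_m$ guarantees every element of ${\bf X}^*$ is a feasible singleton, normalized submodularity of $U$ gives the subadditive bound $U({\bf X}^*)\le |{\bf X}^*|\cdot\max_{e\in{\bf X}^*}U(\{e\})\le\Gamma\cdot\max_{e\in{\bf X}^*}U(\{e\})$, and the first greedy pick already attains that maximum. This is essentially the mechanism behind the cited results specialized to the monotone-constraint setting, so your route is consistent with what the paper intends, just spelled out explicitly rather than deferred. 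Your identification of the monotonicity of $g_m$ as the crucial ingredient (and your handling of the degenerate $l=0$ case) are both on point.
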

\begin{proof}
The proof is omitted here. A similar proof can be found in \cite{iyer2013submodular,iyer2013fast}.
\end{proof}
\begin{remark}
    Since the lower bound decreases with the increasing number of AI models and edge servers, there is no constant approximation guarantee for Algorithm \ref{algorithm_greedy}. This coincides with Proposition \ref{NPhard}, which states that no polynomial-time approximation algorithm exists for the general case.
\end{remark}

\par

\section{Numerical Results}
In this section, we present simulation results for the proposed TrimCaching framework.
\subsection{Simulation Setup}
In the simulation, $K$ users and $M$ edge servers are uniformly distributed in a square area of 1 $\text{km}^\text{2}$. The latency requirements of users for downloading models and performing on-device inference are uniformly distributed over the interval $\left[0.5,1\right]\ \text{s}$ \cite{3gpp.22.874}. The number of users is set to $K=\left\{10,20,30,40,50\right\}$. Each edge server has a coverage radius of 275 m, and the associated user set of edge server $m$ is denoted as $\mathcal{K}_m$. The expected bandwidth and transmit power allocated by edge server $m$ to user $k$ in $\mathcal{K}_m$ are $\bar{B}_{m,k} = \frac{B}{{p}_\text{A}\left|\mathcal{K}_m\right|}$ and $\bar{P}_{m,k} = \frac{P}{{p}_\text{A}\left|\mathcal{K}_m\right|}$, respectively. Here, $B=400$~MHz and $P=43$~dBm~\cite{3gpp.38.104} represent the total bandwidth and transmit power of an edge server, respectively. The user active probability is set to ${p}_\text{A}=0.5$. The communication data rate between edge servers is set to $C_{m,m'}=10\ {\text{Gbps}}$~\cite{gsmabachkaul,8488527}. Besides, we set $\gamma_0$ and $\alpha_0$ in \eqref{eq_communication} to 1 and 4~\cite{wu2024efficient,8886730}, 
respectively. The number of edge servers is set to $M=\left\{6,8,10,12,14\right\}$. Besides, the storage capacity of each edge server is denoted by $Q_m = Q$, which is uniformly set across all servers and varies from 0.5 GB to 5 GB. It is noted that the storage capacity of edge servers can be much larger than 5 GB in reality. However, the model library can also be significantly larger than our constructed parameter-sharing model library, which will be introduced later. Due to our limited computing resources for model fine-tuning, we proportionally reduce the storage capacity of edge servers and the size of the model library, which will not impact the phenomenon observed in the experiments. 

We construct two parameter-sharing model libraries, each derived from a different model family. The first model library is based on the ResNet family, including ResNet-18, ResNet-34, and ResNet-50~\cite{he2016deep}. 
The second model library is based on the GPT family, including GPT-2 small, medium, and large. 
The parameter sharing among AI models in the special and general cases is constructed as follows. 

\begin{itemize}
    \item \textbf{Special case}: In this case, we design parameter sharing in both the ResNet- and the GPT-2-based model libraries. In the ResNet-based model library, we fine-tune the three pre-trained ResNet models on CIFAR100~\cite{krizhevsky2009learning}. For each model structure, we obtain 100 downstream models, each corresponding to one class in CIFAR100. 
    We adopt the bottom-layer freezing techniques to fine-tune these models. The number of frozen bottom layers (each corresponding to a shared parameter block in our system model) for ResNet-18, ResNet-34, and ResNet-50 falls within the ranges of $\left[29,40\right]$, $\left[49,72\right]$, and $\left[87,106\right]$, respectively. 
    
    For the GPT-2-based model library, we fine-tune the three pre-trained GPT-2 models on datasets, including E2E~\cite{DBLP:journals/corr/NovikovaDR17}, WebNLG~\cite{gardent2017creating}, and DART~\cite{nan2021dart}, to create 100 downstream models for each model structure. Moreover, we adopt LoRA to fine-tune these models, where downstream models fine-tuned from the same pre-trained model share its parameters. The sizes of newly introduced trainable parameters by LoRA for the GPT-2 small, medium, and large are:  \{0, 188,348, 335,804, 630,716, 1,220,540, 2,400,188, 4,759,484 \} Bytes, \{0, 470,487, 863,703, 1,650,135, 3,222,999, 6,368,727, 12,660,247\} Bytes, and \{0, 852,175, 1,589,455, 3,064,015, 6,013,135, 11,911,375, 23,708,047\} Bytes, respectively. Additionally, the GPT-2-based model library also includes models fine-tuned via the full-parameter fine-tuning approach.

     \item  \textbf{General case}: 
     In this case, we design parameter sharing in the ResNet-based model library. To support arbitrary parameter sharing among models in the library (shared parameter blocks do not simply come from a small set of pre-trained models, as in the special case, and the number of shared blocks increases as the library grows), we adopt a two-stage fine-tuning strategy. 
     In the first stage, we select multiple superclasses from CIFAR100. For each selected superclass, all parameters of the pre-trained ResNet-18, ResNet-34, and ResNet-50 are fine-tuned using data from all its classes, resulting in one distinct model per structure for each superclass. 
     In the second stage, each first-stage fine-tuned model is further fine-tuned on datasets from classes within a new superclass that is semantically similar to the one used in the first stage. The pairing of superclasses between stage 1 and stage 2 is provided in Table~\ref{table_ft}. Moreover, the second-stage fine-tuning is conducted using bottom-layer freezing, without restricting the number of frozen layers, where each first-stage model is fine-tuned separately on the data of each class, resulting in a total of 45 models.
\end{itemize}
Moreover, the on-device inference latency for the ResNet-based and the GPT-2-based model libraries ranges from 1 ms to 5 ms and from 40 ms to 200 ms, respectively~\cite{3gpp.22.874}. The request probabilities of each user for the models in each model library follow a Zipf distribution~\cite{zipf1929relative}. 
\begin{table}[!t]
	\centering
	\caption{Superclass Selection in Stage 1 and Stage 2}
	\label{table_ft}
	\begin{tabular}{|p{3cm}|p{5cm}|}
		\hline
		\textbf{Stage 1} & \textbf{Stage 2}\\ \hline
		fruit and vegetables & flowers, trees\\ \hline
		medium-sized mammals & large carnivores, large omnivores and herbivores, people, reptiles, small mammals\\ \hline
		vehicles 2 & large man-made outdoor things, vehicles 1\\ \hline
	\end{tabular}
\end{table} \par 

For comparison, three algorithms are considered:
\begin{itemize}
    \item \textbf{TrimCaching Spec}: Algorithm \ref{algorithm_successive_greedy}, developed for the special case. 
     \item  \textbf{TrimCaching Gen}: Algorithm \ref{algorithm_greedy}, developed for the general case but also applicable to the special case.
    \item \textbf{Independent Caching:} AI models are cached independently without considering parameter sharing, referring to traditional content placement schemes \cite{8374917}. 
\end{itemize}
Moreover, $\epsilon$ is set to 0.1 in Algorithm \ref{algorithm_DP} by default. 

All simulation results are averaged from 100 network topologies. For each topology, we further evaluate the cache hit ratio using $10^3$ Rayleigh fading channel realizations. It is noted that the caching decisions are made based on average channel gains, while the performance is examined under Rayleigh fading.



\begin{figure*}[!t]
	\centering
	\subfigure[Cache hit ratio vs. $Q$, where $M=10$ and $K=30$. ]{\includegraphics[width=0.23\textwidth]{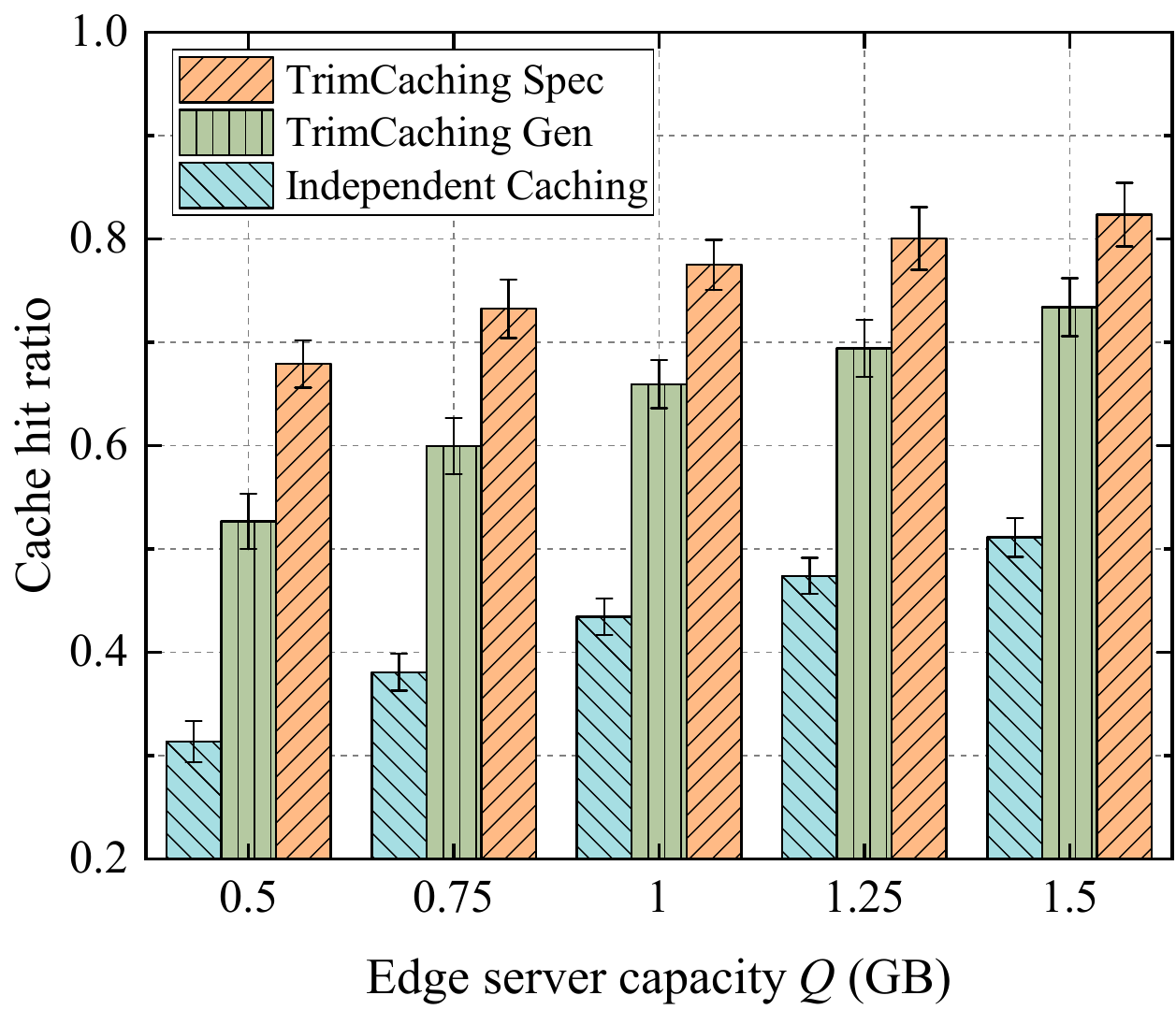}\label{fig_case2_capacity}}
	\quad
	\subfigure[Cache hit ratio vs. $M$, where $Q=1 \ \text{GB}$ and $K=30$.]{\includegraphics[width=0.23\textwidth]{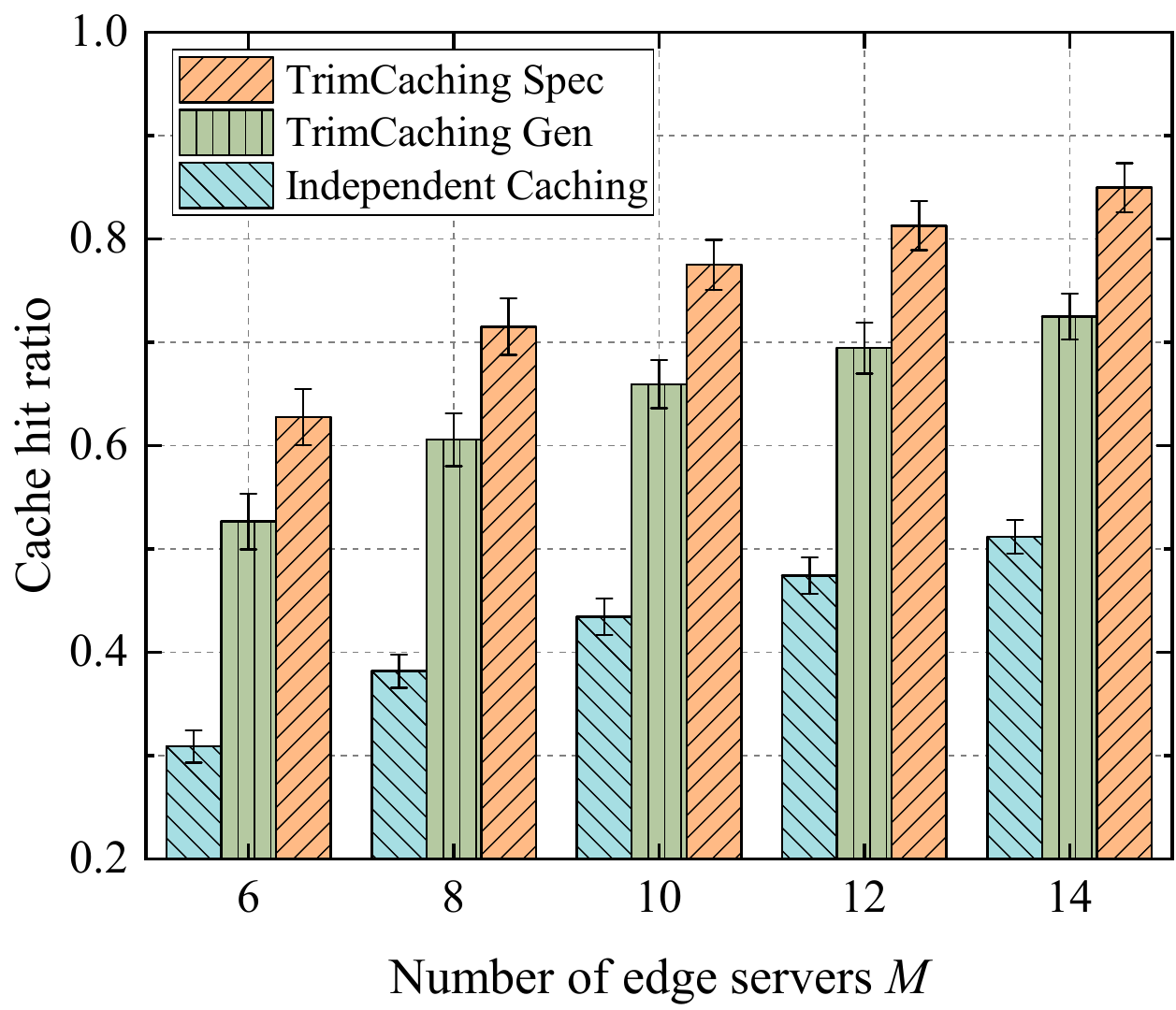}\label{fig_case2_edge}}
	\quad
	\subfigure[Cache hit ratio vs. $K$, where $Q=1 \ \text{GB}$ and $M=10$.]{\includegraphics[width=0.23\textwidth]{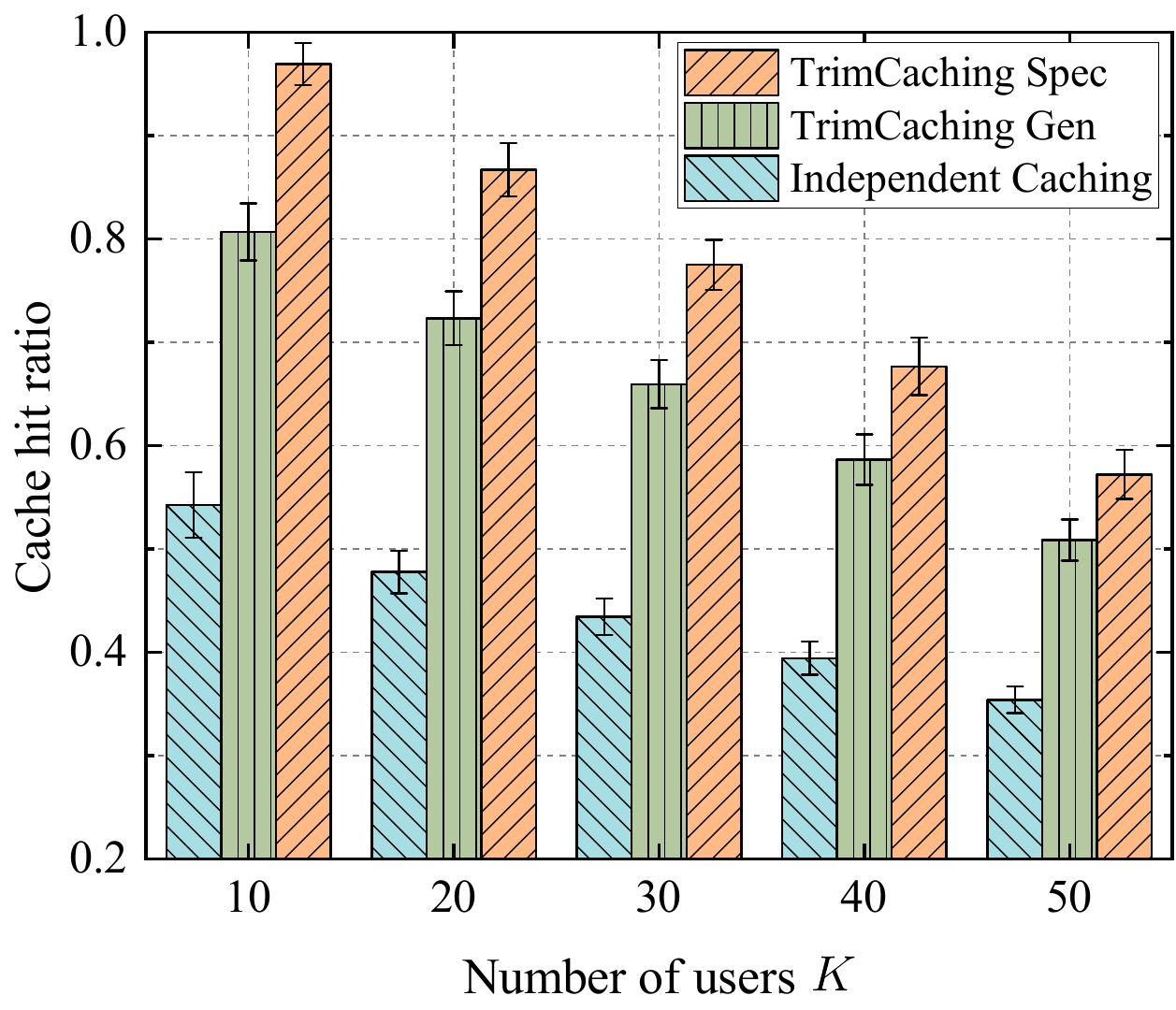}\label{fig_case2_user}}
	\caption{Cache hit ratio in the special case, where a small fixed number of shared parameter blocks is considered, using the ResNet-based model library. The error bar denotes the standard deviation, which is the same for the subsequent figures.}\label{fig_special_case}
\end{figure*}

\begin{figure*}[!t]
	\centering
	\subfigure[Cache hit ratio vs. $Q$, where $M=10$ and $K=30$. ]{\includegraphics[width=0.23\textwidth]{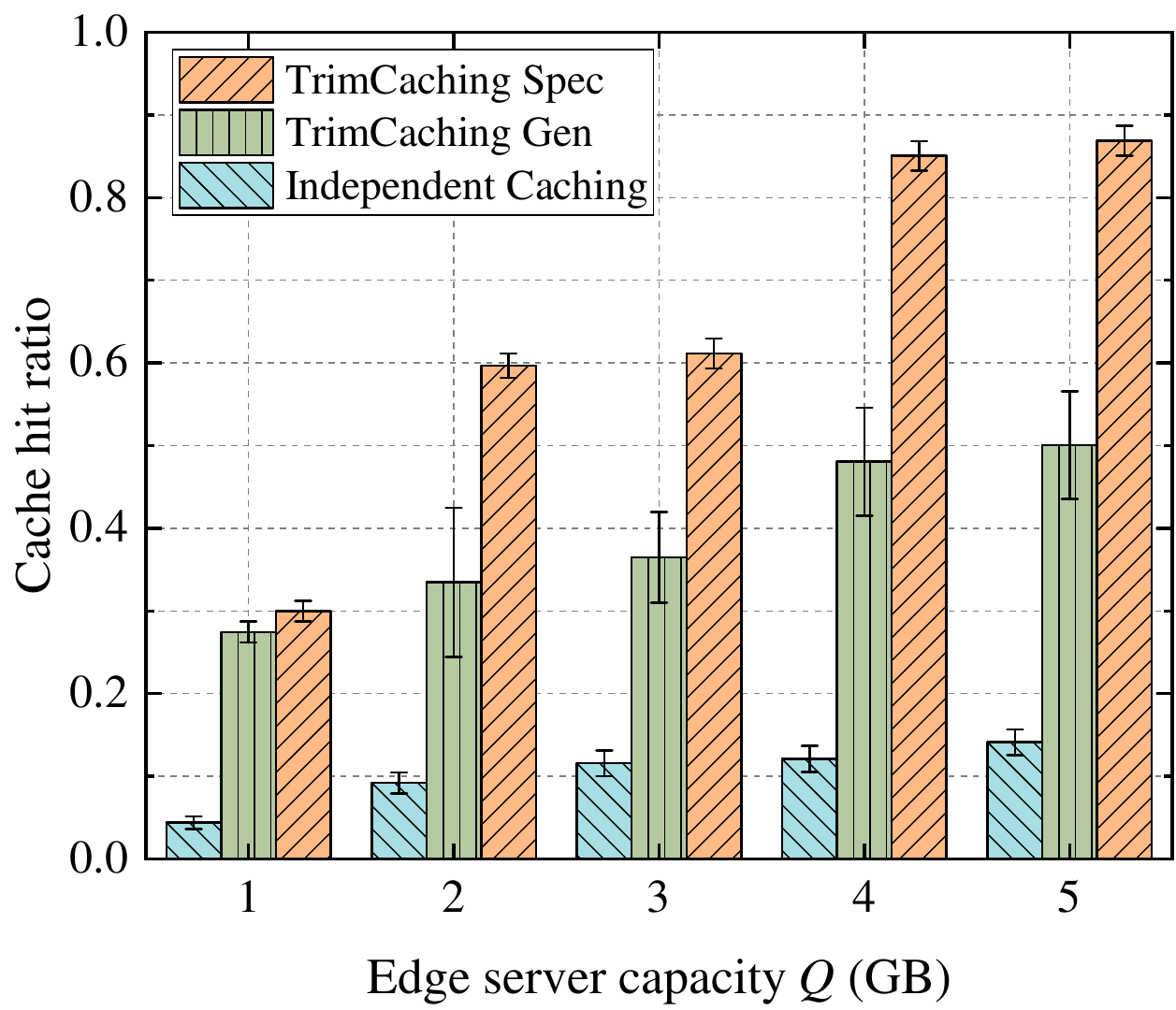}\label{fig_loracase2_capacity}}
	\quad
	\subfigure[Cache hit ratio vs. $M$, where $Q=3 \ \text{GB}$ and $K=30$.]{\includegraphics[width=0.23\textwidth]{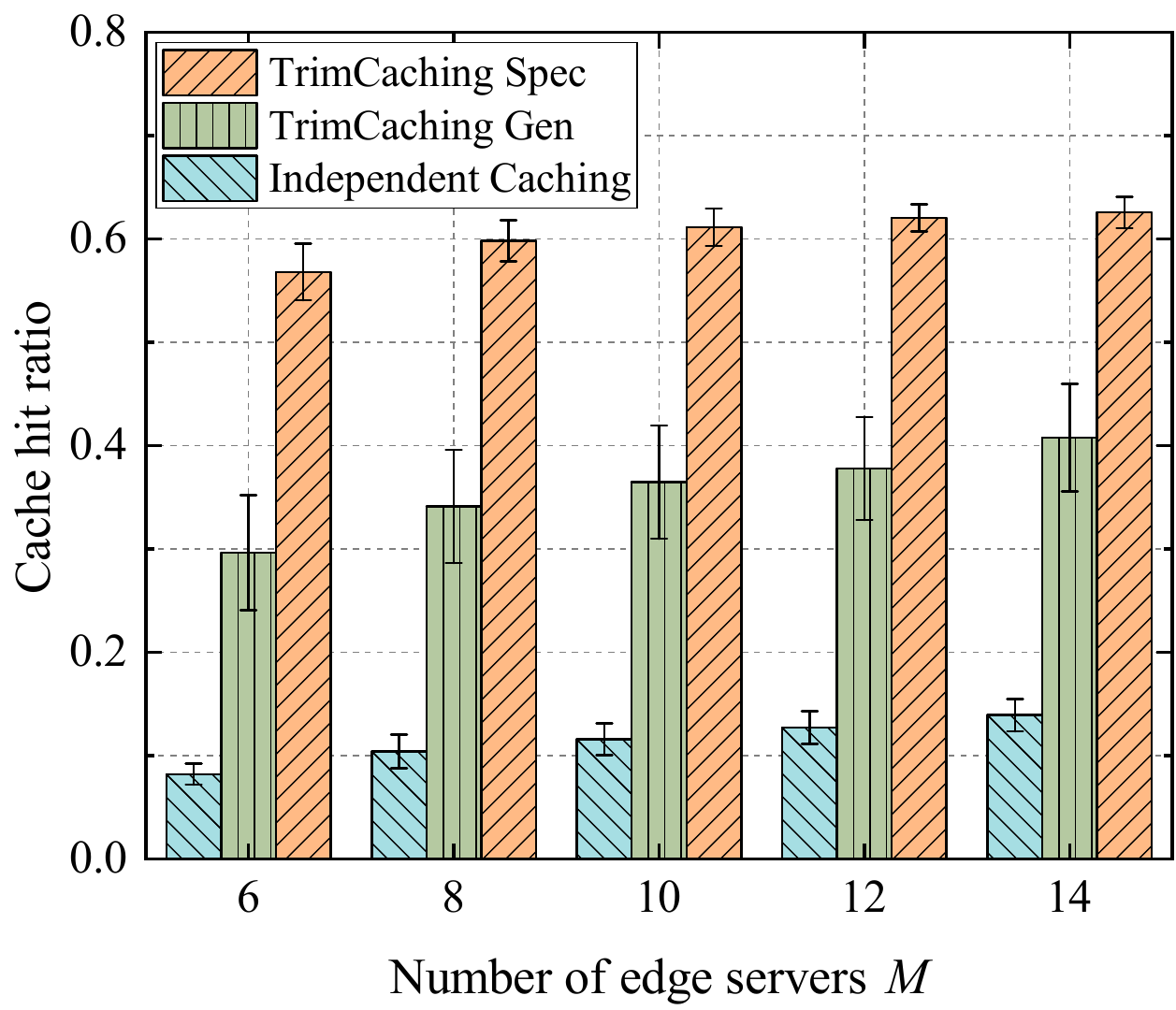}\label{fig_loracase2_edge}}
	\quad
	\subfigure[Cache hit ratio vs. $K$, where $Q=3 \ \text{GB}$ and $M=10$.]{\includegraphics[width=0.23\textwidth]{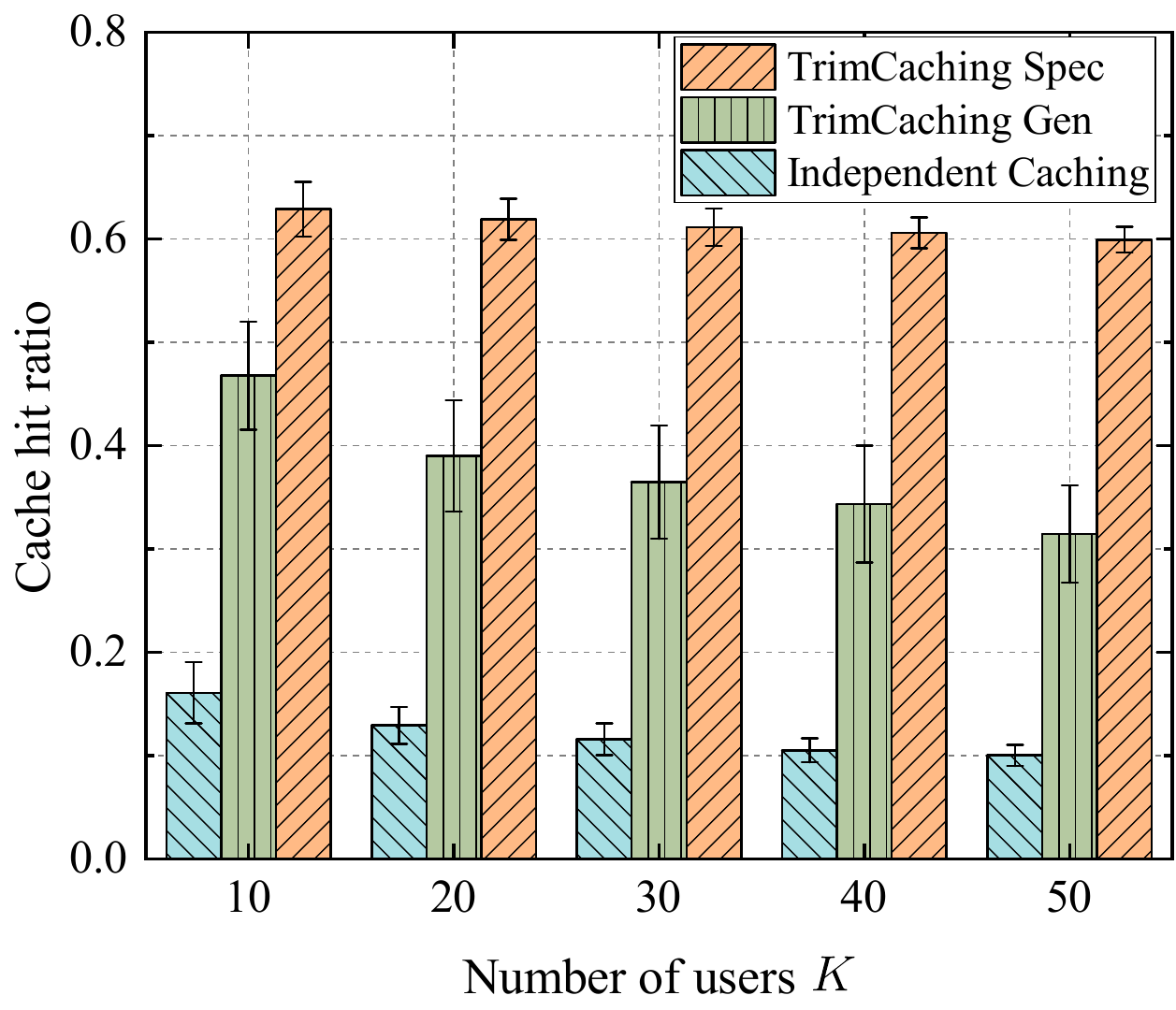}\label{fig_loracase2_user}}
	\caption{Cache hit ratio in the special case using the GPT-2-based model library.}\label{fig_loraspecial_case}
\end{figure*}
\subsection{Performance in the Special Case}
We first compare the algorithms in the special case. Since the TrimCaching Gen algorithm can also be applied to the special case, it is used as a benchmark. Fig. \ref{fig_special_case} presents the cache hit ratio of these algorithms using the ResNet-based model library. As shown in Fig. \ref{fig_case2_capacity}, increasing the storage capacity $Q$ improves the cache hit ratio of all algorithms, as more models can be cached on edge servers. Unsurprisingly, both the TrimCaching Spec and TrimCaching Gen algorithms outperform the Independent Caching framework due to the storage efficiency of parameter-sharing caching. Moreover, the TrimCaching Spec algorithm consistently outperforms the TrimCaching Gen algorithm, as it offers a constant approximation guarantee, whereas the TrimCaching Gen algorithm does not. On average, the TrimCaching Spec algorithm achieves 11.93\% and 33.93\% higher cache hit ratios than the TrimCaching Gen algorithm and the Independent Caching framework, respectively. Moreover, a similar trend can be observed in Fig. \ref{fig_case2_edge}, which varies the number of edge servers.

Fig. \ref{fig_case2_user} evaluates the performance of the algorithms as the number of users increases. As expected, the cache hit ratio decreases as the number of users grows due to reduced transmission data rates. However, the proposed algorithms maintain substantial performance advantages. When $K=50$, the TrimCaching Spec and TrimCaching Gen algorithms improve the cache hit ratio by about 21.81\% and 15.47\%, respectively, over the Independent Caching framework. Besides, the TrimCaching Spec algorithm achieves an average performance gain of 11.50\% compared with the TrimCaching Gen algorithm, demonstrating its superior performance in the special case, while the TrimCaching Gen algorithm remains competitive.  \par 

Fig. \ref{fig_loraspecial_case} demonstrates the cache hit ratio of the algorithms using the GPT-2-based model library, exhibiting trends similar to those shown in Fig. \ref{fig_special_case}. 
The TrimCaching Spec algorithm demonstrates a notable improvement in average cache hit ratios compared with the TrimCaching Gen algorithm and the Independent Caching framework. As illustrated in Fig. \ref{fig_loraspecial_case}, the TrimCaching Spec algorithm achieves higher cache hit ratios than the TrimCaching Gen algorithm and the Independent Caching framework by 25.44\% and 54.26\%, 24.71\% and 49.11\%, 23.68\% and 49.06\%, respectively, under varying storage capacities, numbers of edge servers, and numbers of users.  

\subsection{Performance in the General Case}
\begin{figure*}[!t]
	\centering
	\subfigure[Cache hit ratio vs. $Q$, where $M=10$ and $K=30$. ]{\includegraphics[width=0.23\textwidth]{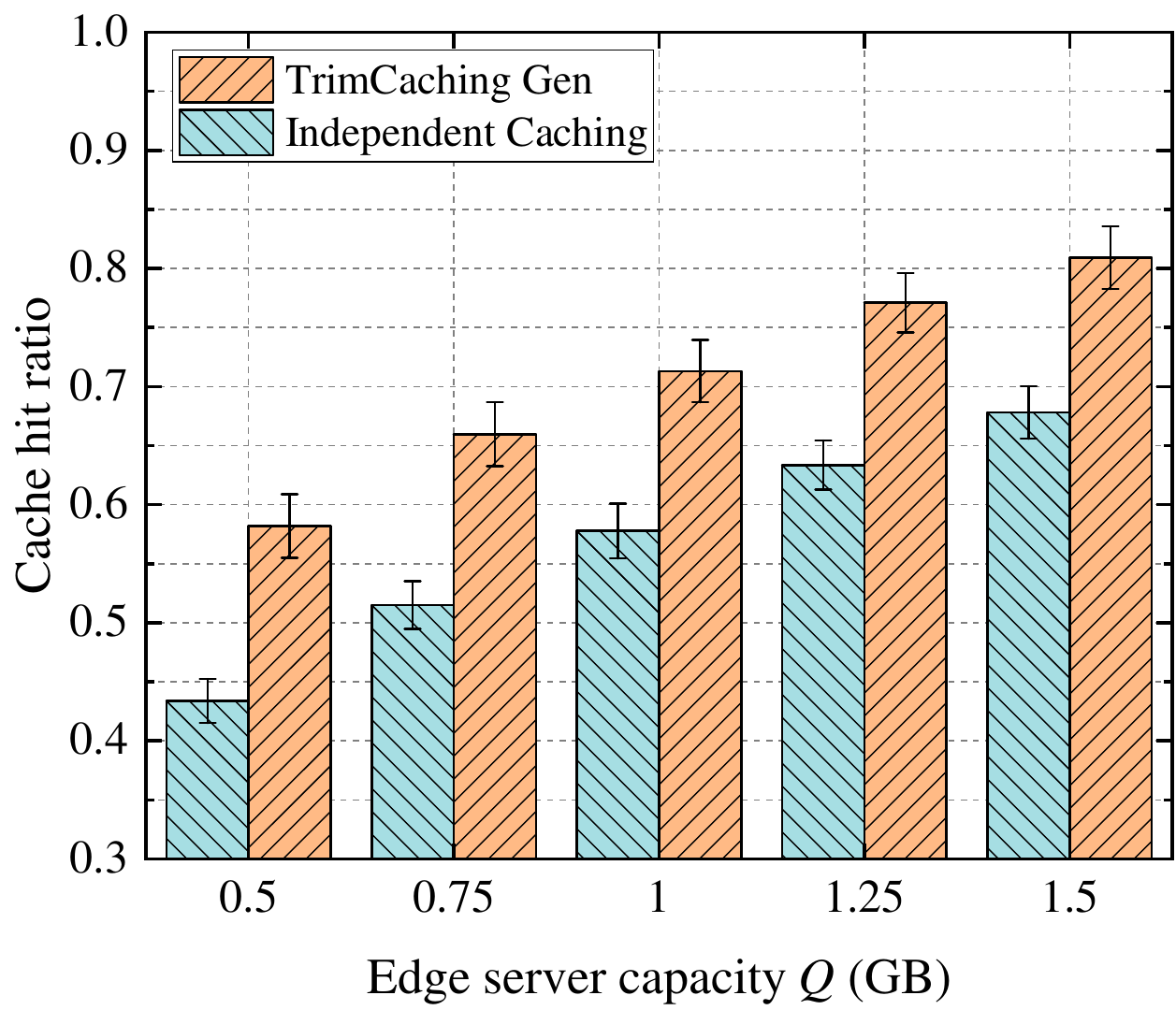}\label{fig_real_case3_capacity}}
	\quad
	\subfigure[Cache hit ratio vs. $M$, where $Q=1 \ \text{GB}$ and $K=30$.]{\includegraphics[width=0.23\textwidth]{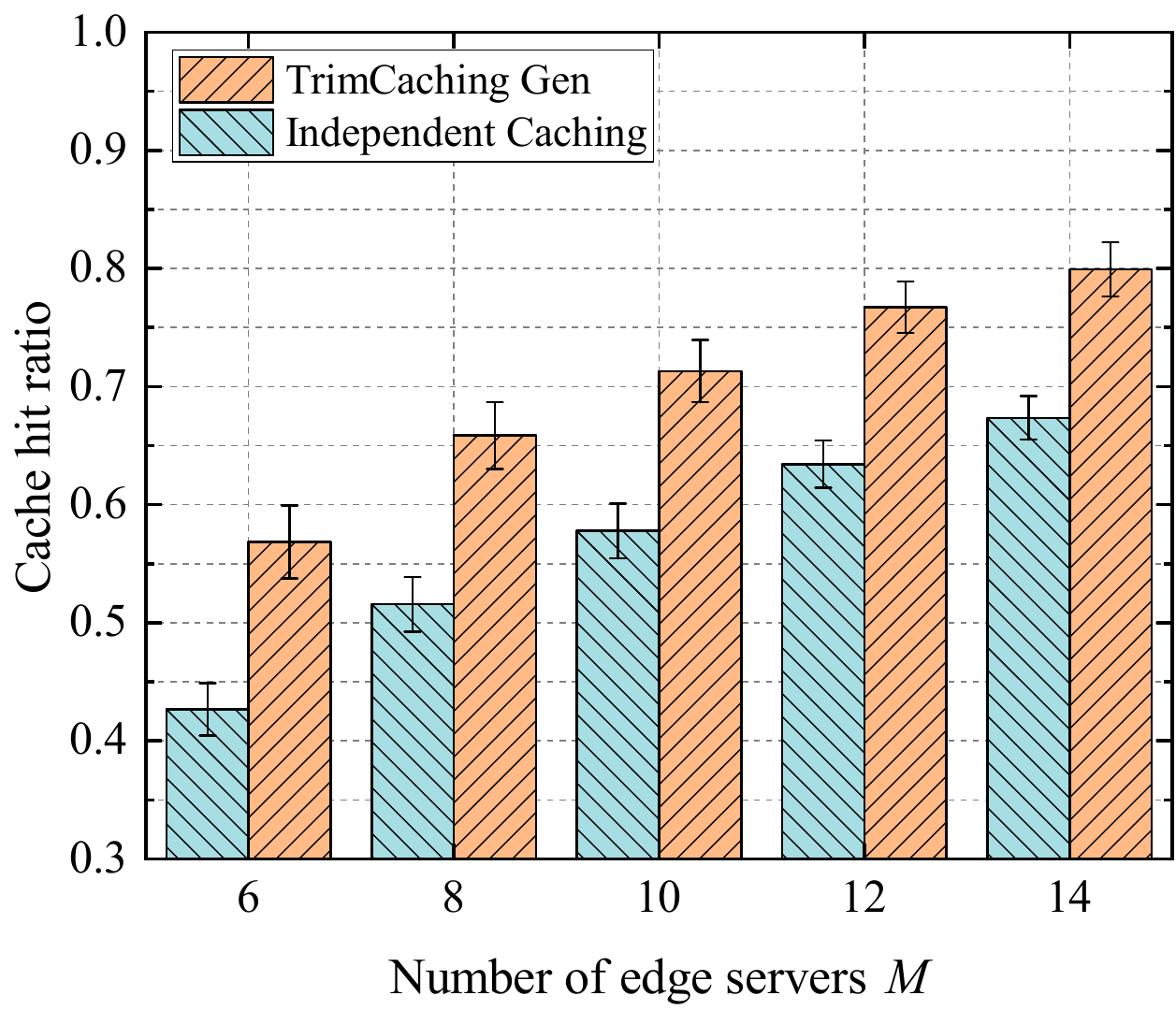}\label{fig_real_case3_edge}}
	\quad
	\subfigure[Cache hit ratio vs. $K$, where $Q=1 \ \text{GB}$ and $M=10$.]{\includegraphics[width=0.23\textwidth]{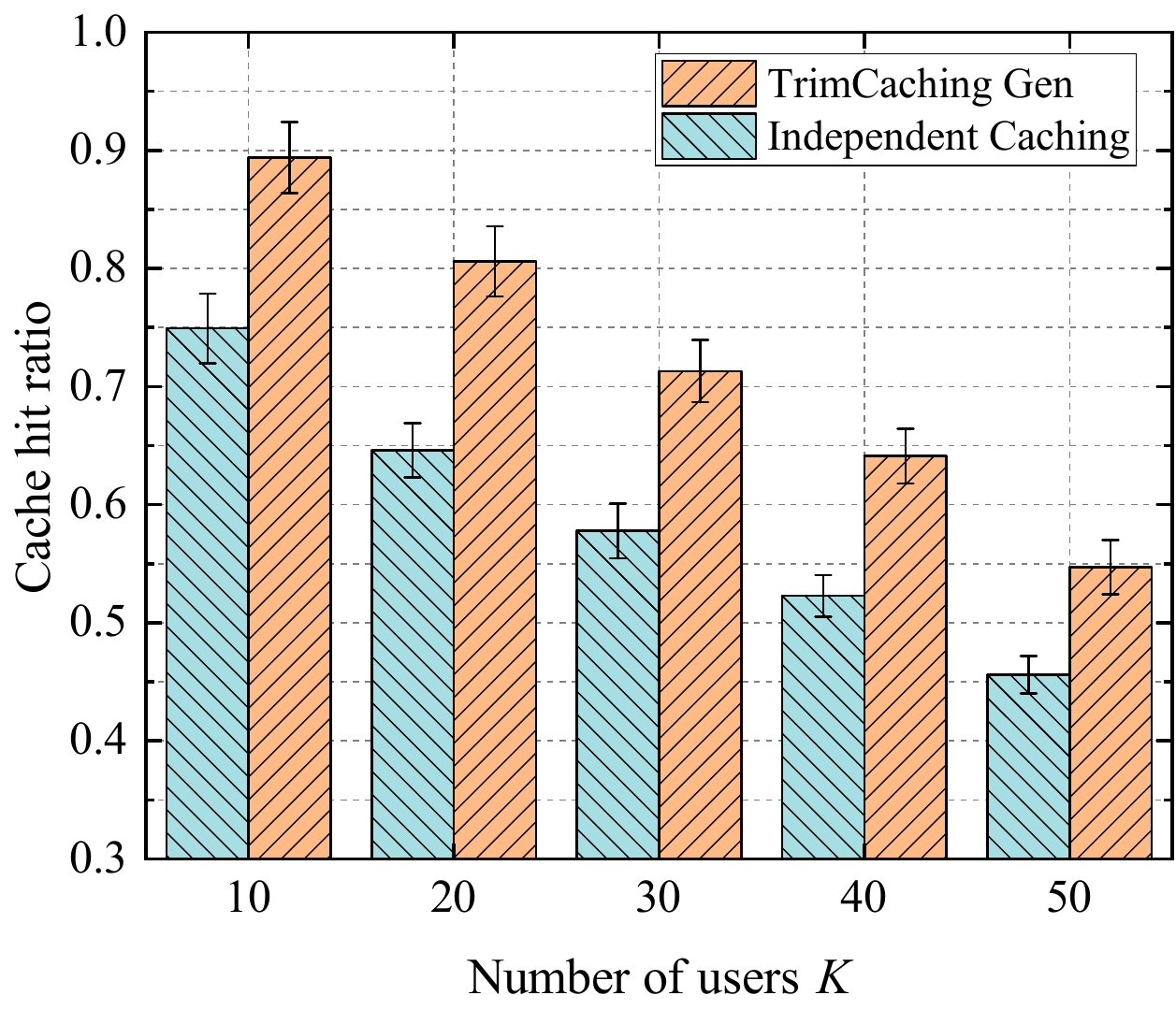}\label{fig_real_case3_user}}
	\quad
	\caption{Cache hit ratio in the general case using the ResNet-based model library.}\label{fig_real_case3}
\end{figure*}
As shown in Fig. \ref{fig_real_case3}, this subsection compares the cache hit ratio of the TrimCaching Gen algorithm and the Independent Caching framework using the ResNet-based model library in the general case. A trend similar to the special case can be observed. As the storage capacity or the number of edge servers increases, the cache hit ratio improves since more models can be cached. Moreover, increasing the number of users leads to a decline in the cache hit ratio due to the limited spectrum bandwidth. More importantly, the TrimCaching Gen algorithm significantly outperforms the Independent Caching framework in different scenarios. 


\subsection{Running Time Comparison}
In this subsection, we compare the running time of the TrimCaching Spec algorithm, the TrimCaching Gen algorithm, and the exhaustive search using the ResNet-based model library. Here, the exhaustive search is employed to obtain the optimal solution, and incurs exponential complexity, i.e., $2^{MKI}$. To run the exhaustive search efficiently, the length and width of the area are reduced to 400 m, and $M$ and $K$ are set to 2 and 6, respectively. Moreover, $\epsilon$ of Algorithm \ref{algorithm_DP} is set to 0 in this subsection. 

\begin{figure}[!t]
	\centering
        \subfigure[Results in the special case, where $Q=0.1\ \text{GB}$ and each user requests 9 models.]{\includegraphics[width=0.23\textwidth]{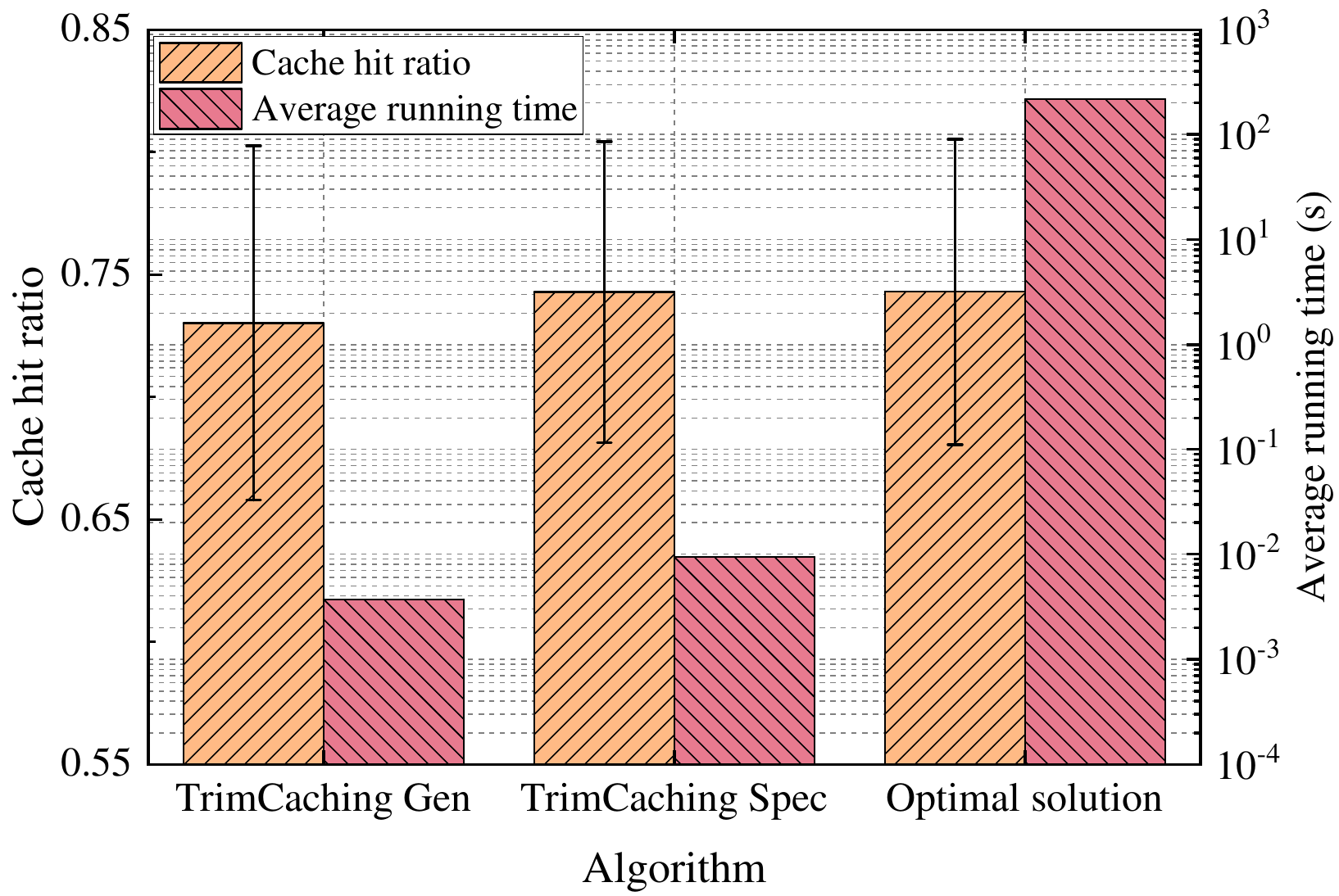}\label{fig_case2_optimal}}
	\quad
	\subfigure[Results in the general case, where $Q=0.2\ \text{GB}$ and each user requests 27 models.]{\includegraphics[width=0.23\textwidth]{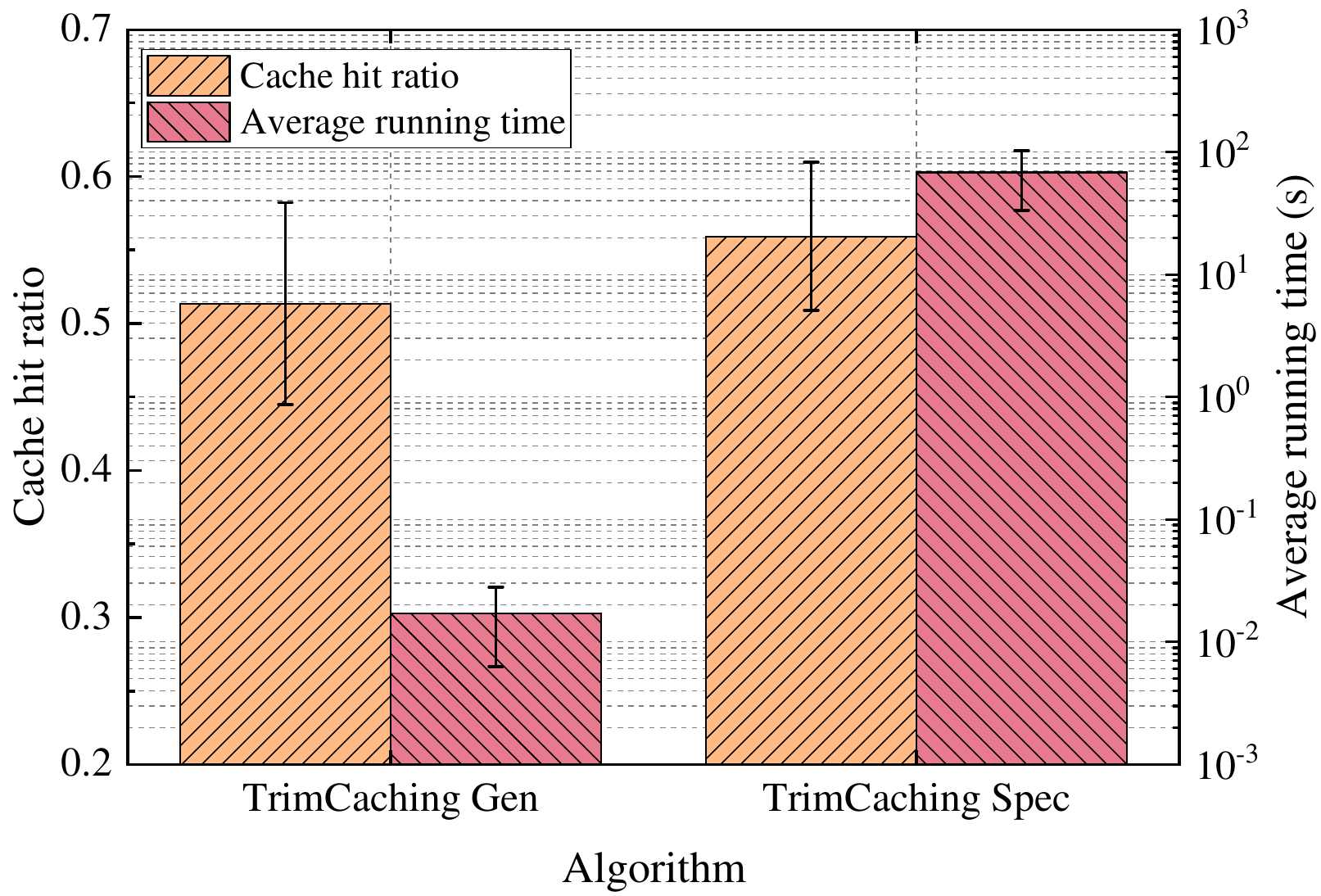}\label{fig_case2_case3}}
 \caption{Cache hit ratio and average running time of different algorithms.}
\end{figure}\label{fig_optimal}
\par 
Fig. \ref{fig_case2_optimal} compares the performance of our algorithms with the exhaustive search in the special case. On the one hand, the TrimCaching Spec algorithm achieves the same cache hit ratio as the optimal solution, while the cache hit ratio of the TrimCaching Gen algorithm is only 1.3\% lower than the optimal solution. On the other hand, the TrimCaching Spec algorithm and the TrimCaching Gen algorithm are, on average, about 22,900 and 58,000 times faster than the exhaustive search, respectively, demonstrating the effectiveness and efficiency of our algorithms in the special case. 
Moreover, Fig. \ref{fig_case2_case3} compares the performance of the TrimCaching Gen algorithm with the TrimCaching Spec algorithm in the general case. The average running time of the TrimCaching Gen algorithm is about 3,900 times faster than the TrimCaching Spec algorithm. This underscores the necessity of designing the TrimCaching Gen algorithm for the general case, since the TrimCaching Spec algorithm exhibits exponential complexity in the general case.
\subsection{Robustness to User Mobility}
In Fig. \ref{fig_mobility}, we demonstrate the robustness of our algorithms over time by considering user mobility. In this subsection, we set $M$, $K$, and $Q$ to 10, 10, and 3 GB, respectively, using the GPT-2-based model library. Moreover, three mobility patterns are considered, representing pedestrians, bikes, and vehicles. The initial speeds of users are randomly drawn from $\left[0.5,1.8\right]\ \text{m/s}$, $\left[2,8\right]\ \text{m/s}$, and $\left[5.5,20\right]\ \text{m/s}$, respectively. In each time slot, the acceleration is selected from $\left[-0.3,0.3\right]\ \text{m/}\text{s}^\text{2}$, $\left[-1,1\right]\ \text{m/}\text{s}^\text{2}$, and $\left[-3,3\right]\ \text{m/}\text{s}^\text{2}$, respectively. The initial movement orientations are uniformly distributed in $\left[0,\pi\right]\ \text{rad}$. The angular velocities range in $\left[-\frac{\pi}{4},\frac{\pi}{4}\right]\ \text{rad/s}$, $\left[-\frac{\pi}{3},\frac{\pi}{3}\right]\ \text{rad/s}$, and $\left[-\frac{\pi}{2},\frac{\pi}{2}\right]\ \text{rad/s}$. Users are assumed to change their speeds/orientations at the beginning of each time slot, and the time slot duration is set to 5 s. \par 
\begin{figure}[!t]
\centerline{\includegraphics[width=0.23\textwidth]{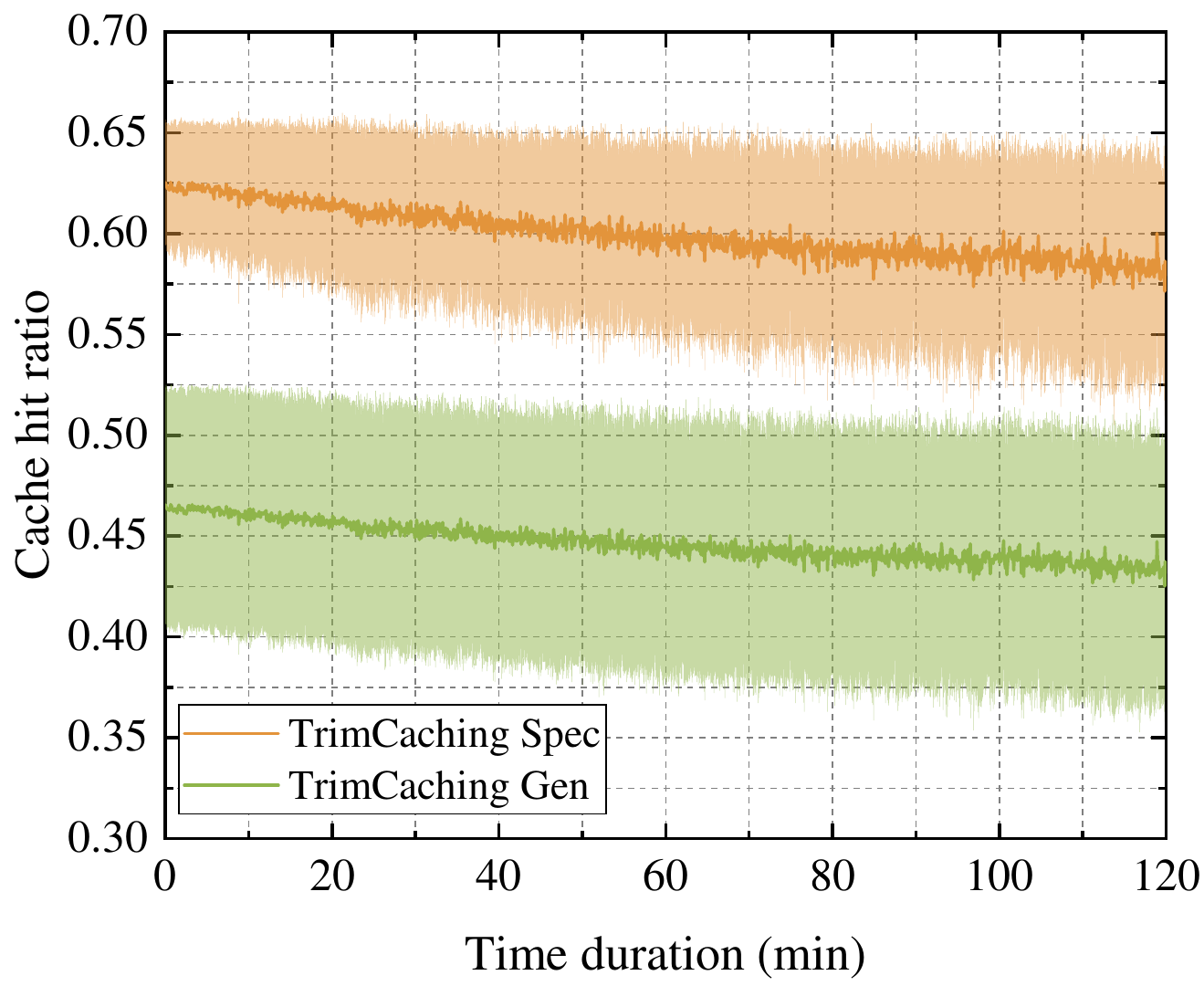}}
	\caption{Impact of user mobility on the cache hit ratio over time.
 }
	\label{fig_mobility}
\end{figure}
Fig. \ref{fig_mobility} illustrates the robustness of the TrimCaching Spec and TrimCaching Gen algorithms in the special case using the GPT-2-based model library. While the cache hit ratio degrades over time due to user mobility, the performance degradation of the TrimCaching Spec and TrimCaching Gen algorithms is only about 3.91\% and 2.83\%, respectively, over 2 hours. This demonstrates the effectiveness of the TrimCaching framework in the long run, even though user mobility has not been explicitly considered in our problem formulation. This also implies that model replacement does not need to be re-conducted frequently, thereby saving backbone bandwidth resources. 
\subsection{Impact of Request Probability Uncertainty on Cache Hit Ratio}

This subsection investigates the effect of the uncertainty of model request probabilities on the cache hit ratio, as the estimated request probabilities used for model caching decision determination may deviate from the true value $p_{k,i}$. Following prior work, we model the estimated request probability of user $k$ for model $i$, denoted by $p'_{k,i}$, as a Beta distribution with mean $p_{k,i}$~\cite{6881261,9053162}. 
In our experiments, the model caching decisions are first determined based on $p'_{k,i}$, while the cache hit ratio is evaluated through Monte Carlo experiments, where each user generates a specific model request according to $p_{k,i}$ in each Monte Carlo experiment. We analyze the cache hit ratio in terms of the coefficient of variation of $p'_{k,i}$.

Fig. \ref{fig_prob} demonstrates that the cache hit ratio of the proposed algorithms decreases slightly as the coefficient of variation of $p'_{k,i}$ increases. Fig. \ref{fig_case2_prob} shows that the cache hit ratio of the TrimCaching Spec and TrimCaching Gen algorithms degrades by 2.20\% and 2.78\%, respectively, as the coefficient of variation of $p'_{k,i}$ increases from 0 to 0.5 in the special case. Besides, in the general case, the cache hit ratio of the TrimCaching Gen algorithm drops by 1.82\% over the same range in Fig.~\ref{fig_case3_prob}. These results demonstrate that the proposed algorithms remain robust with only minor performance degradation even when the estimated request probabilities deviate from their true values substantially.
\begin{figure}[!t]
	\centering
        \subfigure[Results in the special case.]{\includegraphics[width=0.23\textwidth]{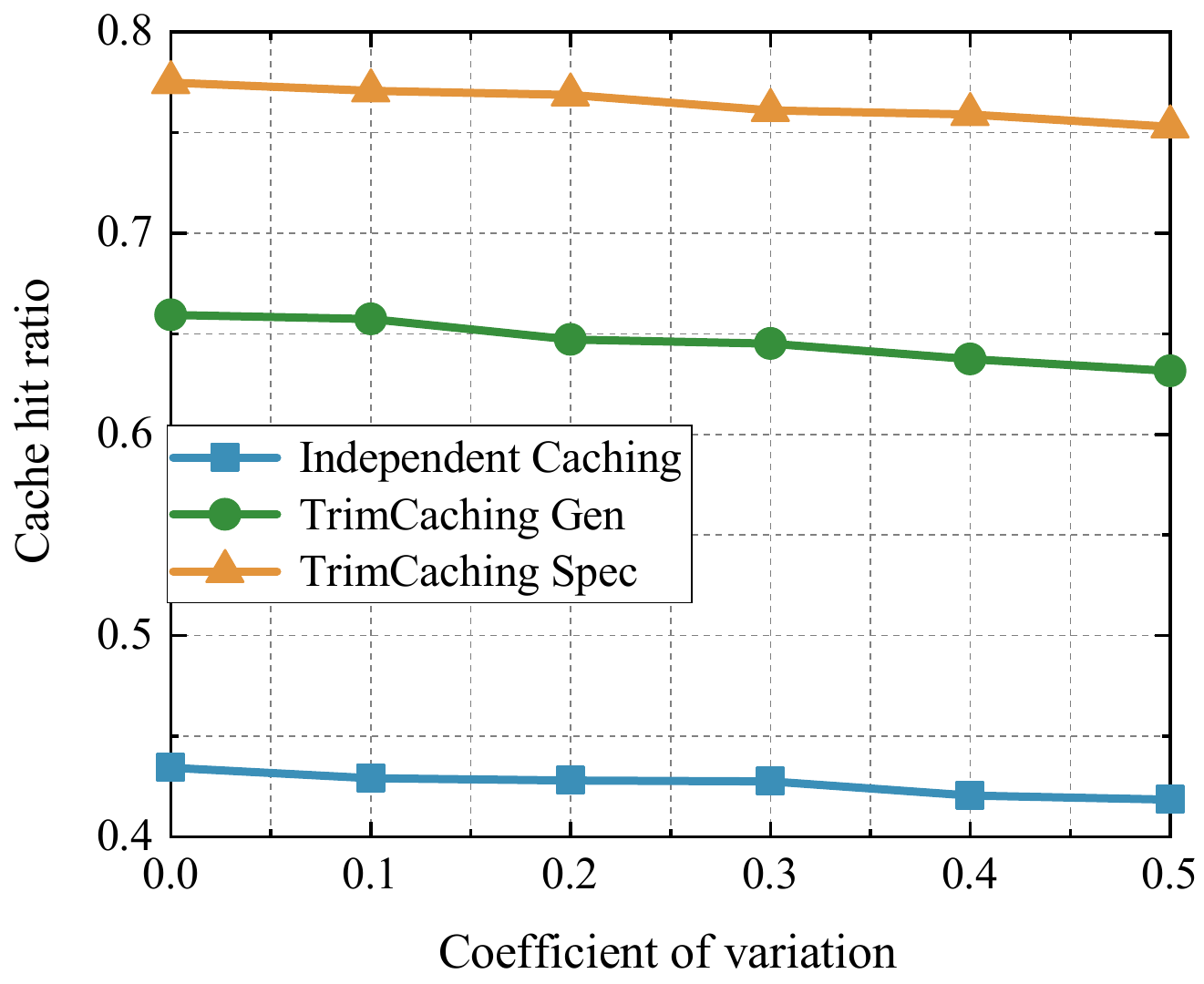}\label{fig_case2_prob}}
	\quad
	\subfigure[Results in the general case.]{\includegraphics[width=0.23\textwidth]{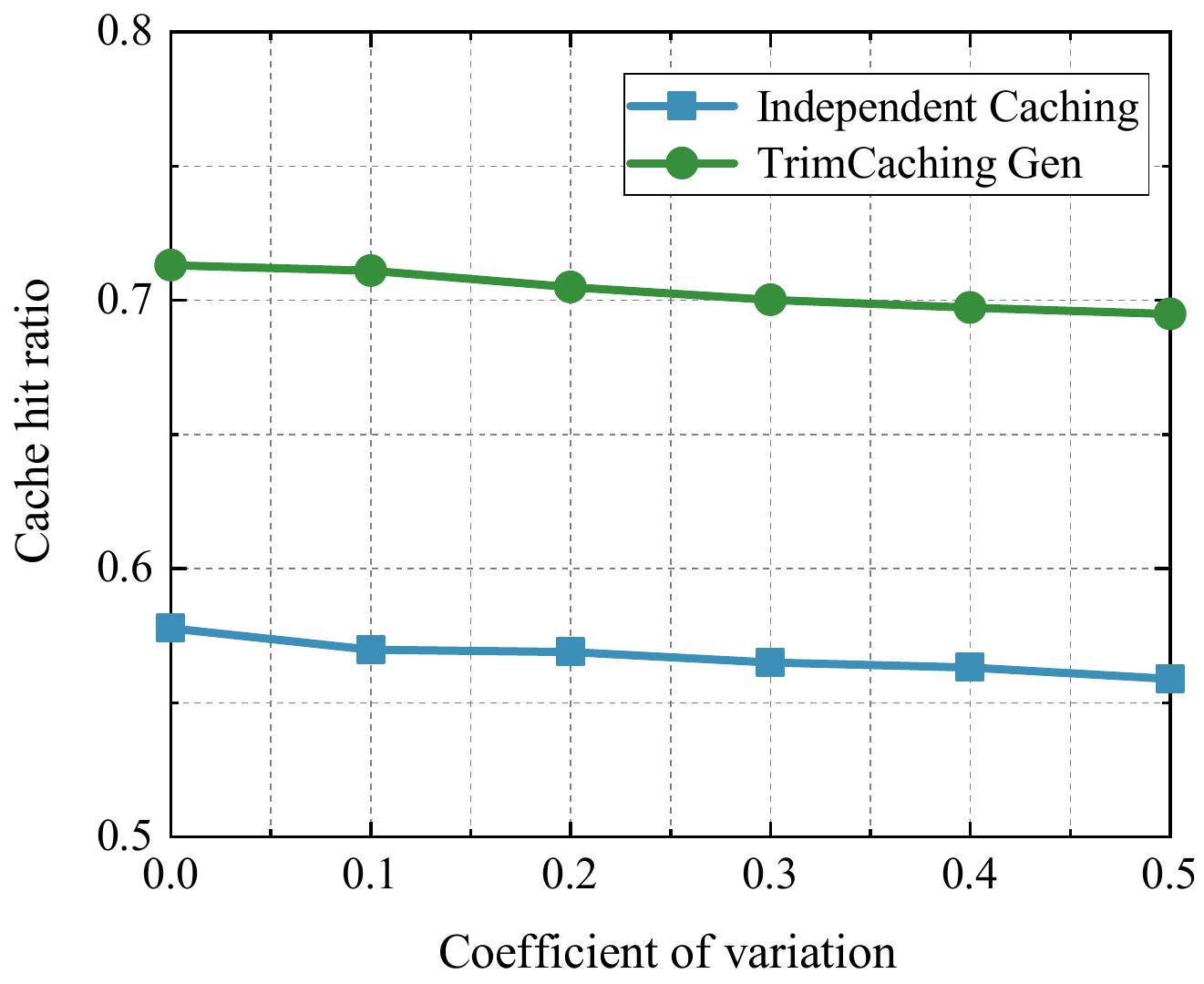}\label{fig_case3_prob}}
 \caption{Cache hit ratio versus the coefficient of variation of estimated request probabilities using the ResNet-based model library, where $Q=$ 1 GB, $K=$ 30, and $M=$ 10.}
 \label{fig_prob}
\end{figure}
\subsection{Comparison with Online Placement Strategies}
In this subsection, we compare the proposed algorithms with two classical online content placement strategies, including the Least Recently Used (LRU) and the Least Frequently Used (LFU) strategies, under both the special and general cases. The experiments are conducted in a dynamic setting, where each user generates one specific model request according to $p_{k,i}$ at the beginning of each 10-second time slot. The placement decisions of the TrimCaching Spec and TrimCaching Gen algorithms are determined once at time 0 based on $p_{k,i}$ and remain unchanged throughout the experiment. In contrast, both LRU and LFU update their placements dynamically according to the user requests at the end of each time slot. Under LRU, models that are least recently requested are removed from the cache, while under LFU, those with the least request frequency are removed~\cite{9187344,8057300}. 
We consider parameter sharing in both LRU and LFU implementations. The cache hit ratio is recorded at the start of each time slot to represent the performance for that slot.

As shown in Fig. \ref{fig_online}, the proposed algorithms consistently outperform the LRU and LFU strategies. Despite using static placement decisions computed only once based on $p_{k,i}$ at the beginning, both the TrimCaching Spec and TrimCaching Gen algorithms achieve significantly higher cache hit ratios. Specifically, after the performance of the LRU and LFU strategies stabilizes (after 30 minutes), the TrimCaching Spec algorithm improves the cache hit ratio by an average of 33.71\% and 21.61\% over LRU and LFU, respectively, while the TrimCaching Gen algorithm achieves improvements of 22.17\% and 10.07\%, respectively, in the special case, as shown in Fig. \ref{fig_case2_online}. Moreover, a similar trend can be observed in Fig. \ref{fig_case3_online}, where the TrimCaching Gen algorithm outperforms LRU and LFU by 26.05\% and 13.20\% on average, respectively, in the general case.
\begin{figure}[!t]
	\centering
        \subfigure[Comparisons in the special case.]{\includegraphics[width=0.23\textwidth]{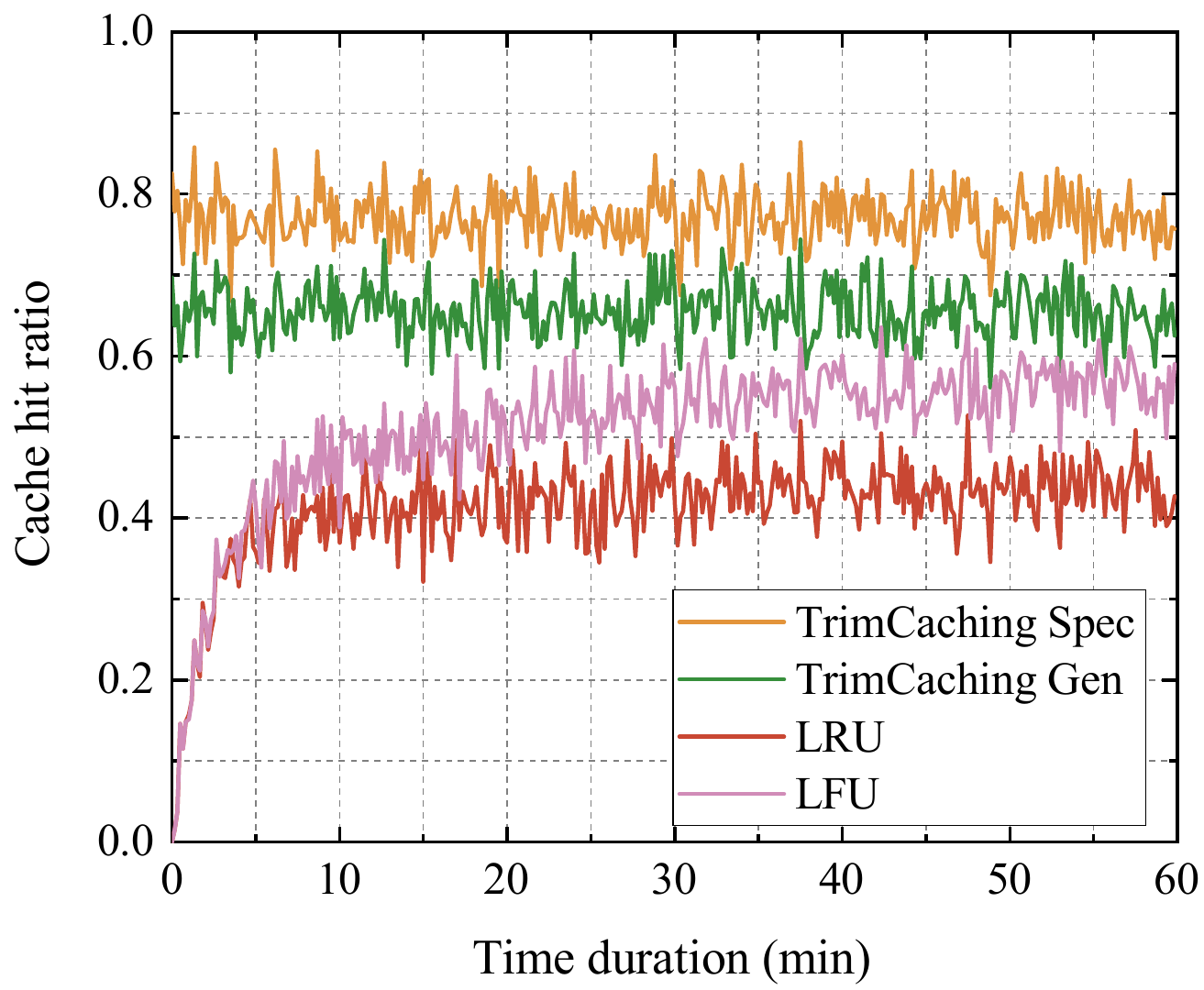}\label{fig_case2_online}}
	\quad
	\subfigure[Comparisons in the general case.]{\includegraphics[width=0.23\textwidth]{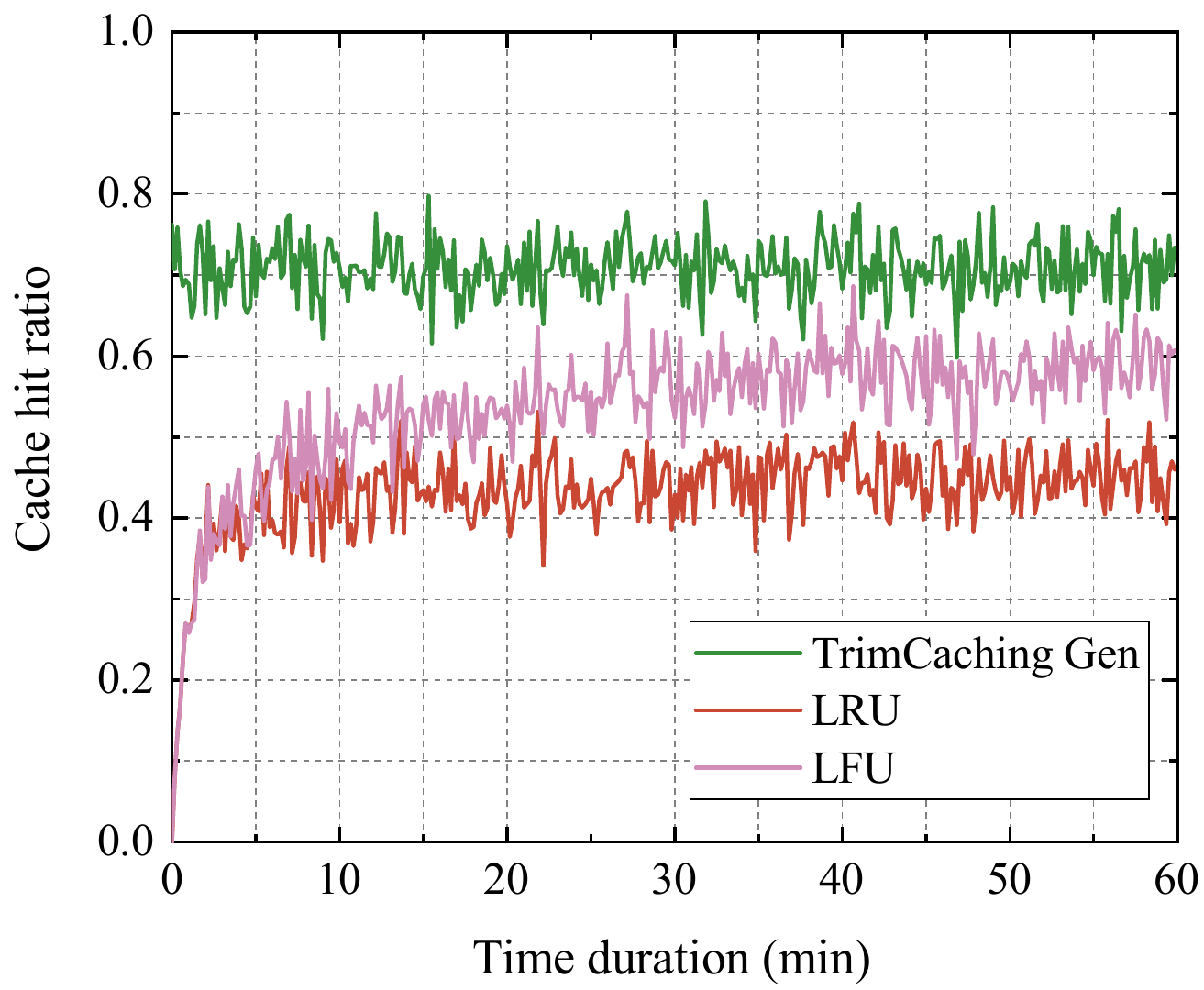}\label{fig_case3_online}}
 \caption{Cache hit ratio comparisons with online placement strategies using the ResNet-based model library, where $Q$, $K$, and $M$ follow those in Fig. \ref{fig_prob}.}
 \label{fig_online}
\end{figure}
\section{Conclusions}

In this paper, we proposed the TrimCaching framework for efficient edge AI model caching. By leveraging shared parameter blocks among AI models for efficient storage, we formulated the parameter-sharing AI model placement problem to maximize the cache hit ratio under user latency requirements and storage constraints in multi-edge scenarios. This paper is the first work to formulate such a problem and identify this problem as a submodular maximization problem with submodular knapsack constraints, which is fundamentally different from conventional edge content caching. 
To tackle this challenge, we first investigated the special case with a small fixed number of shared parameter blocks independent of the problem scale. We developed a polynomial-time rounding DP-based algorithm with $\left(1-\epsilon\right)/2$-approximation guarantee. After that, we addressed the general case by devising a greedy algorithm that is also efficient and effective. Our findings underscore the importance of parameter-sharing model caching for enabling low-latency AI model downloading. We hope this work will inspire further research on edge AI model caching, such as model caching in heterogeneous or hierarchical networks, thereby making it a key service of 6G mobile networks.

\nocite{KELLERER201764,dawande2000approximation,chekuri2005polynomial}

\bibliographystyle{IEEEtran}
\bibliography{IEEEabrv,reference}


\newpage
\clearpage
\setcounter{page}{1}

\begin{appendices}
\section{Proof of Proposition \ref{proposition_submodular}}\label{proof_proposition_submodular}
We begin by introducing a few statements. For any feasible ${\bf{X}}$, let $\eta\left({\bf{X}}\right)=\left\{x_{m,i}\ \middle|\ x_{m,i}=1,x_{m,i}\in{\bf{X}}\right\}$ denote the set of model caching decisions with $x_{m,i}=1$, $\mathcal{I}_{{\bf{X}}}=\left\{i\ \middle| \ x_{m,i}=1, x_{m,i}\in{\bf{X}}\right\}$ denote the set of cached models, $\mathcal{M}_{{\bf{X}}}=\left\{m\ \middle| \ x_{m,i}=1, x_{m,i}\in{\bf{X}}\right\}$ denote the set of edge servers that cache models under ${\bf{X}}$.

We first prove that the objective function of $\mathcal{P}1.1$ is a submodular function. Let ${\bf{X}}$ and ${\bf{X}}'$ be two feasible model placement decisions in $\mathcal{P}1.1$ with $\eta\left({\bf{X}}\right)\subseteq\eta\left({\bf{X}}'\right)\subseteq{\bf{V}}$, where ${\bf{V}}$ is the universal set that contains all feasible model placement results for $\mathcal{P}1.1$. For any $x_{m,i}\in{\bf{V}}\setminus\eta\left({\bf{X}}'\right)$, the marginal increase in cache hit ratios of updating $x_{m,i}=0$ to $x_{m,i}=1$ in ${\bf{X}}'$ and ${\bf{X}}$ are 
\begin{equation}
    \begin{aligned}
        \Delta U\left({\bf{X}}',m,i\right) &=U\left({\bf{X}}'\setminus\left\{x_{m,i}=0\right\}\cup\left\{x_{m,i}=1\right\}\right) - U\left({\bf{X}}'\right)\\
        &=U\left(\eta\left({\bf{X}}'\right)\cup\left\{x_{m,i}\right\}\right) - U\left(\eta\left({\bf{X}}'\right)\right),
    \end{aligned}
\end{equation}
and 
\begin{equation}
    \begin{aligned}
        \Delta U\left({\bf{X}},m,i\right) &=U\left({\bf{X}}\setminus\left\{x_{m,i}=0\right\}\cup\left\{x_{m,i}=1\right\}\right) - U\left({\bf{X}}\right)\\
        &=U\left(\eta\left({\bf{X}}\right)\cup\left\{x_{m,i}\right\}\right) - U\left(\eta\left({\bf{X}}\right)\right),
    \end{aligned}
\end{equation}
respectively. Moreover, since $\eta\left({\bf{X}}\right)\subseteq\eta\left({\bf{X}}'\right)$, then we have $\mathcal{M}_{{\bf{X}}}\subseteq\mathcal{M}_{{\bf{X}}'}$ and $\mathcal{I}_{{\bf{X}}}\subseteq\mathcal{I}_{{\bf{X}}'}$, implying that $\mathcal{M}_{{\bf{X}}'}$ can provide at least as many models as $\mathcal{M}_{{\bf{X}}}$ to users. We consider the following three cases for $\Delta U\left({\bf{X}}',m,i\right)$ and $\Delta U\left({\bf{X}},m,i\right)$.
\begin{itemize}
    \item If users can download model $i$ from some edge server in $\mathcal{M}_{{\bf{X}}}$, where the servers in $\mathcal{M}_{{\bf{X}}}$ collectively cache $\mathcal{I}_{{\bf{X}}}$, within latency requirements, then they can also download it from some edge server in $\mathcal{M}_{{\bf{X}}'}$, where the servers in $\mathcal{M}_{{\bf{X}}'}$ collectively cache $\mathcal{I}_{{\bf{X}}'}$, within latency requirements. This implies that $\Delta U\left({\bf{X}}',m,i\right)=\Delta U\left({\bf{X}},m,i\right)=0$.
    \item If users cannot download model $i$ from any edge server in either $\mathcal{M}_{{\bf{X}}}$ or $\mathcal{M}_{{\bf{X}}'}$ within latency requirements, then $\Delta U\left({\bf{X}}',m,i\right)=\Delta U\left({\bf{X}},m,i\right)$.
    \item If users cannot download model $i$ from any edge server in $\mathcal{M}_{{\bf{X}}}$ within latency requirements, while they can download it from some edge server in $\mathcal{M}_{{\bf{X}}'}$ within latency requirements, then $0=\Delta U\left({\bf{X}}',m,i\right)\le \Delta U\left({\bf{X}},m,i\right)$.
\end{itemize}
Therefore, we have $U\left(\eta\left({\bf{X}}\right)\cup\left\{x_{m,i}\right\}\right) - U\left(\eta\left({\bf{X}}\right)\right) \ge U\left(\eta\left({\bf{X}}'\right)\cup\left\{x_{m,i}\right\}\right) - U\left(\eta\left({\bf{X}}'\right)\right)$ when $\eta\left({\bf{X}}\right)\subseteq\eta\left({\bf{X}}'\right)\subseteq{\bf{V}}$. Hence, the objective function of $\mathcal{P}1.1$ is a submodular function.

Next, we prove that the constraint function in $\mathcal{P}1.1$ is a set of submodular functions. Assume that ${\bf{X}}^1_{m}$ and ${\bf{X}}^2_{m}$ are two feasible model placement decisions for edge server $m$ such that $\eta\left({\bf{X}}^1_{m}\right)\subseteq\eta\left({\bf{X}}^2_{m}\right)\subseteq{\bf{V}}_{m}$, where ${\bf{V}}_{m}$ is the universal set of all feasible model placement results for edge server $m$ in $\mathcal{P}1.1$. For any $x_{m,i}\in{\bf{V}}_{m}\setminus\eta\left({\bf{X}}^2_{m}\right)$, the marginal increase in storage cost when updating $x_{m,i}=0$ to $x_{m,i}=1$ in ${\bf{X}}^2_{m}$ and ${\bf{X}}^1_{m}$ is 
\begin{equation}
    \Delta g\left({\bf{X}}^2_{m},i\right)= g_m\left(\eta\left({\bf{X}}^2_{m}\right)\cup\left\{x_{m,i}\right\}\right)-g_m\left(\eta\left({\bf{X}}^2_{m}\right)\right),
\end{equation}
and
\begin{equation}
    \Delta g\left({\bf{X}}^1_{m},i\right)= g_m\left(\eta\left({\bf{X}}^1_{m}\right)\cup\left\{x_{m,i}\right\}\right)-g_m\left(\eta\left({\bf{X}}^1_{m}\right)\right),
\end{equation}
respectively. Since $\eta\left({\bf{X}}^1_{m}\right)\subseteq\eta\left({\bf{X}}^2_{m}\right)$, we have $\mathcal{I}_{{\bf{X}}^1_{m}}\subseteq\mathcal{I}_{{\bf{X}}^2_{m}}$. We consider the following cases for $\Delta g\left({\bf{X}}^2_{m},i\right)$ and $\Delta g\left({\bf{X}}^1_{m},i\right)$.
\begin{itemize}
    \item If $i\in \mathcal{I}_{{\bf{X}}^1_{m}}$, then $i\in \mathcal{I}_{{\bf{X}}^2_{m}}$ holds, implying $\Delta g\left({\bf{X}}^2_{m},i\right)=\Delta g\left({\bf{X}}^1_{m},i\right)=0$.
    \item If $i\notin\mathcal{I}_{{\bf{X}}^1_{m}}$ and $i\notin\mathcal{I}_{{\bf{X}}^2_{m}}$, we can derive that $\Delta g\left({\bf{X}}^2_{m},i\right)=\Delta g\left({\bf{X}}^1_{m},i\right)$.
    \item If $i\notin \mathcal{I}_{{\bf{X}}^1_{m}}$ and $i\in \mathcal{I}_{{\bf{X}}^2_{m}}$, then $0=\Delta g\left({\bf{X}}^2_{m},i\right)\le \Delta g\left({\bf{X}}^1_{m},i\right)$.
\end{itemize}
Therefore, we have $g_m\left(\eta\left({\bf{X}}^1_{m}\right)\cup\left\{x_{m,i}\right\}\right)-g_m\left(\eta\left({\bf{X}}^1_{m}\right)\right)\ge g_m\left(\eta\left({\bf{X}}^2_{m}\right)\cup\left\{x_{m,i}\right\}\right)-g_m\left(\eta\left({\bf{X}}^2_{m}\right)\right)$ when $\eta\left({\bf{X}}^1_{m}\right)\subseteq\eta\left({\bf{X}}^2_{m}\right)\subseteq{\bf{V}}_{m}$. This implies that $g_m\left({\bf{X}}_{m}\right)$ is a submodular function and the constraint function of $\mathcal{P}1.1$ is a set of submodular functions. This completes the proof.

\section{Proof of Proposition \ref{NPhard}}\label{proof_of_NPhard}

Let a binary variable $y_{m,j}$ represent the caching status of parameter block $j$ on edge server $m$, where $y_{m,j}=1$ indicates that parameter block $j$ is cached on edge server $m$. Therefore, the relationships between $x_{m,i}$ and $y_{m,j}$ are given by $x_{m,i}=\prod\limits_{j\in\mathcal{J}_i}y_{m,j}$ and $y_{m,j} = 1-\prod\limits_{i\in\mathcal{I}_j}\left(1-x_{m,i}\right)$. Based on these, $U\left(\bf{X}\right)$ can be equivalently expressed as 
$U_{1}\left(\bf{Y}\right)=\frac{\sum\limits_{k\in\mathcal{K}}\sum\limits_{i\in\mathcal{I}}p_{k,i}\left[1-\prod\limits_{m\in\mathcal{M}}\left(1-\prod\limits_{j\in\mathcal{J}_i}y_{m,j}{\mathbb{I}}_{1}\left(m,k,i\right)\right)\right]}{\sum\limits_{k\in\mathcal{K}}\sum\limits_{i\in\mathcal{I}}{p}_{k,i}}$, where ${\bf{Y}}=\left\{y_{m,j}\ \middle| \ m\in\mathcal{M}, j\in\mathcal{J}\right\}$. Accordingly, ${\mathcal{P}1.1}$ can be equivalently transformed to a supermodular maximization problem with multiple knapsack constraints, which is expressed as
\begin{subequations}
		\begin{equation}
			{\mathcal{P}1.2}:\ \mathop{\max}\limits_{\bf{Y}}\ U_{1}\left(\bf{Y}\right)
		\end{equation}	
		\begin{equation} \label{eq_1_2_const1}
			{\rm{s.t.}} \ \sum\limits_{j\in\mathcal{J}} D'_jy_{m,j}\le Q_m,\ \forall m\in{\mathcal{M}},
		\end{equation}	
		\begin{equation} 
			y_{m,j}\in\left\{0,1\right\},\ \forall m\in{\mathcal{M}},\forall j\in{\mathcal{J}},
		\end{equation}	
	\end{subequations}
where \eqref{eq_1_2_const1} is a set of knapsack constraints. The proof that the objective function in $\mathcal{P}1.2$ is supermodular is omitted, as it follows directly from the proof of Proposition~\ref{proposition_submodular}.

It has been shown that the supermodular maximization problem with multiple knapsack constraints is NP-hard, and there is no approximation algorithm with a constant approximation guarantee to solve this problem even when $M=1$~\cite{KELLERER201764}. Therefore, no polynomial-time algorithm with a constant approximation guarantee can solve $\mathcal{P}1.2$, and the same hardness result holds for $\mathcal{P}1.1$. This completes the proof.

\section{Proof of Proposition \ref{proposition_eq_u_m}}\label{proof_proposition_eq_u_m}
The second equality in \eqref{eq_u_m} holds due to the following reasons. First, $\mathbb{I}_{2}\left(m,k,i\right)$ ensures that model requests already served by the first $m-1$ edge servers are not counted in the cache hit ratio of edge server $m$. Second, Algorithm \ref{algorithm_successive_greedy} solves each sub-problem $\mathcal{P}2.1_{m}$ sequentially in ascending order of $m$, and the caching decisions of the first $m-1$ edge servers (i.e., $\hat{x}_{m',i}$ in  \eqref{eq_i2}) are known when solving $\mathcal{P}2.1_{m}$ for edge server $m$. 
Therefore, the set of model requests served by edge server $m$ and those served by the first $m-1$ edge servers are disjoint, leading to the second equality in \eqref{eq_u_m}. 

\section{Proof of Proposition \ref{Successive_greedy_optimal}}\label{proof_Successive_greedy_optimal}
    \rev{Let $\tilde{\bf{X}}_m$ denote the set of placement decisions for edge server $m$ made by ${\bf{X}}_{m}^*$ but not by $\hat{{\bf{X}}}_{m}$, i.e., $\tilde{x}_{m,i}\in\ {\tilde{\bf{X}}}_{m}$ satisfies 
    \begin{equation}
        \tilde{x}_{m,i}=
        \begin{cases}
            1,\ \text{if }x^{*}_{m,i}=1\text{ and }\hat{x}_{m,i}=0,\\
            0, \ \text{otherwise,}
        \end{cases}
    \end{equation}
    where ${\bf{X}}_{m}^*$ is the component of the optimal solution ${\bf{X}}^*$ associated with edge server $m$, with ${\bf{X}}^*=\bigcup\limits_{m\in\mathcal{M}}{\bf{X}}_{m}^*$, $x^{*}_{m,i}\in {\bf{X}}_{m}^*$, and $\hat{x}_{m,i}\in\hat{{\bf{X}}}_{m}$. Moreover, based on the definition in \eqref{eq_hat_u}, $\hat{U}_m\left(\tilde{{\bf{X}}}_m\right)$ is given in \eqref{eq_u_m_tilde_x} at the bottom of this page, which is used as shorthand for $\hat{U}_m\left(\tilde{{\bf{X}}}_m\middle | \bigcup\limits_{m'=1}^{m-1}\hat{{\bf{X}}}_{m'}\right)$. Note that ${\tilde{\bf{X}}}_{m}$ and $\hat{U}_m\left(\tilde{{\bf{X}}}_m\right)$ are introduced solely as auxiliary variables to analyze the theoretical performance gap between $\hat{\bf{X}}$ and ${\bf{X}}^{*}$ after all $\hat{\bf{X}}_{m}$ have been obtained.

    \begin{figure*}[b]
    \rev{
    \begin{equation}\label{eq_u_m_tilde_x}
        \begin{aligned}
        \hat{U}_m\left({\tilde{\bf{X}}}_m\right)
        &=\frac{\sum\limits_{k\in\mathcal{K}}\sum\limits_{i\in\mathcal{I}}{p}_{k,i}\tilde{x}_{m,i}\mathbb{I}_{\left\{T_{m,k,i}\le\bar{T}_{k,i}\right\}}\prod\limits_{m'=1}^{m-1}\left(1-\hat{x}_{m',i} \mathbb{I}_{\left\{T_{m',k,i}\le\bar{T}_{k,i}\right\}}\right)}{\sum\limits_{k\in\mathcal{K}}\sum\limits_{i\in\mathcal{I}}{p}_{k,i}}\\
        &=\frac{\sum\limits_{k\in\mathcal{K}}\sum\limits_{i\in\mathcal{I}}{p}_{k,i}\tilde{x}_{m,i}{\mathbb{I}}_{1}\left(m,k,i\right)\prod\limits_{m'=1}^{m-1}\left(1-\hat{x}_{m',i} {\mathbb{I}}_{1}\left(m',k,i\right)\right)}{\sum\limits_{k\in\mathcal{K}}\sum\limits_{i\in\mathcal{I}}{p}_{k,i}}.
        \end{aligned}
    \end{equation}
    }
    \end{figure*}
    
    First, we establish that $\sum\limits_{m\in\mathcal{M}}\hat{U}_m\left(\tilde{{\bf{X}}}_m\right)$ is upper bounded by $U\left(\hat{{\bf{X}}}\right)$, following the approach in~\cite{dawande2000approximation, chekuri2005polynomial}. 
    Let $\mathcal{I}_{{\tilde{\bf{X}}}_{m}}=\left\{i \ \middle| \ \tilde{x}_{m,i}=1,\tilde{x}_{m,i}\in{\tilde{\bf{X}}}_{m}\right\}$ denote the set of cached models corresponding to $\tilde{{\bf{X}}}_m$, which is a subset of $\mathcal{I}$. Since the TrimCaching Spec algorithm obtains ${\hat{\bf{X}}_m}$ by maximizing $\hat{U}_m\left(\hat{{\bf{X}}}_m\right)$ over all models in $\mathcal{I}$ when solving $\mathcal{P}2.1_{m}$, and ${\hat{\bf{X}}_m}$ is the optimal solution to $\mathcal{P}2.1_{m}$, it follows that
    \begin{equation}
        \hat{U}_m\left({\tilde{\bf{X}}}_m\right)\le \hat{U}_m\left(\hat{{\bf{X}}}_{m}\right), \ \forall m\in\mathcal{M}.
    \end{equation}
    Furthermore, summing over all $m$ and invoking Proposition~\ref{proposition_eq_u_m}, we obtain
    \begin{equation}\label{eq_x_tilde_x_hat}
        \sum\limits_{m\in\mathcal{M}}\hat{U}_m\left({\tilde{\bf{X}}}_m\right)\le \sum\limits_{m\in\mathcal{M}}\hat{U}_m\left(\hat{{\bf{X}}}_m\right)=U\left(\hat{{\bf{X}}}\right).
    \end{equation} 

    Second, we show that
    \begin{equation}\label{eq_x_optimal_x_hat_x_tilde}
        U\left({\bf{X}}^{*}\right)\le U\left(\hat{{\bf{X}}}\right)+\sum\limits_{m\in\mathcal{M}}\hat{U}_m\left({\tilde{\bf{X}}}_m\right).
    \end{equation}
    Based on \eqref{eq_u_X}, for each model request $r_{k,i}$ of user $k$ for model $i$, the cache hit indicator under $\bf{X}$ is $1-\prod\limits_{m\in\mathcal{M}}\left(1-x_{m,i}{\mathbb{I}}_{1}\left(m,k,i\right)\right)$, which equals 1 if $r_{k,i}$ is served by $\bf{X}$.
    
    To prove \eqref{eq_x_optimal_x_hat_x_tilde}, we begin by claiming that for any $r_{k,i}$, the following inequality holds. 
    \begin{equation}\label{eq_indicator}
        \begin{aligned}
            &1-\prod\limits_{m\in\mathcal{M}}\left(1-x_{m,i}^{*}{\mathbb{I}}_{1}\left(m,k,i\right)\right)\\
            &\le 1-\prod\limits_{m\in\mathcal{M}}\left(1-\hat{x}_{m,i}\mathbb{I}_{1}\left(m,k,i\right)\right)\\
            &+\sum\limits_{m\in\mathcal{M}}\tilde{x}_{m,i}\mathbb{I}_{1}\left(m,k,i\right)\prod\limits_{m'=1}^{m-1}\left(1-\hat{x}_{m',i}{\mathbb{I}}_{1}\left(m',k,i\right)\right).
        \end{aligned}
    \end{equation}
    We verify this inequality by considering two cases.
    
    (i) If $r_{k,i}$ is not served by $\bf{X}^{*}$, then $1-\prod\limits_{m\in\mathcal{M}}\left(1-x_{m,i}^{*}{\mathbb{I}}_{1}\left(m,k,i\right)\right)=0$. Since the right-hand side of \eqref{eq_indicator} is nonnegative, \eqref{eq_indicator} holds in this case. 
    
    (ii) If $r_{k,i}$ is served by $\bf{X}^{*}$, then $1-\prod\limits_{m\in\mathcal{M}}\left(1-x_{m,i}^{*}{\mathbb{I}}_{1}\left(m,k,i\right)\right)$ on the left-hand side of \eqref{eq_indicator} equals 1. We further consider two sub-cases. 
    (ii-a) If $r_{k,i}$ is also served by $\hat{\bf{X}}$, then $1-\prod\limits_{m\in\mathcal{M}}\left(1-\hat{x}_{m,i}{\mathbb{I}}_{1}\left(m,k,i\right)\right)=1$. Therefore, $1-\prod\limits_{m\in\mathcal{M}}\left(1-x_{m,i}^{*}{\mathbb{I}}_{1}\left(m,k,i\right)\right)=1-\prod\limits_{m\in\mathcal{M}}\left(1-\hat{x}_{m,i}{\mathbb{I}}_{1}\left(m,k,i\right)\right)$, and \eqref{eq_indicator} holds in this sub-case. 
    (ii-b) If $r_{k,i}$ is not served by $\hat{\bf{X}}$, then $1-\prod\limits_{m\in\mathcal{M}}\left(1-\hat{x}_{m,i}{\mathbb{I}}_{1}\left(m,k,i\right)\right)=0$, implying $\prod\limits_{m'=1}^{m-1}\left(1-\hat{x}_{m',i}{\mathbb{I}}_{1}\left(m',k,i\right)\right)=1$ for all $m \in\mathcal{M}$. Since $r_{k,i}$ is served by ${\bf{X}}^{*}$ but not by $\hat{\bf{X}}$, $r_{k,i}$ is served by $\bigcup\limits_{m\in\mathcal{M}}{\tilde{\bf{X}}}_m$, which implies $1-\prod\limits_{m\in\mathcal{M}}\left(1-\tilde{x}_{m,i}{\mathbb{I}}_{1}\left(m,k,i\right)\right)=1$. Therefore, there exists at least one $m\in\mathcal{M}$ such that $\tilde{x}_{m,i}\mathbb{I}_{1}\left(m,k,i\right)=1$, and consequently $\sum\limits_{m\in\mathcal{M}}\tilde{x}_{m,i}\mathbb{I}_{1}\left(m,k,i\right)\ge 1$. Combining this with the fact that, in this sub-case, $\prod\limits_{m'=1}^{m-1}\left(1-\hat{x}_{m',i}{\mathbb{I}}_{1}\left(m',k,i\right)\right)=1$ for all $m \in\mathcal{M}$, we obtain $ \sum\limits_{m\in\mathcal{M}}\tilde{x}_{m,i}\mathbb{I}_{1}\left(m,k,i\right)\prod\limits_{m'=1}^{m-1}\left(1-\hat{x}_{m',i}{\mathbb{I}}_{1}\left(m',k,i\right)\right)\ge 1$. Since $1-\prod\limits_{m\in\mathcal{M}}\left(1-x_{m,i}^{*}{\mathbb{I}}_{1}\left(m,k,i\right)\right)=1$ in this sub-case, \eqref{eq_indicator} holds in this sub-case. Therefore, \eqref{eq_indicator} holds for all $r_{k,i}$.
    
    Multiplying both sides of \eqref{eq_indicator} by $p_{k,i}$, summing over all $r_{k,i}$, and normalizing by $\sum\limits_{k\in\mathcal{K}}\sum\limits_{i\in\mathcal{I}}p_{k,i}$, we obtain \eqref{eq_indicator_p} and \eqref{eq_indicator_u}, which are given at the bottom of the next page. Since the second term on the right-hand side of \eqref{eq_indicator_u} equals $\sum\limits_{m\in\mathcal{M}}\hat{U}_m\left({\tilde{\bf{X}}}_m\right)$, \eqref{eq_x_optimal_x_hat_x_tilde} holds.
    \begin{figure*}[!b]
        \rev{
        \begin{equation}\label{eq_indicator_p}
            \begin{aligned}
                \frac{\sum\limits_{k\in\mathcal{K}}\sum\limits_{i\in\mathcal{I}}p_{k,i}\left[1-\prod\limits_{m\in\mathcal{M}}\left(1-x_{m,i}^{*}{\mathbb{I}}_{1}\left(m,k,i\right)\right)\right]}{\sum\limits_{k\in\mathcal{K}}\sum\limits_{i\in\mathcal{I}}p_{k,i}}
                &\le \frac{\sum\limits_{k\in\mathcal{K}}\sum\limits_{i\in\mathcal{I}}p_{k,i}\left[1-\prod\limits_{m\in\mathcal{M}}\left(1-\hat{x}_{m,i}{\mathbb{I}}_{1}\left(m,k,i\right)\right)\right]}{\sum\limits_{k\in\mathcal{K}}\sum\limits_{i\in\mathcal{I}}p_{k,i}}\\
                &+\frac{\sum\limits_{k\in\mathcal{K}}\sum\limits_{i\in\mathcal{I}}p_{k,i}\left[\sum\limits_{m\in\mathcal{M}}\tilde{x}_{m,i}\mathbb{I}_{1}\left(m,k,i\right)\prod\limits_{m'=1}^{m-1}\left(1-\hat{x}_{m',i}{\mathbb{I}}_{1}\left(m',k,i\right)\right)\right]}{\sum\limits_{k\in\mathcal{K}}\sum\limits_{i\in\mathcal{I}}p_{k,i}}.
            \end{aligned}
        \end{equation}
        \begin{equation}\label{eq_indicator_u}
            U\left({\bf{X}}^{*}\right)
        \le U\left(\hat{{\bf{X}}}\right)
        +\sum\limits_{m\in\mathcal{M}}\frac{\sum\limits_{k\in\mathcal{K}}\sum\limits_{i\in\mathcal{I}}p_{k,i}\tilde{x}_{m,i}\mathbb{I}_{1}\left(m,k,i\right)\prod\limits_{m'=1}^{m-1}\left(1-\hat{x}_{m',i}{\mathbb{I}}_{1}\left(m',k,i\right)\right)}{\sum\limits_{k\in\mathcal{K}}\sum\limits_{i\in\mathcal{I}}p_{k,i}}.
        \end{equation}
        }
    \end{figure*}

    Finally, substituting \eqref{eq_x_tilde_x_hat} into \eqref{eq_x_optimal_x_hat_x_tilde} yields
    \begin{equation}
        U\left({\bf{X}}^*\right)\le U\left(\hat{{\bf{X}}}\right)+\sum\limits_{m\in\mathcal{M}}\hat{U}_m\left({\tilde{\bf{X}}}_m\right)\le 2U\left(\hat{{\bf{X}}}\right).
    \end{equation}
    Therefore,
    \begin{equation}
        U\left(\hat{{\bf{X}}}\right)\ge \frac{1}{2}U\left({\bf{X}}^*\right).
    \end{equation}
    This completes the proof. }
\section{Proof of Proposition \ref{DP_epsilon}}\label{proof_DP_epsilon}
Let $\hat{{\bf{X}}}^*_m$ be the optimal solution to $\mathcal{P}2.1_{m}$ without rounding, and the corresponding cache hit ratio of $\mathcal{P}2.1_{m}$ is given by 
\begin{equation}\label{eq_proof_5_0}
    \hat{U}_{m}\left(\hat{{\bf{X}}}^*_m\right)=\frac{\sum\limits_{i\in\hat{\mathcal{I}}^*_{m}}u\left(m,i\right)}{\sum\limits_{k\in\mathcal{K}}\sum\limits_{i\in\mathcal{I}}{p}_{k,i}},
\end{equation}
where $\hat{\mathcal{I}}^*_{m}=\left\{i\ \middle| \ \hat{x}^{*}_{m,i}=1, \hat{x}^{*}_{m,i}\in\hat{{\bf{X}}}^*_{m}\right\}$.
    
When $\epsilon>0$, with \eqref{eq_round_u} and \eqref{eq_rounded_u_star}, we can derive that
    \begin{equation}\label{eq_proof_5_1}
        \begin{aligned}
            \hat{U}_{m}\left(\hat{{\bf{X}}}_m\right)
            &=\frac{\sum\limits_{i\in\hat{\mathcal{I}}_{m}}u\left(m,i\right)}{\sum\limits_{k\in\mathcal{K}}\sum\limits_{i\in\mathcal{I}}{p}_{k,i}}\\
            &\ge\frac{\sum\limits_{i\in\hat{\mathcal{I}}_{m}}\epsilon u_{m,\min}\lfloor\frac{u\left(m,i\right)}{\epsilon u_{m,\min}}\rfloor}{\sum\limits_{k\in\mathcal{K}}\sum\limits_{i\in\mathcal{I}}{p}_{k,i}} 
    = \frac{\sum\limits_{i\in\hat{\mathcal{I}}_{m}}\epsilon u_{m,\min}\dot{u}\left(m,i\right)}{\sum\limits_{k\in\mathcal{K}}\sum\limits_{i\in\mathcal{I}}{p}_{k,i}}.
        \end{aligned}
    \end{equation}
    Due to the optimality of the DP approach, $\hat{{\bf{X}}}_m$ is the optimal solution to $\mathcal{P}2.1_{m}$ when it is solved using $\dot{u}\left(m,i\right)$. Therefore, $\dot{w}_{\mathcal{N}^{*}}\dot{\delta}_{m,\mathcal{N}^{*}}=\sum\limits_{i\in\hat{\mathcal{I}}_{m}}\dot{u}\left(m,i\right)$ achieved by $\hat{{\bf{X}}}_m$ is no less than that achieved by $\hat{{\bf{X}}}^*_m$, i.e.,
    \begin{equation}\label{eq_proof_5_2}
        \sum\limits_{i\in\hat{\mathcal{I}}_{m}}\dot{u}\left(m,i\right)\ge\sum\limits_{i\in\hat{\mathcal{I}}_{m}^*}\dot{u}\left(m,i\right).
    \end{equation}
    Based on \eqref{eq_round_u} and \eqref{eq_proof_5_2}, we have
    \begin{equation}\label{eq_proof_5_6}
        \begin{aligned}
            \frac{\sum\limits_{i\in\hat{\mathcal{I}}_{m}}\epsilon u_{m,\min}\dot{u}\left(m,i\right)}{\sum\limits_{k\in\mathcal{K}}\sum\limits_{i\in\mathcal{I}}{p}_{k,i}}
 	&\ge\frac{\sum\limits_{i\in\hat{\mathcal{I}}^*_{m}}\epsilon u_{m,\min}\dot{u}\left(m,i\right)}{\sum\limits_{k\in\mathcal{K}}\sum\limits_{i\in\mathcal{I}}{p}_{k,i}}\\
 	&\ge\frac{\sum\limits_{i\in\hat{\mathcal{I}}^*_{m}}\epsilon u_{m,\min}\left(\frac{u\left(m,i\right)}{\epsilon u_{m,\min}}-1\right)}{\sum\limits_{k\in\mathcal{K}}\sum\limits_{i\in\mathcal{I}}{p}_{k,i}}
        \end{aligned}
    \end{equation}
    Furthermore, with \eqref{eq_proof_5_0}, we can derive that 
    \begin{equation}\label{eq_proof_5_3}
    \begin{aligned}
        &\frac{\sum\limits_{i\in\hat{\mathcal{I}}^*_{m}}\epsilon u_{m,\min}\left(\frac{u\left(m,i\right)}{\epsilon u_{m,\min}}-1\right)}{\sum\limits_{k\in\mathcal{K}}\sum\limits_{i\in\mathcal{I}}{p}_{k,i}}
        =\hat{U}_{m}\left(\hat{{\bf{X}}}^*_m\right) - \frac{\sum\limits_{i\in\hat{\mathcal{I}}^*_{m}}\epsilon u_{m,\min}}{\sum\limits_{k\in\mathcal{K}}\sum\limits_{i\in\mathcal{I}}{p}_{k,i}}\\
 	&=\hat{U}_{m}\left(\hat{{\bf{X}}}^*_m\right)-\frac{\left|\hat{\mathcal{I}}^*_{m}\right|\epsilon u_{m,\min}}{\sum\limits_{k\in\mathcal{K}}\sum\limits_{i\in\mathcal{I}}{p}_{k,i}}.
    \end{aligned}
    \end{equation}
Combining \eqref{eq_proof_5_1}, \eqref{eq_proof_5_6}, and \eqref{eq_proof_5_3}, we have 
\begin{equation}
    \hat{U}_{m}\left(\hat{{\bf{X}}}_m\right)\ge \hat{U}_{m}\left(\hat{{\bf{X}}}^*_m\right)-\frac{\left|\hat{\mathcal{I}}^*_{m}\right|\epsilon u_{m,\min}}{\sum\limits_{k\in\mathcal{K}}\sum\limits_{i\in\mathcal{I}}{p}_{k,i}},
\end{equation}
leading to 
\begin{equation}\label{eq_proof_5_4}
    \frac{\hat{U}_{m}\left(\hat{{\bf{X}}}^*_m\right)-\hat{U}_{m}\left(\hat{{\bf{X}}}_m\right)}{\hat{U}_{m}\left(\hat{{\bf{X}}}^*_m\right)}\le\frac{\left|\hat{\mathcal{I}}^*_{m}\right|\epsilon u_{m,\min}}{\hat{U}_{m}\left(\hat{{\bf{X}}}^*_m\right){\sum\limits_{k\in\mathcal{K}}\sum\limits_{i\in\mathcal{I}}{p}_{k,i}}}.
\end{equation}
Since the number of cache hits of any model in $\hat{\mathcal{I}}^*_{m}$ is no less than $ u_{m,\min}$, we have
 \begin{equation}
     \frac{\left|\hat{\mathcal{I}}^*_{m}\right|u_{m,\min}}{\sum\limits_{k\in\mathcal{K}}\sum\limits_{i\in\mathcal{I}}{p}_{k,i}}\le \hat{U}_{m}\left(\hat{{\bf{X}}}^*_m\right),
 \end{equation}
 which implies 
 \begin{equation}\label{eq_proof_5_5}
     \frac{\left|\hat{\mathcal{I}}^*_{m}\right|\epsilon u_{m,\min}}{\hat{U}_{m}\left(\hat{{\bf{X}}}^*_m\right){\sum\limits_{k\in\mathcal{K}}\sum\limits_{i\in\mathcal{I}}{p}_{k,i}}}\le\epsilon.
 \end{equation}
 Therefore, substituting \eqref{eq_proof_5_5} into \eqref{eq_proof_5_4}, we can conclude that 
    \begin{equation}
        \frac{\hat{U}_{m}\left(\hat{{\bf{X}}}^*_m\right)-\hat{U}_{m}\left(\hat{{\bf{X}}}_m\right)}{\hat{U}_{m}\left(\hat{{\bf{X}}}^*_m\right)}\le\epsilon,
    \end{equation}
which yields
\begin{equation}
    \hat{U}_{m}\left(\hat{{\bf{X}}}_m\right)\ge\left(1-\epsilon\right)\hat{U}_{m}\left(\hat{{\bf{X}}}^*_m\right).
\end{equation}

When $\epsilon=0$, we have $\dot{u}\left(m,i\right)=u\left(m,i\right)$, according to \eqref{eq_round_u}. In this case, Algorithm \ref{algorithm_DP} obtains the optimal solution to $\mathcal{P}2.1_m$ as it traverses all possible combinations of shared parameter blocks and leverages the optimality of the DP approach. Therefore, it follows that $\hat{U}_{m}\left(\hat{{\bf{X}}}_m\right)=\hat{U}_{m}\left(\hat{{\bf{X}}}^*_m\right)$, completing the proof.

\section{Proof of Theorem \ref{theorem_spec_time}}\label{proof_theorem_spec_time}
First, the time complexity of Line \ref{line:dp_u} in Algorithm \ref{algorithm_DP} is $O\left(I\right)$. Second, Algorithm \ref{algorithm_DP} requires at most $2^{\left|\mathcal{J}^{\text{sh}}\right|}$ iterations to traverse all combinations of shared parameter blocks. Third, in each iteration of Algorithm \ref{algorithm_DP}, the complexity of updating $\mathcal{T}\left(e_{\mathcal{N}},\dot{w}_{\mathcal{N}}\right)$ is $O\left(I\frac{{p}}{\delta_{\min}}\right)$, where ${p} = \sum\limits_{k\in\mathcal{K}}\sum\limits_{i\in\mathcal{I}}{p}_{k,i}$ and $\delta_{\min}$ is the granularity of $p_{k,i}$. Moreover, the time complexity of Lines \ref{line:dp_d}, \ref{line:dp_delta}, \ref{line:dp_D}, \ref{line:dp_w}, and \ref{line:dp_U} is $O\left(1\right)$, $O\left(1\right)$, $O\left(I\right)$, $O\left(\frac{{p}}{\delta_{\min}}\right)$, and $O\left(1\right)$, respectively. Finally, the complexity of Algorithm \ref{algorithm_recursive}, which is invoked in Line \ref{line:dp_x} of Algorithm \ref{algorithm_DP}, is $O\left(I\right)$. Therefore, the time complexity of Algorithm \ref{algorithm_DP} is $O\left(I+2^{\left|\mathcal{J}^{\text{sh}}\right|}\left(I\frac{{p}}{\delta_{\min}}+I\right)\right)$. Since Algorithm \ref{algorithm_successive_greedy} executes Algorithm~\ref{algorithm_DP} for each of the $M$ edge servers, the time complexity of the TrimCaching Spec algorithm is $O\left(MI+M2^{\left|\mathcal{J}^{\text{sh}}\right|}\left(I\frac{{p}}{\delta_{\min}}+I\right)\right)=O\left(MI+2^{\left|\mathcal{J}^{\text{sh}}\right|}\left(\frac{{p}}{\delta_{\min}}+1\right)MI\right)=O\left(2^{\left|\mathcal{J}^{\text{sh}}\right|}MI\right)$. Under Assumption \ref{assumption_1}, $\left|\mathcal{J}^{\text{sh}}\right|$ is small and fixed, which is a constant independent of the problem scale. Therefore, the time complexity of the TrimCaching Spec algorithm is $O\left(MI\right)$. This completes the proof.

\section{Proof of Corollary \ref{corollary_1}}\label{proof_corollary_1}
Given that shared layers originate from a single pre-trained model and that fine-tuning is performed using the bottom-layer freezing technique, caching any shared layer on an edge server requires that all its preceding layers are also cached. This inherent relationship significantly narrows down the range of layer combinations that need to be traversed. Based on the proof of Theorem \ref{theorem_spec_time}, the time complexity of the TrimCaching Spec algorithm can be reduced from $O\left(2^{\left|\mathcal{J}^{\text{sh}}\right|}MI\right)$ to $O\left(\left(\kappa+1\right)MI\right)$, completing the proof. 

\section{Proof of Theorem \ref{theorem_spec_final}}\label{proof_theorem_spec_final}
First, based on Proposition \ref{proposition_eq_u_m} and Proposition \ref{DP_epsilon}, we can derive that 
\begin{equation}\label{eq_proof_2_1}
    \begin{aligned}
        U\left(\hat{{\bf{X}}}\right)=\sum\limits_{m\in\mathcal{M}}\hat{U}_m\left(\hat{{\bf{X}}}_m\right)
        &\ge\sum\limits_{m\in\mathcal{M}}\left(1-\epsilon\right)\hat{U}_m\left(\hat{{\bf{X}}}^*_m\right)\\
        &=\left(1-\epsilon\right)U\left(\bigcup\limits_{m\in\mathcal{M}}\hat{{\bf{X}}}_m^*\right).
    \end{aligned}
\end{equation}
Next, from Proposition \ref{Successive_greedy_optimal}, it follows that 
\begin{equation}\label{eq_proof_2_2}
    U\left(\bigcup\limits_{m\in\mathcal{M}}\hat{{\bf{X}}}_m^*\right)\ge\frac{1}{2}U\left({\bf{X}}^*\right).
\end{equation}
Finally, combining \eqref{eq_proof_2_1} and \eqref{eq_proof_2_2}, we have
\begin{equation}
    U\left(\hat{{\bf{X}}}\right)\ge\frac{1-\epsilon}{2}U\left({\bf{X}}^*\right),
\end{equation}
which completes the proof. 
\end{appendices}

\end{document}